\DeclareSymbolFont{largesymbols}{OMX}{yhex}{m}{n}
\DeclareMathAccent{\reallywidetilde}{\mathord}{largesymbols}{"65}
\DeclareMathAccent{\reallywidehat}{\mathord}{largesymbols}{"62}
\newcommand{\noun}[1]{\textsc{#1}}
 \theoremstyle{plain}
\newtheorem{thm}{\protect\theoremname}[section] \theoremstyle{plain}
\newtheorem{lem}[thm]{\protect\lemmaname} \ifx\proof\undefined
\newenvironment{proof}[1][\protect\proofname]{%
\par\normalfont\topsep6\p@\@plus6\p@\relax\trivlist\itemindent\parindent
\item[\hskip\labelsep\scshape#1]\ignorespaces}{}
\providecommand{\proofname}{Proof} \fi \theoremstyle{plain}
\newtheorem{cor}[thm]{\protect\corollaryname}
\theoremstyle{plain}
\newtheorem{defn}[thm]{\protect\definitionname}
\theoremstyle{plain}
\newtheorem{rem}[thm]{\protect\remarkname}
\newtheorem{cond}[thm]{\protect\conditionname}
\theoremstyle{plain}
\providecommand{\definitionname}{Definition}
\providecommand{\conditionname}{Condition}
\providecommand{\remarkname}{Remark}
\providecommand{\corollaryname}{Corollary}
\providecommand{\lemmaname}{Lemma}
\providecommand{\theoremname}{Theorem}
\newcommand{\db}{}%{\color{red}}
\newcommand{\ed}{}%{\color{black}}
\begin{document}

\title{Dynamics of Sound Waves in an Interacting Bose Gas}

\author{D.-A. Deckert, J. Fröhlich, P. Pickl, A. Pizzo}

\maketitle\textbf{}

\begin{abstract} 
    We consider a non-relativistic quantum gas of $N$ bosonic atoms confined to
    a box of volume $\Lambda$ in physical space.  The atoms interact with each
    other through a pair potential whose strength is inversely proportional to
    the density, $\rho=\frac{N}{\Lambda}$, of the gas.  We study the time
    evolution of coherent excitations above the ground state of the gas in a
    regime of large volume $\Lambda$ and small ratio $\frac{\Lambda}{\rho}$. The
    initial state of the gas is assumed to be close to a \textit{product state}
    of one-particle wave functions that are approximately constant throughout
    the box. The initial one-particle wave function of an excitation is assumed
    to have a compact support independent of $\Lambda$.  We derive an effective
    non-linear equation for the time evolution of the one-particle wave function
    of an excitation and establish an explicit error bound tracking the accuracy
    of the effective non-linear dynamics in terms of the ratio
    $\frac{\Lambda}{\rho}$. We conclude with a discussion of the dispersion law
    of low-energy excitations, recovering Bogolyubov's well-known formula for
    the speed of sound in the gas, and a dynamical instability for attractive
    two-body potentials.  
\end{abstract}

\tableofcontents{}

\section{Introduction}

In the study of the intricate dynamics of many-body systems, it is often
convenient, or actually unavoidable, to resort to simpler approximate
descriptions. For quantum-mechanical many-body systems of bosons it is possible
to use effective one-particle equations to track the microscopic evolution of
many-particle states in appropriate regimes. This tends to reduce the complexity
of the problem enormously.  Of course, one has to convince oneself that the
approximation introduced into the analysis is not too crude but resolves the
dynamical features of interest fairly accurately.  To mention an example, the
interaction potential exerted on a test particle in a non-linear one-particle
description of the effective dynamics of a Bose gas can be chosen
self-consistently as the mean potential generated by all the other particles at
the position of the test particle. The mathematical analysis of such so-called
\textit{mean-field limits} goes back to work by Hepp \cite{hepp1974} (quantum
many-body systems), and by Braun and Hepp \cite{braun1977} and Neunzert
\cite{neunzert1977} (classical many-body systems).  Among other results, they
have shown that the Vlasov equation effectively describes a classical many-body
system while the Hartree equation describes a Bose gas in the mean-field limit.
After Hepp's initial work \cite{hepp1974} there has been a lot of effort to
arrive at a mathematically rigorous understanding of quantum-mechanical
mean-field limits; regarding the dynamics see, e.g.,
\cite{spohn,Rodnianski2009,pickl2010,erdos2010,frohlich2009,knowles}, and regarding ground state see, e.g.,
\cite{seiringer,derezinski,lewin} and furthermore \cite{lieb} for an elaborate
overview. 

In oder to clarify the relation between our discussion and previous studies found in
the existing literature, it is necessary to first explain our conventions concerning units
of physical quantities and the use of dimensionless parameters:
\begin{rem}
    All physical quantities appearing in this paper are made dimensionless by
    expressing them in terms of (dimensionful) fundamental constants of Nature or of 
    constants characteristic of the system under consideration.  In this paper, we use units
    in which Planck's constant and the mass of a gas atom are equal to unity. Furthermore,
    distances are expressed as multiples of the diameter of the essential
    support ("range") of the two-body interaction potential, $U$, which equals
    $1$ in our units.  Consequently, to say that the volume $\Lambda$ of the
    region to which the gas is confined equals $1$ would mean that it is comparable
    to the volume of the support of the two-body potential $U$. Furthermore, to say that the density
    fulfills $\rho=1$ would mean that the expected number of particles inside the support of
    $U$ equals $1$.
\end{rem}

With these conventions the situation usually  considered in the
mathematical literature on mean-field limits can be described as follows: The support of 
wave functions is kept fixed while the scattering length of the two-body interaction scales
inversely proportional to the particle number $N$ as the mean-field limit, $N
\rightarrow \infty$, is approached.  In the study of many physically interesting
situations, e.g., of a Bose gas in the thermodynamic limit, one must,
however, consider regimes where $N$ \textit{and} $\Lambda$ tend to
$\infty$. The mean-field regime is then approached by taking the gas density
$\rho=\frac{N}{\Lambda}$ to be large and assuming that the strength of the
two-body potential is ${\cal{O}}(\rho^{-1})$; the mean-field limit 
corresponding to the limit $\rho \rightarrow \infty$.  This ensures that the interaction
energy per particle is of order one and, consequently, the velocity of sound is
kept constant.

A key open problem is to show that the many-body dynamics of a gas of bosonic
atoms can be controlled in terms of an effective equation for a one-particle
wave function when the thermodynamic limit, $\Lambda\to\infty$, is taken at
constant density $\rho$ before the mean-field regime of large $\rho$ is
approached.  While at the present time a satisfactory solution to this problem
appears to be out of reach we propose to make a modest contribution in this
direction by considering an interacting Bose gas at zero temperature in the
regime of large density $\rho$, allowing the volume $\Lambda$ to increase
depending on $\rho$, in such a way that $\frac{\Lambda}{\rho}\ll 1$ as the
mean-field limit is approached. 

More precisely, we propose to study the \emph{microscopic} time evolution of an initial
$N$-particle wave function that is, in a sense to be made precise later, close to a
product wave function of the form
\begin{align} \label{eq:iv-psi-introduction}
  \Psi_{0}(x_{1},x_{2},\ldots,x_{N}) =
\prod_{k=1}^{N}\frac{1}{\Lambda^{1/2}}\left(\phi_{0}^{(\mathrm{ref})}(x_{k})
  +\epsilon_{0}(x_{k})\right).
\end{align}                 
Here, N is the number of atoms in the gas, and $\phi_{0}^{(\mathrm{ref})}$
denotes a slowly varying, compactly supported one-particle wave function chosen
such that its support occupies roughly a region of volume $\Lambda$ and its $L^\infty$
norm is kept constant as $\Lambda$ varies. Its $N$-fold product represents a
so-called \emph{reference state} of the gas, a (Bose-Einstein)
\textit{condensate}, which is then perturbed by a smooth, compactly supported
wave function, $\epsilon_0$, that has a fixed scale-(or $\Lambda$-) independent
support inside the support of $\phi_{0}^{(\mathrm{ref})}$.  The function $\epsilon_0$ is
supposed to describe a localized \emph{excitation} of the reference state. The
time evolution of this initial state is given by the $N$-particle Schrödinger
equation
\begin{equation}
  i\partial_{t}\Psi_{t}(x_{1,\ldots,}x_{N})=H\Psi_{t}(x_{1,\ldots,}x_{N}),\label{eq:microscopic}
\end{equation} 
where the microscopic Hamiltonian, $H$, is given by
\begin{equation} H:=-\frac{1}{2}\sum_{k=1}^{N}\Delta_{x_{k}}
  +\frac{1}{\rho}\sum_{1\leq j<k\leq N}U(x_{j}-x_{k}).
  \label{eq:micro hamiltonian}
\end{equation}

In this work we show that the solution, $\Psi_t$ , corresponding
to equation
(\ref{eq:microscopic}) and initial value (\ref{eq:iv-psi-introduction}) \ed has interesting
features that can be studied with the help of effective one-particle equations
describing the evolution of the reference state $\phi^{(\mathrm{ref})}_t$ and
the excitation $\epsilon_t$; see equations
(\ref{eq:varphi-evolution})-(\ref{eq:epsilon-evolution}) below.  We find that,
in the time evolution of the reference wave function,
quantum-mechanical spreading of the wave packet is suppressed due to the
circumstance that $\phi_{0}^{(\mathrm{ref})}$ is flat.
As a consequence, to leading order, the time-evolved state,
$\phi_{t}^{(\mathrm{ref})}$, equals the initial state $\phi_{0}^{(\mathrm{ref})}$
 up to a time-dependent phase factor.  However, the dynamics of the excitation, i.e., the
behavior of the function $\epsilon_t$, is quite non-trivial.  In particular, its
$L^2$ norm is \textit{not} conserved because of exchange of gas particles between the
condensate (described by the reference state) and the coherent excitation.  Moreover, the
function $\epsilon_t$ disperses according to a law that incorporates a strictly
positive, finite speed of sound in the gas; meaning that sound waves (Goldstone modes)
with arbitrarily small wave number turn out to propagate at a strictly positive speed 
as expected of sound waves in an \textit{interacting} Bose gas, and which has
already be observed in experiments, e.g., \cite{ketterle}.

Excitations of the condensate might be caused by some heavy tracer particles
penetrating into the gas, as considered in \cite{deckert2012}, where the Bose
gas was taken to be an ideal gas. For simplicity we shall not include
such tracer particles in the analysis presented below but
study the dynamics of excitations of the condensate ground-state directly.  The key
analytical ideas used in the analysis of the mean-field limit presented in this
paper are inspired by those introduced in \cite{pickl2011}. They involve some
counting of the number of ``bad particles'', by which we mean particles that do not follow the
(one-particle) effective dynamics.  As compared to \cite{deckert2012}, the
problems addressed in the present work require considerably finer control of the number
of bad particles.  Indeed, since a typical excitation $\epsilon_t$ involves
${\cal O}(\rho)$ many particles, the number of bad particles in a state of
the gas must be controlled in terms of $\rho$ rather than of $N$.  For this
reason, the counting measures used in this work have to be considerably
fine-tuned in order to arrive at useful estimates.

Beside the analysis of dynamics, it should be noted that first steps in the
direction of large volume considering the excitation spectrum of a Bose gas have
also been undertaken in \cite{derezinski} which provides an extension of the
previous results
in \cite{seiringer}.

\paragraph{Outline:} After introducing some important notation in
Section~\ref{sec:notation} we describe our main results in
Section~\ref{sec:main-results} and present the proofs in
Section~\ref{sec:proofs}.\ed

\paragraph{Acknowledgments:} J.F. thanks T. Spencer for hospitality at the
School of Mathematics of the Institute for Advanced Study. J.F.'s stay at the
Institute of Advanced Study has been supported by ``The Fund for Math'' and
``The Robert and Luisa Fernholz Visiting Professorship Fund''. Furthermore,
D-.A.D. and P.P. would like to thank the Mathematical Institute of the LMU
Munich, the Department of Mathematics of UC Davis, and the Institute of
Theoretical Physics of the ETH Zurich for their hospitality.

\subsection{\label{sec:notation}Notation}
\begin{enumerate}
    \item $\left|{\cdot}\right|$ is the standard norm on $\mathbb{R}^d$ or $\mathbb{C}^{d}$, for arbitrary
        $d$; $\left\Vert{\cdot}\right\Vert _{p}$ is the norm on the
  Lebesgue space $L^{p}$, $0\leq p\leq\infty$.  For operators, $O$, acting on the 
  Hilbert space $L^{2}$ %or ${\cal H}$
  we denote by $\left\Vert
  O\right\Vert $ the operator norm of $O$.
%Furthermore, we shall also use the notation $\left\Vert M\right\Vert _{p}$
%for matrix valued functions with coefficients $M_{ij}$ defined by \[
%\left\Vert M\right\Vert _{p}:=\sum_{i,j}\left\Vert M_{ij}\right\Vert _{p}.
%\]

  \item Throughout this paper $\Lambda$ denotes both a cube
in physical space $\mathbb{R}^3$ and the volume of this cube. 

    \item For $r>0$ the ball of radius 
$r$ in $\mathbb R^3$ is denoted by ${\mathcal B}_r:=\left\{ x\in{\mathbb R}^3
\,\big|\, |x|<r\right\}$.

\item We denote the Laplace operator and the gradient in the 
  $x-$variable by $\Delta$ and $\nabla$, respectively.

\item The Fourier transform of a function $\eta\in L^{2}$ is denoted by
$\widehat{\eta}$.

\item The convolution of two functions $f$ and $g$ on $\mathbb{R}^{3}$ is defined by
$(f*g)(\cdot):=\int_{\mathbb{R}^3} dy\, f(\cdot-y)g(y)$.  \item By
  ``$F\in\mathrm{Bounds}$'' we mean that $F$ is a continuous, non-decreasing,
  non-negative function on the non-negative reals, i.e.,
  $F:\mathbb{R}^{+}_0\to\mathbb{R}^{+}_0$.
  \item Unless specified otherwise, the
    symbol $C$ denotes a universal constant whose value may change from
    one line to another. In particular, all constants are independent of
    $\Lambda$ and $\rho$. 
\end{enumerate}

\subsection{Main Results} \label{sec:main-results}

As announced in the introduction, the goal pursued in this paper is to understand features of the
time evolution of a many-body wave function, $\Psi_t$, for a given initial
product wave function of the form (\ref{eq:iv-psi-introduction}), which will be
characterized more precisely as follows:
\begin{cond}\label{def:initial-conditions} The many-body wave function of the initial state (at time $t=0$) is given by
  \begin{align}
    \label{eq:iv-psi-phi} \Psi_0(x_1,x_2,\ldots,x_N) = \prod_{k=1}^N
    \frac{1}{\Lambda^{1/2}} \varphi_0(x_k), \qquad \varphi_0:=
    \phi^{(\mathrm{ref})}_0 + \epsilon_0,
  \end{align} where
  $\phi_{0}^{(\mathrm{ref})},\epsilon_0 \in{\cal C}_{c}^{\infty}$
  have the following properties:
  \begin{align} 
      \operatorname{supp}\phi_{0}^{(\mathrm{ref})}\subseteq\Lambda, \qquad
      \Vert\phi_{0}^{(\mathrm{ref})}\Vert_\infty \leq \left\Vert
      \,\reallywidehat{\,|\phi_{0}^{(\mathrm{ref})}|} \, \right\Vert_{1} \leq C,
    %\qquad \left\Vert\phi_{0}^{(\mathrm{ref})}\right\Vert _{2}\leq
    %C\Lambda^{1/2},
    \label{eq:conditions-phi}
  \end{align}
  \begin{align}
    \operatorname{supp}\epsilon_{0}\subset{\cal B}_{1/4 \Lambda^{1/3}}, \qquad \Vert
    \epsilon_{0}\Vert _{\infty} \leq \left\Vert \, \widehat{|\epsilon_0|} \,
    \right\Vert_1 \leq C, \qquad \Vert \epsilon_0 \Vert_2 \leq C.
    \label{eq:conditions-epsilon}
  \end{align}
  \begin{align}
\label{eq:psi-varphi-L2-norm}
    \Vert \varphi_0
    \Vert_2 = \Lambda^{1/2}
    \qquad
    \Leftrightarrow
    \qquad
    \Vert \Psi_0
    \Vert_2 = 1.
  \end{align} Furthermore, we assume that the density of the
  gas condensate is essentially constant
  in some large
  region inside the container to which the gas is confined. Therefore, with the
  help of a family of cut-off functions $\chi_r \in {\cal
  C}^2(\mathbb R^3)$, $0< r< 1$,
  \begin{align} \chi_r(x)=
    \begin{cases} 0 & \text{ for }x \in \mathcal B_{r \Lambda^{1/3}} \\ 1 & \text{ for }x
      \notin \mathcal B_{\Lambda^{1/3}}
    \end{cases} \qquad \text{and} \qquad
    \Vert\nabla \chi_r\Vert_\infty\leq C \Lambda^{-1/3},  \label{eq:cutoff}
  \end{align}
  we require
  \begin{equation}\label{eq:phi-cutoff-constraint-epsilon-decay} 
  \left| \,
    \phi_0^{(\mathrm{ref})}(x)-1 \, \right|\leq \chi_{1/2}(x).
    % \qquad \text{and} \qquad \Vert \chi_\Lambda \epsilon_0 \Vert_2 \leq C\Lambda^{-\frac{1}{3}}.
  \end{equation}
    This will
  allow us to track the dynamics of the excitation with the properties
  (\ref{eq:conditions-epsilon}) in that region.
  Finally, we require some control of the kinetic energy of the initial reference
  wave function: 
  \begin{equation}
    \label{eq:nabla-phi-2}
      \Vert\nabla\phi^{(\mathrm{ref})}_0\Vert_\infty\leq
            C\Lambda^{-\frac{1}{3}}\;,\hspace{1cm}
      \Vert \nabla\phi^{(\mathrm{ref})}_0 \Vert_2\leq
            C\Lambda^{\frac{1}{6}}\;,\hspace{1cm}
      \Vert \Delta\phi^{(\mathrm{ref})}_0 \Vert_2 
      \leq
      C\Lambda^{-\frac{1}{6}}.
  \end{equation}\ed
\end{cond}

Without further reference we assume Condition~\ref{def:initial-conditions} and
$$U\in\mathcal C^\infty_c(\mathbb R^3,\mathbb R)$$
to hold throughout the entire paper. 
\\

In order to gain control on the dynamics of the many-body wave function $\Psi_t$, 
we show in a first step
that it can be described approximately as a product function of the solution, $\varphi_t$, of the
following nonlinear Schr\"odinger equation
\begin{equation}
  \label{eq:varphi-evolution} i\partial_{t}\varphi_{t}(x) =
  h_x[\varphi_t]\varphi_{t}(x), \qquad h_x[\varphi_t] :=
  -\frac{1}{2}\Delta+U*|\varphi_t|^{2}(x),
\end{equation} 
with initial value $\varphi_t|_{t=0}=\varphi_0$. The sense of the approximation involved in this claim will be made
clear in Section~\ref{sec:proofs}. As already mentioned in the introduction
there are two sources for the dynamics of $\varphi_t$: One is connected to
the evolution of the reference one-particle state $\phi^{(\mathrm{ref})}_t$, and a second one is connected 
to the evolution of the excitation, as described by $\epsilon_t$. In order to conveniently distinguish
between these two sources, the reference state $\phi^{(\mathrm{ref})}_0$
is time-evolved according to the equation
\begin{equation} i\partial_{t}\phi_{t}^{(\mathrm{ref})}(x) =
  \left( -\frac{1}{2}\Delta +U*|\phi_{t}^{(\mathrm{ref})}|^{2}(x)
  -\left\Vert U\right\Vert _{1} \right)
  \phi_{t}^{(\mathrm{ref})}(x)\label{eq:phi-evolution}\;,
\end{equation} and the excitation propagates as described by the equation
\begin{equation}
  \epsilon_{t}:=\varphi_{t}e^{i\left\Vert U\right\Vert
    _{1}t}-\phi_{t}^{(\mathrm{ref})} \label{eq:epsilon-def}.
  \end{equation}
Equations (\ref{eq:varphi-evolution}) and (\ref{eq:phi-evolution}) show that the
evolution of the excitation is given by
\begin{align}
  i\partial_{t}\epsilon_{t}(x) = & \left(-\frac{1}{2}\Delta
  +U*|\phi_{t}^{(\mathrm{ref})}|^{2}(x) -\left\Vert U\right\Vert _{1}
  +U*|\epsilon_{t}|^{2}(x)
  +U*2\Re\left(\epsilon_{t}^{*}\phi_{t}^{(\mathrm{ref})}\right)(x)
  \right)\epsilon_{t}(x) \label{eq:epsilon-evolution} \\ &  + \left(
  U*|\epsilon_{t}|^{2}(x)
  +U*2\Re\left(\epsilon_{t}^{*}\phi_{t}^{(\mathrm{ref})}\right)(x) \right)
  \phi_{t}^{(\mathrm{ref})}(x).  \nonumber
\end{align}
Note that, for a fixed point $x$ deep inside the region $\Lambda$, one has that
\[
  \left| U*|\phi_{t}^{(\mathrm{ref})}|^{2}(x) - \Vert U \Vert_1 \right|
  \approx 0,
\]
which motivates our choice of the phase on the right side of
(\ref{eq:epsilon-def}). Furthermore, in the limit of large $\Lambda$
the reference state $\phi_{t}^{(\mathrm{ref})}$ tends to $1$ so that equation
(\ref{eq:epsilon-evolution}) formally turns into
\begin{align*}
  i\partial_{t}\epsilon_{t}(x) = & \left(-\frac{1}{2}\Delta 
  +U*|\epsilon_{t}|^{2}(x)
  +U*2\Re\left(\epsilon_{t}^{*}\right)(x)
  \right)\epsilon_{t}(x)  + 
  \left( U*|\epsilon_{t}|^{2}(x)
  +U*2\Re\left(\epsilon_{t}^{*}\right)(x)\right).
\end{align*}

We recall the standard facts that, for repulsive $U$, i.e., $U\geq
0$, and given $\Psi_0$, $\varphi_0$, $\phi_{0}^{(\mathrm{ref})}$,
and $\epsilon_0$
as in 
Condition~\ref{def:initial-conditions}, 
there exist unique classical solutions
$\Psi_t$, $\varphi_t$, 
$\phi_{t}^{(\mathrm{ref})}$,
and $\epsilon_t$ to equations (\ref{eq:microscopic}), (\ref{eq:varphi-evolution}), (\ref{eq:phi-evolution}), and
(\ref{eq:epsilon-evolution}), $t\in\mathbb R$,
with initial data $\Psi_{t=0}=\Psi_0$, $\varphi_{t=0}=\varphi_0$, 
$\phi_{t=0}^{(\mathrm{ref})}=\phi_{0}^{(\mathrm{ref})}$, and
$\epsilon_{t=0}=\epsilon_0$, respectively.
In the case of attractive potentials $U$, however, the solution $\varphi_t$, and 
therefore also $\epsilon_t$, may blow up in finite time; see our discussion in
the last paragraph of this
section.\\

In a second step, we show that the control of the $N$-particle wave function
$\Psi_t$ as a function of time $t$ in terms of the one-particle function
$\varphi_t$ is so accurate that the excitation $\epsilon_t$ is ``silhouetted''
against all error terms. In order to compare the microscopic description of the
quantum dynamics with its mean-field description, one must check that the
reduced one-particle density matrix determined by the ``true'' many-body wave
function $\Psi_t$ matches the pure one-particle state given by the one-particle
wave function $\varphi_t$ that one determines by solving equation
(\ref{eq:varphi-evolution}). As discussed in the introduction, the reduced
density matrix of the microscopic (Schr\"{o}dinger) description, 
\[
  \operatorname{Tr}_{x_{2},\ldots,x_{N}}
  \left| {\Psi}_{t}\right\rangle
  \left\langle {\Psi}_{t}\right|,
\]
is given, to leading order, by the projection
$\left|\varphi_{t}\right\rangle\left\langle\varphi_{t}\right|$ onto the one-particle state $|\varphi_t\rangle$.
In order to subtract the contribution from the homogeneous condensate and only track the excitation, we project $\left|\varphi_{t}\right\rangle$ onto the
subspace orthogonal to the reference state. For this purpose we introduce the following notation.
\begin{defn}
  Given a vector $\eta\in L^{2}(\mathbb R^3,\mathbb C)$
  we define the orthogonal projectors
  \[
    p^{\eta}:= \frac{1}{\Vert \eta\Vert_2^2}
    \left|\eta\right\rangle \left\langle \eta\right|,
    \qquad
    q^\eta := 1-p^\eta,
    \qquad
    p_{t}^{(\mathrm{ref})}:=p^{\phi_{t}^{(\mathrm{ref})}},
    \qquad
    q_t^{(\mathrm{ref})}:=q^{\phi_{t}^{(\mathrm{ref})}}.
  \]
\end{defn}
In this notation, the quantities to be compared are the following density
matrices:
\[
    \rho_{t}^{(\mathrm{micro})}:=q_{t}^{(\mathrm{ref})}
    \operatorname{Tr}_{x_{2},\ldots,x_{N}}
    \left|\Lambda^{1/2}{\Psi}_{t}\right\rangle
    \left\langle \Lambda^{1/2}{\Psi}_{t}\right|q_{t}^{(\mathrm{ref})}
    \qquad\text{and}\qquad
    \rho_{t}^{(\mathrm{macro})}:=\left|\epsilon_{t}\right\rangle
    \left\langle \epsilon_{t}\right|.
\]
The additional factor of $\Lambda$ makes up for the
different scalings of $\Psi_t$ and $\epsilon_t$; see
Condition~\ref{def:initial-conditions}.  

Our first result is the following theorem.
\begin{thm}\label{thm:rho-diff-brutal}
  Let $U\in{\cal C}^\infty_c(\mathbb R^3,\mathbb R^+_0)$ be a
  repulsive potential.
  Then there exists a $C\in\operatorname{Bounds}$ such that
  \[
    \left\Vert \rho_t^{(\mathrm{micro})} - \rho_t^{(\mathrm{macro})} \right\Vert
    \leq
    C(t)\frac{\Lambda^{3/2}}{\rho^{1/2}},
  \]
  for all times $t\geq 0$ provided $\Lambda$ is sufficiently large.\ed
\end{thm}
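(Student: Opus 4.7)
The plan is to adapt the counting-measure method of Pickl~\cite{pickl2011}, used for a related setup in~\cite{deckert2012}, to the present scaling where $\|\varphi_t\|_2^2=\Lambda\to\infty$ jointly with the density $\rho\to\infty$. The central object to control is a weighted count of the expected fraction of particles not in the instantaneous one-body state $\varphi_t$, suitably rescaled to match the present normalization.

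\textbf{Counting functional.} Introduce the rescaled one-particle projector $p_t := \Lambda^{-1}|\varphi_t\rangle\langle\varphi_t|$ on $L^2(\mathbb{R}^3)$, denote by $p_{k,t}$ its lift to the $k$-th tensor factor, and set $q_{k,t} := 1 - p_{k,t}$. By Bose symmetry and the fact that $\Psi_0$ is an exact product (Condition~\ref{def:initial-conditions}),
\[
\alpha(t) := \langle \Psi_t, q_{1,t}\Psi_t\rangle = \frac{1}{N}\sum_{k=1}^N \langle \Psi_t, q_{k,t}\Psi_t\rangle, \qquad \alpha(0) = 0.
\]
The target is a bound $\alpha(t) \le C(t)\,\Lambda/\rho$, which, once converted into an operator-norm estimate, will produce the desired $C(t)\,\Lambda^{3/2}/\rho^{1/2}$.

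\textbf{Grönwall step.} Differentiating $\alpha$ and using~(\ref{eq:microscopic}) and~(\ref{eq:varphi-evolution}), the one-body parts of $H$ and $\sum_k h_{x_k}[\varphi_t]$ cancel (kinetic parts exactly, potential parts except at $k=1$), leaving only the pair interactions touching the first coordinate:
\[
\frac{d}{dt}\alpha(t) = i\Bigl\langle\Psi_t, \Bigl[\frac{1}{\rho}\sum_{k=2}^N U(x_1 - x_k) - U*|\varphi_t|^2(x_1),\, q_{1,t}\Bigr]\Psi_t\Bigr\rangle.
\]
Inserting $1 = p_{j,t} + q_{j,t}$ on the remaining factors, symmetrizing in the particle labels, and applying Cauchy--Schwarz, each contribution carrying at least two factors of $q$ can be absorbed into $C(t)\alpha(t)$, while the remaining ``unmatched'' terms are handled by combining $\|U\|_1$, propagated bounds on $\|\varphi_t\|_\infty$ and $\|\widehat{|\varphi_t|}\|_1$, and the kinetic-energy estimates~(\ref{eq:nabla-phi-2}). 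With sufficiently careful bookkeeping the resulting differential inequality reads
\[
\bigl|\tfrac{d}{dt}\alpha(t)\bigr| \le C(t)\bigl(\alpha(t) + \Lambda/\rho\bigr),
\]
and Grönwall together with $\alpha(0) = 0$ yields the target bound on $\alpha$.

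\textbf{Passage to operator norm and main obstacle.} The standard reduced-density-matrix inequality gives $\|\Lambda\,\mathrm{Tr}_{2,\ldots,N}|\Psi_t\rangle\langle\Psi_t| - |\varphi_t\rangle\langle\varphi_t|\| \le 2\Lambda\sqrt{\alpha(t)}$; sandwiching with $q_t^{(\mathrm{ref})}$ and using $q_t^{(\mathrm{ref})}(\varphi_t e^{i\|U\|_1 t}) = q_t^{(\mathrm{ref})}\epsilon_t$ together with $\|q_t^{(\mathrm{ref})}\epsilon_t - \epsilon_t\|_2 = \mathcal{O}(\Lambda^{-1/2})$ (a consequence of the compact support of $\epsilon_t$, of~(\ref{eq:phi-cutoff-constraint-epsilon-decay}), and of $\|\phi_t^{(\mathrm{ref})}\|_2^2 = \Lambda$) converts the $\alpha$-bound into the stated operator-norm estimate. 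I expect the Grönwall step to be by far the main obstacle: a typical excitation involves $\mathcal{O}(\rho)$ particles, so naively charging each interacting pair an error of size $\rho^{-1}\|U\|_\infty$ produces an accumulated error of order unity that swamps the $\Lambda/\rho$ target. Attaining the correct scaling forces one to fine-tune the weights in the counting measure to exploit the spatial localization of $\epsilon_t$ inside the much larger and essentially flat $\phi_t^{(\mathrm{ref})}$, in the spirit of the refinements alluded to at the end of the introduction.
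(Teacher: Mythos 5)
Your overall architecture (a Grönwall bound on a counting functional, then conversion to an operator-norm statement via the reduced density matrix and the projection $q_t^{(\mathrm{ref})}$) points in the right direction, but the central step is missing. The functional you actually define is the plain Pickl functional $\alpha(t)=\langle\Psi_t,q_{1,t}\Psi_t\rangle$, whose weight increments are $1/N$, and for it the claimed differential inequality $|\dot\alpha|\leq C(t)(\alpha+\Lambda/\rho)$ is not available in the present scaling. Concretely, after inserting $p\pm q$ the transition term of type $\frac{N}{\rho}\bigl|\langle p_1q_2\,Z(x_1,x_2)\,q_1q_2\rangle_t\bigr|$ can only be estimated by $\frac{N}{\rho}\,\|q_2\Psi_t\|_2\,\bigl\Vert p_1 Z\bigr\Vert\,\|q_1q_2\Psi_t\|_2\leq \frac{N}{\rho}\,\sqrt{\alpha}\,\frac{C(t)}{\Lambda^{1/2}}\,C\sqrt{\alpha}=C(t)\Lambda^{1/2}\alpha$, because with the weight $k/N$ one only has $\langle(\widehat{n^{\varphi_t}})^4\rangle\leq\langle(\widehat{n^{\varphi_t}})^2\rangle$ and no smallness of the weight increments beyond $1/N$. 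A Grönwall constant of size $\Lambda^{1/2}$ destroys the target bound. The paper's proof of Lemma~\ref{lem:m-bound} avoids exactly this by replacing $k/N$ with the truncated weight $m(k)$ of (\ref{eq:m-weight}), whose increments $m(k)-m(k-1)=1/\rho$ vanish for $k>\rho$; after the pull-through and square-root manipulations this restriction produces the extra factor $(\rho/N)^{1/2}=\Lambda^{-1/2}$ in (\ref{eq:term-III-2}), which exactly cancels the $\Lambda^{1/2}$ and yields an order-one Grönwall coefficient in (\ref{eq:term-III-final-bound}). You do acknowledge at the end that the weights must be "fine-tuned", but the inequality you assert is stated for the untuned $\alpha$, so the decisive estimate is neither proved nor correctly set up; this is a genuine gap, not bookkeeping.

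Two further points, secondary to the above. First, the paper does not pass from the counting bound to $\|\rho_t^{(\mathrm{micro})}-\rho_t^{(\mathrm{macro})}\|$ in one step: because the $m$-bound controls $\Lambda\langle q_1^{\varphi_t}\rangle$ only for the truncated vector $\widetilde\Psi_t$ (see (\ref{eq:q-and-m-relation})), Theorem~\ref{thm:rho-diff-brutal} is obtained by combining Theorem~\ref{thm:probM} and Theorem~\ref{thm:improved} through $\|\rho_t^{(\mathrm{micro})}-\widetilde\rho_t^{(\mathrm{micro})}\|\leq C\Lambda\|\Psi_t-\widetilde\Psi_t\|_2$; your direct conversion $2\Lambda\sqrt{\alpha(t)}$ is plausible but would still need the decomposition into $pp$, $pq$, $qq$ blocks carried out in the proof of Theorem~\ref{thm:improved}. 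Second, the inputs $\|\epsilon_t\|_2\leq C(t)$ and $\|p_t^{(\mathrm{ref})}\epsilon_t\|_2\leq C(t)\Lambda^{-1/2}$, i.e.\ (\ref{eq:epsilon-L2-bound}) and (\ref{eq:p-epsilon-L2-bound}), are not consequences of the time-zero support conditions alone: at positive times they require the propagation estimates of Section~\ref{sec:propagationestimates} (comparison with the torus dynamics, cutoff estimates such as (\ref{eq:m-epsilon-L2})), which your proposal takes for granted.
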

This theorem states that if the thermodynamic limit, $\Lambda \rightarrow
\infty$, and the mean-field limit, $\rho \rightarrow \infty$,
are approached in such a way that $\Lambda\ll\rho^{1/3}$, then the many-body
Schr\"{o}dinger dynamics is well approximated by the non-linear mean-field
dynamics of a one-particle wave function -- at least at the level of
one-particle density matrices. 

Obviously, a key open question is whether the thermodynamic limit can be taken
\textit{before} the mean-field limit is approached.  Concretely, one must ask
how one could possibly improve the rate of convergence established in
Theorem~\ref{thm:rho-diff-brutal}. 
The time evolution necessarily creates some ``bad'' particles,
viz., particles in states that do not follow the mean-field dynamics, throughout the region $\Lambda$
to which the gas is confined. This makes it plausible that, on the one hand, 
the number of bad particles grows with $\Lambda$, while, on the other
hand, it decreases as $\rho$ increases due to our choice of scaling.  Hence, 
when passing to large volumes $\Lambda$, for some fixed $\rho$, it seems 
hopeless to control the norm
\begin{align}
  \label{eq:rhos}
  \left\Vert \rho_t^{(\mathrm{micro})} - \rho_t^{(\mathrm{macro})}
  \right\Vert
\end{align}
directly. In particular, if the thermodynamic limit,
$\Lambda\to\infty$, were taken \textit{before} the mean-field limit,
$\rho\to\infty$, the time evolution would immediately create an infinite number of bad
particles, and (\ref{eq:rhos}) could not possibly be small. 

In this respect it is important to note that a control of
(\ref{eq:rhos}) in the thermodynamic limit is actually stronger than what is
needed when comparing theoretical predictions 
to data about the time evolution of excitations 
gathered in an experiment.
In order to gain access to regimes corresponding to very large
volumes $\Lambda$, one must therefore introduce an appropriate notion of
approximation by mean-field quantities weakening (\ref{eq:rhos}). One such possibility would be to
introduce a semi-norm involving the restrictions of the one-particle density matrices to a bounded 
region $\lambda\subset\Lambda$ of interest with
a volume of order ${\cal O}(1)$, e.g.,
\begin{align}\label{eq:cut}
    \left\Vert \mathbbm{1}_\lambda\left(\rho_t^{(\mathrm{micro})}
    - \rho_t^{(\mathrm{macro})}\right)\mathbbm{1}_\lambda
  \right\Vert,
\end{align}
where $\mathbbm{1}_\lambda(x)$ is some cut-off function with support in $\lambda$. For finite
times, an excitation of the gas created in some bounded region of space can be expected 
to essentially remain localized in a bounded region. Thus, 
control of (\ref{eq:cut}) may turn out to suffice to study its dynamics for a finite interval of times
and compare it to its effective (mean-field) dynamics. The technical control of
a quantity like (\ref{eq:cut}) is however cumbersome as one needs to control the
flow of particles from $\Lambda\setminus\lambda$ into the volume $\lambda$ without having much information
about them.

Another possibility in the direction of large volumes -- the one explored in this paper -- is to show that
(\ref{eq:rhos}) is \emph{typically} small, the precise
mathematical statement being: There is a trajectory of vectors $\widetilde\Psi_t$ with corresponding
reduced density matrix $\widetilde\rho^{(\mathrm{micro})}_t$ such that
$\|\widetilde\Psi_t-\Psi_t\|_2$ and $\|\tilde\rho_t^{(\mathrm{micro})} -
\rho_t^{(\mathrm{macro})}\|$ are both small. Such a result may actually be expected to enable 
one to answer most physical questions in a satisfactory way as only what happens with large probability really
matters for the comparison with an experiment.  Let us try to explain why this
mode of approximation is helpful: If the
volume $\Lambda$ of the region to which the gas is confined is large, the gas
contains a vast number of particles. Suppose that, with a tiny probability, the
positions of all these particles are changed. Such a change may yield a
significant variation of the reduced density matrices of the system.  However, events
that happen with a very small probability are not important physically.  Hence, the
fact that the reduced density matrices may change appreciably is unimportant.

With the next two results we explore this probabilistic idea and demonstrate how
the result in Theorem~\ref{thm:rho-diff-brutal} can be improved. The basis for this improvement forms 
the contents of our second
main result. To state it we make the notion of ``bad'' particles precise. We introduce orthogonal projectors
\begin{align}
    P^{\varphi_t}_{k}=\left(q^{\varphi_t}\right)^{\odot k}\odot
\label{eq:projector}
\left(p^{\varphi_t}\right)^{\odot(N-k)}, \qquad 0\leq k\leq N,
\end{align}
where $\odot$ denotes the symmetric tensor product.  The projector
$p^{\varphi_t}$ is to be thought of as projecting onto one-particle states of
``good'' particles, while $q^{\varphi_t}$ projects onto one-particle states of
``bad'' particles; see equation 
(\ref{eq:projectors}) below.  The probability, $\mathbb P_t$, of the event that the total number of bad
particles described by the many-body wave function  $\Psi_t$ is larger than the density $\rho$ is given
by
\begin{equation}
\label{eq:PtotM}
  \mathbb P_{t}\left(
             \text{total number of bad particles}>\rho
           \right)
  :=
 1-\left|\langle \Psi_t|\widetilde \Psi_t \rangle\right|^{2}, \quad
  \text{where} \quad \widetilde \Psi_t := \sum_{1\leq k\leq
  \rho}P^{\varphi_t}_k\Psi_t.
\end{equation}
This quantity is estimated in our second main result. \ed
\begin{thm}\label{thm:probM}
  Let $U\in{\cal C}^\infty_c(\mathbb R^3,\mathbb R^+_0)$ be a
  repulsive potential.
  Then there is a $C\in\operatorname{Bounds}$ such that
  \[
   \left\| \Psi_t - \widetilde\Psi_t\right\|^2_2     \leq
    C(t)\frac{\Lambda}{\rho},
  \]
 for all times $t\geq 0$, provided $\Lambda$ is sufficiently large. \ed
\end{thm}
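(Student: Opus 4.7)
The plan is to reduce the statement, via Markov's inequality, to a Grönwall-type estimate on the counting functional
\[
\alpha(t) \;:=\; \langle \Psi_t,\, q_1^{\varphi_t}\,\Psi_t\rangle ,
\]
and then to close this estimate using the counting method of \cite{pickl2011} with weights adapted to the present setting. By Bose symmetry one has $\sum_k k\,\|P_k^{\varphi_t}\Psi_t\|_2^2 = N\alpha(t)$, and mutual orthogonality of the $P_k^{\varphi_t}$ gives $\sum_{k>\rho}\|P_k^{\varphi_t}\Psi_t\|_2^2 \leq \frac{N}{\rho}\alpha(t) = \Lambda\,\alpha(t)$. Interpreting the sum defining $\widetilde\Psi_t$ as running over $0\leq k\leq \rho$ (so that the bound is consistent with $\alpha(0)=0$), the left-hand side of the theorem coincides with this tail sum, and it suffices to prove $\alpha(t)\leq C(t)/\rho$.

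Differentiating $\alpha$ using (\ref{eq:microscopic}), (\ref{eq:varphi-evolution}), and the induced identity $\partial_t q_1^{\varphi_t} = -i[h_{x_1}[\varphi_t],\, q_1^{\varphi_t}]$, the kinetic terms and the Hartree potentials $U*|\varphi_t|^2(x_j)$ for $j\neq 1$ commute with $q_1^{\varphi_t}$, and Bose symmetry collapses the pair sum to
\[
\dot\alpha(t) \;=\; i(N-1)\Big\langle \Psi_t,\,\bigl[\tfrac{1}{\rho}U(x_1-x_2) - \tfrac{1}{N-1}\,U*|\varphi_t|^2(x_1),\; q_1^{\varphi_t}\bigr]\Psi_t\Big\rangle.
\]
Inserting $\Id = p_j^{\varphi_t}+q_j^{\varphi_t}$ on particles $j=1,2$ on both sides of the commutator sorts the resulting expression into a ``diagonal'' piece $p_1 p_2\,U\,p_1 p_2$, ``single-excitation'' pieces of type $p_1 p_2\,U\,p_1 q_2$ and $p_1 q_2\,U\,q_1 q_2$, and the ``pair-creation'' piece $\tfrac{N-1}{\rho}\langle\Psi_t,\,p_1 p_2\,U(x_1-x_2)\,q_1 q_2\,\Psi_t\rangle$, together with their adjoints. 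The sandwich identity
\[
p_1^{\varphi_t}\,U(x_1-x_2)\,p_1^{\varphi_t} \;=\; \frac{U*|\varphi_t|^2(x_2)}{\Lambda}\,p_1^{\varphi_t}
\]
shows that the diagonal piece cancels the Hartree subtraction up to $\mathcal{O}(1/N)$. The off-diagonal pieces are then estimated by Cauchy--Schwarz, using $\|q_1 q_2\Psi_t\|_2^2\leq \alpha(t)$ and the a priori $L^\infty$ bounds on $\varphi_t$ and $U*|\varphi_t|^2$ propagated from Condition~\ref{def:initial-conditions} along the Hartree flow. The target is the differential inequality $\dot\alpha(t)\leq C(t)\bigl(\alpha(t)+1/\rho\bigr)$, which together with $\alpha(0)=0$ yields $\alpha(t)\leq C(t)/\rho$ by Grönwall.

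The main obstacle is the pair-creation term, whose combinatorial prefactor $(N-1)/\rho\sim\Lambda$ would, in a naive Cauchy--Schwarz bound, produce only $\mathcal{O}\bigl(\sqrt{\Lambda\,\alpha(t)}\bigr)$: this is precisely the coarser $\Lambda^{3/2}/\rho^{1/2}$ scaling already present in Theorem~\ref{thm:rho-diff-brutal}. Extracting the additional factor $(\rho\Lambda)^{-1/2}$ needed to bring the source term down to $C(t)/\rho$ is the delicate step. It forces one to replace $q_1^{\varphi_t}$ by a weighted number operator $\widehat m$ with weight $m(k)$ interpolating between $k/N$ for $k$ small and a saturated value for $k\gtrsim \rho$, and to exploit the shift identities $\widehat m\,p_ip_j = p_ip_j\,\widehat m$ and $\widehat m\,q_iq_j = q_iq_j\,\widehat m_{-2}$ in order to transfer the excess factor $\Lambda$ into an improved $\alpha$-factor on the pair-creation contribution. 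This is precisely the ``fine-tuning of counting measures'' referred to in the introduction and is what distinguishes the argument from that of~\cite{deckert2012}. A standard energy estimate is needed in parallel to propagate the $L^\infty$, $H^1$, and $H^2$ bounds on $\varphi_t$ from $t=0$ to general $t$, so that the constant $C(t)$ remains finite.
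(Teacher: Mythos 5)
Your overall strategy---a Gr\"onwall estimate for a counting functional, the sandwich identity $p_1^{\varphi_t}U(x_1-x_2)p_1^{\varphi_t}=\Lambda^{-1}U*|\varphi_t|^2(x_2)\,p_1^{\varphi_t}$, the smallness $\Vert U(x_1-x_2)p_1^{\varphi_t}\Vert\leq C(t)\Lambda^{-1/2}$, a diagonal/off-diagonal splitting, and a Chebyshev step for the tail $\sum_{k>\rho}\Vert P_k^{\varphi_t}\Psi_t\Vert_2^2$---is the paper's. But your reduction requires $\alpha(t)=\langle\Psi_t,q_1^{\varphi_t}\Psi_t\rangle\leq C(t)/\rho$, and that is a genuine gap: neither your sketch nor the paper delivers it. The paper proves only $\langle\widehat{m^{\varphi_t}}\rangle_t\leq C(t)\Lambda/\rho$ for the $\rho$-saturated weight (Lemma~\ref{lem:m-bound}); since $k/N\leq m(k)$ this yields only $\alpha(t)\leq C(t)\Lambda/\rho$, which fed into your Markov step gives $\Lambda^2/\rho$, not $\Lambda/\rho$. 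If instead you try to close the Gr\"onwall for the plain $\alpha$ directly, the obstruction is not where you locate it. With the unweighted weight $k/N$ the pair-creation piece $p_1p_2\,U\,q_1q_2$ is in fact harmless: its weight increment is $2/N$, and together with $\Vert U(x_1-x_2)p_1p_2\Vert\lesssim\Lambda^{-1/2}$ and $\Vert p_1p_2U(x_1-x_2)U(x_1-x_3)p_1p_3\Vert\lesssim\Lambda^{-2}$ it contributes at most $C(t)(\rho^{-1}+\alpha)$. What breaks is your ``single-excitation'' piece $p_1q_2\,Z\,q_1q_2$ (the paper's term (\ref{eq:term-III})): with prefactor $N^2/\rho$, the bound $\Vert p_1Z\Vert\lesssim\Lambda^{-1/2}$ and $\Vert q_1q_2\Psi_t\Vert_2^2\leq\alpha$ (which is what you invoke) give only $\frac{N}{\rho\Lambda^{1/2}}\,\alpha=\Lambda^{1/2}\alpha$, so the Gr\"onwall constant blows up with $\Lambda$. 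Removing this loss needs exactly the fact that the weight increments vanish for $k>\rho$, which produces the extra factor $(\rho/N)^{1/2}=\Lambda^{-1/2}$ in (\ref{eq:term-III-2}); that is, it needs the adapted weight $m$. In the weighted scheme the pair-creation term is then merely what fixes the size $\Lambda/\rho$ of the source (Remark~\ref{rem:technique}(ii)); it is not what forces the weight adaptation, contrary to your last paragraph.

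Relatedly, once you switch to $\widehat{m^{\varphi_t}}$, the identity $\sum_k k\Vert P_k^{\varphi_t}\Psi_t\Vert_2^2=N\alpha(t)$ and your Markov step no longer apply to the functional you can actually control, so your chain does not connect as written. The repair is immediate and is precisely the paper's one-line proof: since $m(k)=1$ for $k>\rho$,
\begin{equation*}
\Vert\Psi_t-\widetilde\Psi_t\Vert_2^2=\sum_{\rho<k\leq N}\Vert P_k^{\varphi_t}\Psi_t\Vert_2^2=\sum_{\rho<k\leq N}m(k)\Vert P_k^{\varphi_t}\Psi_t\Vert_2^2\leq\langle\widehat{m^{\varphi_t}}\rangle_t\leq C(t)\frac{\Lambda}{\rho},
\end{equation*}
with the last inequality supplied by Lemma~\ref{lem:m-bound} together with the a priori bound $\Vert\varphi_t\Vert_\infty\leq C(t)$ from the propagation estimates. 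So you should drop the intermediate target $\alpha(t)\leq C(t)/\rho$ altogether and run both the Gr\"onwall and the tail estimate with the saturated weight.
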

%
%This theorem states in particular that the excitation is ``silhouetted'' against
%``bad'' particles in the regime where $\Lambda \ll \rho$.

We pause to interpret this result. 
As a gedanken experiment, we imagine that the density of the Bose gas is
measured, e.g., by shining light into the condensate and then recording the scattered light by means of
a photograph -- as one does in recent experiments with cold atom gases,
where for example a sequence of photographs is taken to record the dynamics of the
Bose gas cloud; see also \cite{ketterle}. As long as one can recognize
a localized excitation on the photograph of the gas, one can argue that there
are at most $\mathcal{O}(\rho)$ bad particles in the state of the gas, and hence that the
state after the measurements is close to the vector
$\widetilde\Psi_t$. Theorem~\ref{thm:probM} then says that if $\frac{\Lambda}{\rho}\ll 1$ the
state of the system is very close to the vector $\widetilde\Psi_t$, and, in this case, the result in
Theorem~\ref{thm:rho-diff-brutal} can be further improved as follows (our third main result).

\begin{thm}\label{thm:improved}
  Let $U\in{\cal C}^\infty_c(\mathbb R^3,\mathbb R^+_0)$ be a
  repulsive potential.
  Then there exists $C\in\operatorname{Bounds}$ such that
  \[
    \widetilde\rho_{t}^{(\mathrm{micro})}:=q_{t}^{(\mathrm{ref})}
    \operatorname{Tr}_{x_{2},\ldots,x_{N}}
    \left|\Lambda^{1/2}{\widetilde\Psi}_{t}\right\rangle
    \left\langle \Lambda^{1/2}{\widetilde\Psi}_{t}\right|q_{t}^{(\mathrm{ref})},
  \]
  fulfills
  \begin{align}
\label{eq:rho-tilde-sense}
        \left\Vert \widetilde \rho^{(\mathrm{micro})}_t -
        \rho^{(\mathrm{macro})}_t \right\Vert \leq
        C(t)
        {\frac{\Lambda^{1/2}}{\rho^{1/2}}},
  \end{align}
for all times $t\geq 0$ provided $\Lambda$ is sufficiently large. \ed
\end{thm}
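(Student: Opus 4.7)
The plan is to test the operator-norm bound against arbitrary one-particle vectors $f,g \in L^2$ with $\|f\|_2, \|g\|_2 \leq 1$, reducing the task to bounding
\[
\mathcal E(f,g) := \Lambda\langle\widetilde\Psi_t|(q_t^{(\mathrm{ref})}|g\rangle\langle f|q_t^{(\mathrm{ref})})_1|\widetilde\Psi_t\rangle - \overline{\langle\epsilon_t|f\rangle}\langle\epsilon_t|g\rangle,
\]
whose supremum over the unit ball equals $\|\widetilde\rho_t^{(\mathrm{micro})} - \rho_t^{(\mathrm{macro})}\|$. The strategy is to insert the resolution $\mathbbm{1} = p^{\varphi_t} + q^{\varphi_t}$ on both sides of the one-particle operator $(|\tilde g\rangle\langle\tilde f|)_1$ (with $\tilde f := q_t^{(\mathrm{ref})} f$, $\tilde g := q_t^{(\mathrm{ref})} g$) and analyse the four resulting pieces, with the truncation bound on the number of bad particles in $\widetilde\Psi_t$ playing a decisive role throughout.

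For the principal ``$pp$'' piece, the identity $q_t^{(\mathrm{ref})}\varphi_t = e^{-i\|U\|_1 t}q_t^{(\mathrm{ref})}\epsilon_t$ (from (\ref{eq:epsilon-def})) yields
\[
\Lambda\langle\widetilde\Psi_t|(p^{\varphi_t}|\tilde g\rangle\langle\tilde f|p^{\varphi_t})_1|\widetilde\Psi_t\rangle = \overline{\langle q_t^{(\mathrm{ref})}\epsilon_t|f\rangle}\langle q_t^{(\mathrm{ref})}\epsilon_t|g\rangle \cdot \langle\widetilde\Psi_t|p_1^{\varphi_t}|\widetilde\Psi_t\rangle.
\]
Matching this to $\overline{\langle\epsilon_t|f\rangle}\langle\epsilon_t|g\rangle$ rests on two ingredients: (i) the smallness $\|p_t^{(\mathrm{ref})}\epsilon_t\|_2 \leq C(t)\Lambda^{-1/2}$, which comes from dynamic control of $|\langle\phi_t^{(\mathrm{ref})}|\epsilon_t\rangle|$ (exploiting that $\epsilon_t$ remains essentially localized and $\phi_t^{(\mathrm{ref})} \approx 1$ on its support) combined with $\|\phi_t^{(\mathrm{ref})}\|_2^2 \sim \Lambda$ from $L^2$-conservation of (\ref{eq:phi-evolution}); and (ii) $\langle\widetilde\Psi_t|p_1^{\varphi_t}|\widetilde\Psi_t\rangle = \|\widetilde\Psi_t\|_2^2 - \langle\widetilde\Psi_t|q_1^{\varphi_t}|\widetilde\Psi_t\rangle = 1 - \mathcal O(\Lambda/\rho)$, combining Theorem~\ref{thm:probM} (which controls $1 - \|\widetilde\Psi_t\|_2^2$) with the hard truncation bound $\langle\widetilde\Psi_t|q_1^{\varphi_t}|\widetilde\Psi_t\rangle \leq \rho/N = \Lambda^{-1}$, reflecting that $\widetilde\Psi_t$ carries at most $\rho$ bad particles out of $N = \rho\Lambda$.

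For the three remaining ``non-$pp$'' terms ($pq, qp, qq$) the two structural inputs are (a) the near-orthogonality $\|p^{\varphi_t}q_t^{(\mathrm{ref})}\| = \|q_t^{(\mathrm{ref})}\varphi_t\|_2/\Lambda^{1/2} \leq C\Lambda^{-1/2}$, forcing the factors $p^{\varphi_t}$ appearing in $pq$ and $qp$ to pay a $\Lambda^{-1/2}$, and (b) the truncation bound $\|q_1^{\varphi_t}\widetilde\Psi_t\|_2^2 \leq \Lambda^{-1}$ used to tame the $q^{\varphi_t}$ factors. A naive Cauchy--Schwarz estimate relying only on (a) and (b) gives $\mathcal O(1)$ for each off-diagonal piece — insufficient by a factor of $\rho^{1/2}/\Lambda^{1/2}$. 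The main obstacle is to recover this missing $\rho^{-1/2}$: one replaces one use of the hard bound (b) by the finer mean-field estimate $\langle\Psi_t|q_1^{\varphi_t}|\Psi_t\rangle \leq C(t)/\rho$, established via a Grönwall-type argument on a carefully tuned counting measure in Section~\ref{sec:proofs} (extending the strategy of \cite{pickl2011,deckert2012} to the density scaling in use here) and transferred from $\Psi_t$ to $\widetilde\Psi_t$ through Theorem~\ref{thm:probM} at the controllable cost $\mathcal O((\Lambda/\rho)^{1/2})$. Inserting this refinement absorbs the extra factor of $\Lambda$ present in the brutal bound of Theorem~\ref{thm:rho-diff-brutal} and yields the claimed $\mathcal O(\Lambda^{1/2}/\rho^{1/2})$ rate.
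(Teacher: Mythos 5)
Your overall architecture is the same as the paper's: test the operator norm against one-particle vectors, insert $p^{\varphi_t}+q^{\varphi_t}$, use $\Vert p_t^{(\mathrm{ref})}\epsilon_t\Vert_2\leq C(t)\Lambda^{-1/2}$, $\Vert\epsilon_t\Vert_2\leq C(t)$, the near-orthogonality $\Vert p^{\varphi_t}q_t^{(\mathrm{ref})}\Vert\leq C(t)\Lambda^{-1/2}$, and Theorem~\ref{thm:probM} for the normalization of $\widetilde\Psi_t$. The gap sits exactly at the decisive point, namely how you obtain the refined bound on $\Vert q_1^{\varphi_t}\widetilde\Psi_t\Vert_2$. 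You invoke a ``finer mean-field estimate $\langle\Psi_t,q_1^{\varphi_t}\Psi_t\rangle\leq C(t)/\rho$, established \ldots in Section~\ref{sec:proofs}''. No such estimate is established there: Lemma~\ref{lem:m-bound} gives $\langle\widehat{m^{\varphi_t}}\rangle_t\leq C(t)\Lambda/\rho$, and since $k/N\leq m(k)$ for every $k$ (with the favorable factor $1/\Lambda$ only on the sectors $k\leq\rho$), the best this yields for the \emph{untruncated} state is $\langle\Psi_t,q_1^{\varphi_t}\Psi_t\rangle\leq C(t)\Lambda/\rho$; the sectors $k>\rho$, whose total probability mass is controlled only by $\Lambda/\rho$, obstruct any $1/\rho$ bound for $\Psi_t$ itself. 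Moreover, even granting that estimate, your proposed transfer to $\widetilde\Psi_t$ via Theorem~\ref{thm:probM} costs $\Vert\Psi_t-\widetilde\Psi_t\Vert_2\sim(\Lambda/\rho)^{1/2}$ at the level of $\Vert q_1^{\varphi_t}\widetilde\Psi_t\Vert_2$, which is much larger than the required $\rho^{-1/2}$: feeding $\Vert q_1^{\varphi_t}\widetilde\Psi_t\Vert_2\leq C(t)(\Lambda/\rho)^{1/2}$ into your $pq$/$qp$ terms produces only $\mathcal O(\Lambda\rho^{-1/2})$, missing the claimed rate by a factor $\Lambda^{1/2}$ (and the $qq$ term fares even worse).

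The missing idea — and the paper's route — is that no transfer is needed, because the truncation delivers the refined bound directly. Since $\widetilde\Psi_t=\sum_{0\leq k\leq\rho}P_k^{\varphi_t}\Psi_t$, and on precisely these sectors $m(k)=k/\rho=\Lambda\,k/N$, while $\Vert P_k^{\varphi_t}\widetilde\Psi_t\Vert_2\leq\Vert P_k^{\varphi_t}\Psi_t\Vert_2$, one has
\begin{equation*}
  \Lambda\left\langle\widetilde\Psi_t,q_1^{\varphi_t}\widetilde\Psi_t\right\rangle
  =\Lambda\sum_{0\leq k\leq\rho}\frac{k}{N}\left\Vert P_k^{\varphi_t}\Psi_t\right\Vert_2^2
  =\sum_{0\leq k\leq\rho}m(k)\left\Vert P_k^{\varphi_t}\Psi_t\right\Vert_2^2
  \leq\left\langle\widehat{m^{\varphi_t}}\right\rangle_t
  \leq C(t)\frac{\Lambda}{\rho},
\end{equation*}
i.e.\ $\Vert q_1^{\varphi_t}\widetilde\Psi_t\Vert_2^2\leq C(t)/\rho$ at zero cost; this is exactly (\ref{eq:q-and-m-relation}) in the paper. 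With this replacing both your hard bound $\rho/N=\Lambda^{-1}$ and your transfer argument, the rest of your decomposition goes through and reproduces the paper's estimates; as written, however, the detour through the full $\Psi_t$ relies on an unavailable bound and, even if granted, is too lossy to give $\Lambda^{1/2}\rho^{-1/2}$.
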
\ed

\begin{rem}\label{rem:general}
    It should be stressed that Theorems~\ref{thm:rho-diff-brutal},
    ~\ref{thm:probM} and ~\ref{thm:improved} also hold (i) for more general
    initial states $\Psi_0$ which, however, must be close to the product state
    in (\ref{eq:iv-psi-phi}), see Remark~\ref{rem:technique} below; and (ii)
    for \underline{attractive} two-body potentials $U$ and times $0\leq
    t<T\leq\infty$ provided  $\Vert\varphi_t\Vert_\infty$ stays bounded for
    $0\leq t\leq T$.  As mentioned above, the case of attractive potentials is
    more subtle because solutions of the evolution equation
    (\ref{eq:epsilon-evolution}) may blow up in finite time. Indeed, for this
    case the Bose gas collapses in the thermodynamic limit, and it is then not
    surprising that convergence to the mean-field limit fails, too. \ed
%    \ed For repulsive potentials, and assuming regularity of the initial
%    conditions $\phi^{(\mathrm{ref})}_0$ and $\epsilon_0$, as stated in
%    Condition~\ref{def:initial-conditions}, this actually holds for
%    \underline{all} times; see Corollary~\ref{lem:propagation-varphi}. 
\end{rem}

In order to further analyze the dynamics of $\Psi_t$, we consider 
excitations $\epsilon_t$ of very small $L^2-$ and bounded $L^\infty-$ norm.
In this case we find that the
evolution of $\epsilon_t$ is well described by a linear version of
equation (\ref{eq:epsilon-evolution}), namely
\begin{equation}
  i\partial_{t}\eta_{t}(x)=-\frac{1}{2}\Delta\eta_{t}(x)+U*2\Re\eta_{t}(x),
  \label{eq:approximate-time-evolution}
\end{equation}
with initial condition $\eta_{t}|_{t=0}=\epsilon_0$. Indeed, in Section~\ref{sec:eta} we
prove the following theorem: \ed
\begin{thm}
  \label{thm:main-epsilon}
  Let $U\in\mathcal C^\infty_c(\mathbb R^3,\mathbb R)$ be a general 
  potential. \ed
  Suppose $\epsilon_t$ and $\eta_t$ solve the equations
  (\ref{eq:epsilon-evolution}) and (\ref{eq:approximate-time-evolution}),
  respectively, for $0\leq t< T\leq \infty$ and initial data $\epsilon_t|_{t=0}=\epsilon_0=\eta_t|_{t=0}$. Then there is a $C\in\mathrm{Bounds}$ such that
    \begin{equation}\label{eq:linearized-nonlinearized} \left\Vert
    \eta_{t}-\epsilon_{t}\right\Vert _{2}\leq C(t)\sup_{s\in[0,t]}\left(
    \Lambda^{-\frac{1}{6}} + \left\Vert
    {\epsilon}_{s}\right\Vert _{2}^{2}+\left\Vert {\epsilon}_{s}\right\Vert
    _{2}^{3}\right),
  \end{equation}
  for times $0\leq t< T$ provided $\Lambda$ is sufficiently large. \ed
\end{thm}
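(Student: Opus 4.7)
The plan is to let $\delta_t := \eta_t - \epsilon_t$, note that $\delta_0 = 0$, and work with the inhomogeneous equation obtained by subtracting \eqref{eq:approximate-time-evolution} from \eqref{eq:epsilon-evolution} and isolating the linear part common to both:
\[
i\partial_t \delta_t = -\tfrac{1}{2}\Delta\, \delta_t + U * 2\operatorname{Re}(\delta_t) + R_t,
\]
with $R_t := U*2\operatorname{Re}(\epsilon_t) - V_t\,\epsilon_t - W_t\,\phi_t^{(\mathrm{ref})}$, where $V_t$ and $W_t$ denote the operator and source coefficients on the right-hand side of \eqref{eq:epsilon-evolution}. An $L^2$-energy identity, using that $-\tfrac{1}{2}\Delta$ is self-adjoint and that the non-self-adjoint part of $f \mapsto U*2\operatorname{Re}(f)$ is bounded in operator norm by $\|U\|_1$ (via $\|U*\bar f\|_2 \le \|U\|_1\|f\|_2$), yields $\tfrac{d}{dt}\|\delta_t\|_2 \le C_0\|\delta_t\|_2 + \|R_t\|_2$. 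A Gronwall inequality then reduces everything to bounding $\int_0^t \|R_s\|_2\, ds$.

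Organizing $R_t$ by order in $\epsilon_t$ and writing $\widetilde\phi_t := \phi_t^{(\mathrm{ref})} - 1$, the cubic piece $-U*|\epsilon_t|^2\epsilon_t$ is bounded via $\|U*|\epsilon_t|^2\|_\infty \le \|U\|_\infty\|\epsilon_t\|_2^2$ by $C\|\epsilon_t\|_2^3$; the quadratic pieces $-U*|\epsilon_t|^2\cdot\phi_t^{(\mathrm{ref})}$ and $-U*2\operatorname{Re}(\epsilon_t^*\phi_t^{(\mathrm{ref})})\cdot\epsilon_t$ are bounded by $C\|\epsilon_t\|_2^2$ via Young's inequality and the uniform bound on $\|\phi_t^{(\mathrm{ref})}\|_\infty$. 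These contributions supply the $\|\epsilon_s\|_2^2 + \|\epsilon_s\|_2^3$ part of \eqref{eq:linearized-nonlinearized}.

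The remaining group consists of the linear-in-$\epsilon_t$ pieces $U*2\operatorname{Re}(\epsilon_t) - (U*|\phi_t^{(\mathrm{ref})}|^2 - \|U\|_1)\epsilon_t - U*2\operatorname{Re}(\epsilon_t^*\phi_t^{(\mathrm{ref})})\phi_t^{(\mathrm{ref})}$, which vanish identically when $\phi_t^{(\mathrm{ref})} \equiv 1$. Each summand carries at least one factor of $\widetilde\phi_t$, which at $t=0$ is supported outside $\mathcal{B}_{\Lambda^{1/3}/2}$ and which, by the flatness assumptions \eqref{eq:nabla-phi-2} — in particular $\|\Delta\phi_0^{(\mathrm{ref})}\|_2 \le C\Lambda^{-1/6}$ — together with the transport equation \eqref{eq:phi-evolution}, remains $\mathcal{O}(\Lambda^{-1/6})$-close in $L^2$ to its initial profile on any bounded time interval. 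Combined with the fact that $\epsilon_t$, starting from compact support in $\mathcal{B}_{\Lambda^{1/3}/4}$ and controlled by $\|\widehat{|\epsilon_0|}\|_1 \le C$, stays concentrated in the bulk $\mathcal{B}_{\Lambda^{1/3}/2 - R}$ (with $R$ the diameter of $\operatorname{supp} U$) up to a tail negligible compared with $\Lambda^{-1/6}$ on bounded time intervals, one obtains an $L^2$-bound of order $\Lambda^{-1/6}$ on each such summand. Integrating and applying Gronwall yields \eqref{eq:linearized-nonlinearized} with $C(t) = C_0 t\,e^{C_0 t} \in \operatorname{Bounds}$.

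The main technical obstacle is precisely this last step: the pointwise bound $|\widetilde\phi_t| \le 1$ only gives an $\mathcal{O}(1)$ contribution, so one must carefully pair the factor of $\widetilde\phi_t$ in each summand — typically via the combined factor $U*(\epsilon_t\,\widetilde\phi_t)$ — with a H\"older partner that simultaneously exploits the effective support localization of $\epsilon_t$ and the sharp Sobolev-type bounds in \eqref{eq:nabla-phi-2} to extract the $\Lambda^{-1/6}$ gain. Everything else reduces to the routine Young-type estimates above.
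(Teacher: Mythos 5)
Your overall skeleton is the right one and matches the paper's: subtract the two equations, use self-adjointness of $-\tfrac12\Delta$ and boundedness of $f\mapsto U*2\Re f$ to get a Gr\"onwall inequality for $\Vert\eta_t-\epsilon_t\Vert_2$, and dispose of the quadratic and cubic pieces by Young-type estimates (together with $\Vert\phi^{(\mathrm{ref})}_t\Vert_\infty\le C(t)$), which indeed produce the $\Vert\epsilon_s\Vert_2^2+\Vert\epsilon_s\Vert_2^3$ contributions. The problem is that the part you yourself flag as ``the main technical obstacle'' --- the $\Lambda^{-1/6}$ bound on the linear-in-$\epsilon_t$ remainder carrying factors of $\phi^{(\mathrm{ref})}_t-1$ --- is exactly the substance of the theorem, and the two facts you invoke for it are, respectively, false as stated and unproven.

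First, it is not true that $\phi^{(\mathrm{ref})}_t-1$ stays $\mathcal O(\Lambda^{-1/6})$-close in $L^2$ to its initial profile: equation (\ref{eq:phi-evolution}) rotates the phase of $\phi^{(\mathrm{ref})}_t$ by the $\mathcal O(1)$ quantity $U*|\phi^{(\mathrm{ref})}_t|^2-\Vert U\Vert_1$ on the boundary shell where $\phi^{(\mathrm{ref})}_0$ is not identically $1$, a region of volume of order $\Lambda$, so $\Vert\phi^{(\mathrm{ref})}_t-\phi^{(\mathrm{ref})}_0\Vert_2$ can grow like $t\,\Lambda^{1/2}$. What is true, and what the paper uses, are (\ref{eq:phi-tildephi-L2}) and (\ref{eq:moduls-phi-phi0-L2-bound}): closeness to the phase-modulated profile $\widetilde\phi_t$ of (\ref{eq:tilde-phi-def}) (not to $\phi^{(\mathrm{ref})}_0$ itself), so one must split $\phi^{(\mathrm{ref})}_t-1=(\phi^{(\mathrm{ref})}_t-\widetilde\phi_t)+(\widetilde\phi_t-1)$, pair the first piece with $\Vert\epsilon_t\Vert_\infty\le C(t)$, and exploit that $\widetilde\phi_t-1$ vanishes on $\mathcal B_{\frac12\Lambda^{1/3}-2D}$ so that it only meets the cut-off tail $\chi_{1/4}\epsilon_t$. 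Second, the localization ``$\epsilon_t$ stays concentrated in the bulk up to a tail negligible compared with $\Lambda^{-1/6}$'' is precisely the a priori estimate $\Vert\chi_r\epsilon_t\Vert_2\le C(t)\Lambda^{-1/3}$ of Lemma~\ref{lem:propagation-epsilon}; it does not follow from the compact support of $\epsilon_0$ and $\Vert\widehat{|\epsilon_0|}\Vert_1\le C$, since the dynamics (\ref{eq:epsilon-evolution}) has no finite propagation speed --- in the paper it requires the torus comparison of Section~\ref{sec:propagation-on-torus} (energy conservation giving $\Vert\nabla\epsilon^{\mathbb T}_t\Vert_2\le C$, the cut-off bound $\Vert\nabla\chi_r\Vert_\infty\le C\Lambda^{-1/3}$, and a separate Gr\"onwall argument), transferred to $\mathbb R^3$ via (\ref{eq:varphi-tildevarphi-diff-L2bound}) and (\ref{eq:phi-tildephi-L2}), together with the nontrivial $L^\infty$ propagation bounds on $\varphi_t$ and $\phi^{(\mathrm{ref})}_t$. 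Since you neither prove these inputs nor cite valid substitutes, and you explicitly leave the pairing that extracts the $\Lambda^{-1/6}$ gain unresolved, the proof of (\ref{eq:linearized-nonlinearized}) is incomplete at its decisive step.
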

The evolution equation (\ref{eq:epsilon-evolution}) is then quite easy to analyze.
After a Fourier transformation, 
\[ \widehat{\eta}_{t}(k):=(2\pi)^{-3/2}\int d^3x\,e^{-ikx}\eta_{t}(x), \] 
of $\eta_{t}$, we rewrite (\ref{eq:approximate-time-evolution}) in
momentum space
\begin{equation}
  i\partial_{t}\widehat{\eta}_{t}(k)=\omega_{0}(k)\widehat{\eta}_{t}(k)+\widehat{U}(k)\left(\widehat{\eta}(k)+\widehat{\eta}^{*}(-k)\right),\label{eq:eta-fourier}
\end{equation} where we have used that
$\widehat{\eta^{*}}(k)=\widehat{\eta}^{*}(-k)$, and where 
\[
\omega_{0}(k)=\frac{k^2}{2}
\] is
the symbol of the differential operator $-\frac{1}{2}\Delta$ in momentum space.
The complex conjugate of this equations is given by
\[
i\partial_{t}\widehat{\eta}_{t}^{*}(-k)=-\omega_{0}(k)\widehat{\eta}_{t}^{*}(-k)-\widehat{U}(k)\left(\widehat{\eta}^{*}(-k)+\widehat{\eta}(k)\right),
\]
where we have used that \[
  \omega_{0}(k)\equiv\omega_{0}(|k|)
  \qquad
  \text{and}
  \qquad
  \widehat{U}(k)=\widehat{U}^{*}(-k)
\] 
as the potential $U(x)$ is real-valued. The evolution equations for
$\widehat{\eta}_{t}(k)$ and $\widehat{\eta}_{t}^{*}(-k)$ can then be written in
closed form as \[ i\partial_{t}\begin{pmatrix}\widehat{\eta}_{t}(k)\\
    \widehat{\eta}_{t}^{*}(-k)
  \end{pmatrix}={\cal
  H}(k)\begin{pmatrix}\widehat{\eta}_{t}(k)\\ \widehat{\eta}_{t}^{*}(-k)
  \end{pmatrix},\qquad \text{with} \qquad {\cal H}(k):=\begin{pmatrix}\omega_{0}(k)+\widehat{U}(k)
    & \widehat{U}(k)\\ -\widehat{U}(k) & -\omega_{0}(k)-\widehat{U}(k)
\end{pmatrix}.  \] 
Note that $\cal H$ is not self-adjoint, and hence, the $L^2$ norm of $\eta_t$ is
not preserved. However, one can still find a basis w.r.t.\  which $\cal H$ is
diagonal.
For arbitrary $\widehat{U}(k)$, an eigenvalue,
$\omega(k)$, of ${\cal H}(k)$ fulfills
\begin{equation}
  \omega(k)^{2}=\omega_{0}(k)\left(\omega_{0}(k)+2\widehat{U}(k)\right).
  \label{eq:effective-dispersion}
\end{equation} This shows how the dispersion law, $\omega(k)$, of sound waves 
in the gas depends on the pair potential $U$. We consider
two interesting cases:\\

\textbf{Repulsive potential, e.g., $\widehat{U}(0)>0$:}  \[
    |\omega(k)|=|k|\sqrt{\frac{k^{2}}{4}+
  {\widehat{U}(k)}}.\] 
  Apparently, the speed of sound at small values of $|k|$ is then given by
\[ v_{\mathrm{sound}}=\sqrt{{\widehat{U}(0)}}, \] 
which is a well-known result due to Bogolyubov \cite{Bogolyubov1947}. Note that the fact that $v_{\mathrm{sound}}$ 
does \textit{not} depend on the density $\rho$ of the gas is owed to 
the scaling in (\ref{eq:micro hamiltonian}).\\

\textbf{Attractive potential, e.g., $\widehat{U}(k)<0$:}  
For such potentials $U$, modes with wave vectors $k$ fulfilling
 $\omega_0(k)=-2\widehat{U}(k)$ become static
according to the effective dispersion relation
\[
 \omega(k)=\omega_{0}(k)^{1/2}\sqrt{\omega_{0}(k)+2\widehat{U}(k)},
\]
 while modes corresponding to
 wave vectors $k$ with $\omega_{0}(k)<-2\widehat{U}(k)$ are
 dynamically \textit{unstable}. This instability causes the gas to implode at a
 finite time. As noted in Remark~\ref{rem:general}, 
our main results about the $N$-particle time evolution also hold for attractive
two-body potentials $U$,
as long as $\Vert\varphi_t\Vert_\infty$ remains bounded, i.e., for sufficiently
short times, which is why for those times $\eta_t$ also gives insights into the
microscopic dynamics of $\Psi_t$. \ed

\begin{rem}
    We note that the proofs provided in this paper
    also work for dispersion relations other than
    $\omega_0(k)=\frac{k^2}{2}$. While the propagation estimates given in
    Section~\ref{sec:propagationestimates} would have to be adapted,
    the mean-field estimates hold for any dispersion relation as
    all one-particle terms in the
    Hamiltonian drop out immediately; see (\ref{eq:dt-m}) below.
\end{rem}

\section{Proofs}\label{sec:proofs}

In this section, we present the proofs of our results. The organization of our reasoning process is as follows. 
\begin{itemize} 
    \item Section~\ref{sec:bad-particles}: Our first technical result,
        Lemma~\ref{lem:m-bound}, aims at controlling the number of bad
        particles present in the state of the gas.  This lemma will be proven under the assumption
        that $\Vert\varphi_t\Vert_\infty$ is bounded following ideas of \cite{pickl2011}. Note that the control of
        the Hartree dynamics (\ref{eq:varphi-evolution}) is well understood. One
        might then ask why  Lemma~\ref{lem:m-bound} is needed. The reason is
        that we are ultimately interested in the dynamics of
        \textit{excitations}, and for this it turns out in the proofs of
        Theorem~\ref{thm:probM} and Theorem~\ref{thm:improved} that considerably stronger bounds on the number of bad particles
        are necessary.

\item Section~\ref{sec:proofs-main-results}: Using
  Lemma~\ref{lem:m-bound} we proceed to proving our first three main results, namely 
  Theorems ~\ref{thm:rho-diff-brutal}, ~\ref{thm:probM}, and ~\ref{thm:improved}.
  These results hold provided the assumptions (\ref{eq:varphi-Linfty-bound}),
(\ref{eq:epsilon-L2-bound}) and (\ref{eq:p-epsilon-L2-bound}) hold true.

\item Section~\ref{sec:propagationestimates}: Here 
  ``propagation estimates'' justifying
  the assumptions (\ref{eq:varphi-Linfty-bound}),
(\ref{eq:epsilon-L2-bound}) and (\ref{eq:p-epsilon-L2-bound}) will be derived.

\item Section~\ref{sec:eta}:
  To conclude, we provide the proof of
  Theorem~\ref{thm:main-epsilon} which is also based on those propagation
  estimates.

\end{itemize}

\subsection{Controlling the number of ``bad''
particles}\label{sec:bad-particles}

For any $\varphi\in L^2$, we use the notation
  \begin{equation}
    q_{k}^{\varphi}:=1-p_{k}^{\varphi},\qquad\left(p_{k}^{\varphi}\Psi\right)(x_{1},\ldots,x_{N}):=\frac{\varphi(x_{k})}{\Vert
    \varphi\Vert_2}\int
    d^{3}x_{k}\,\frac{\varphi^{*}(x_{k})}{\Vert
    \varphi\Vert_2}\Psi(x_{1},\ldots,x_{N}),\qquad1\leq k\leq
    N.\label{eq:projectors}
  \end{equation}

To begin with, we need to define a convenient measure to count ``bad''
particles, i.e., those particles that do not evolve according to the effective non-linear
dynamics (\ref{eq:varphi-evolution}). 
For this purpose we introduced the orthogonal projectors 
 \begin{equation} P^\varphi_{k}=(q_{\cdot}^\varphi)^{\odot k}\odot
     (p_{\cdot}^\varphi)^{\odot(N-k)}, \tag{\ref{eq:projector}}
 \end{equation}
 for $0\leq k\leq N$.
To simplify our notation we use the convention
\begin{equation} P^\varphi_{k}\equiv0,\qquad\forall\ k\notin\left\{
  0,1,\ldots,N\right\}.\label{def-Pk-2}
\end{equation}
Later we will replace $\varphi$ by
the solution $\varphi_t$ of equation (\ref{eq:varphi-evolution}). One may then think of
$p_{\cdot}^{\varphi_t}$ as projecting on a ``good'' one-particle state
and $q_{\cdot}^{\varphi_t}$ as projecting on a ``bad'' one-particle state.

For an arbitrary weight function \[ w:\mathbb{Z}\to\mathbb{R}_{0}^{+}\]
we then define
weighted counting operators
\begin{equation}
  \widehat{w^{\varphi}}:=\sum_{k=0}^{N}w(k)P_{k}^{\varphi},\qquad\widehat{w_{d}^{\varphi}}:=\sum_{k=-d}^{N-d}w(k+d)P_{k}^{\varphi},
  \qquad d\in\mathbb Z.
  \label{eq:def-w}
\end{equation}
The role of the integer $d$ will become clear in
(\ref{eq:hat-Q-commutation-1}) and (\ref{eq:hat-Q-commutation-1bis})\ed. Note that, in the language introduced above,
$P_{k}^{\varphi}$ projects on that
part of the wave function that describes exactly $k$ bad particles. Hence,
one of the obvious candidates for a convenient counting measure is 
$\widehat{w^{\varphi}}$,  with $w(k)=k/N$. The expectation value
$\left\langle\Psi,\widehat{w^{\varphi}}\Psi\right\rangle$ then represents the
expected relative number of bad particles in the gas. 
However, control of this quantity will not suffice to track the excitation $\epsilon_t$: The total
number of particles in the gas is given by $N=\Lambda \rho$, and the number of
particles participating in an excitation is ${\cal O}(\rho)$. Consequently, we
will 
have to control the number of bad particles as compared to
$\rho$. This means that we have to adjust our
weight in a such a way that it counts the number of bad particles
relatively to $\rho$. The explicit weight function we use is given by
\begin{equation}
  m(k):=\begin{cases} \frac{k}{\rho} & \forall\,0\leq k\leq\rho\\ 1 &
    \forall\,\rho<k\\ 0 & \mbox{otherwise.}
  \end{cases}\label{eq:m-weight}
\end{equation}

When setting $w(k):=m(k)$ we denote the corresponding operator
$\widehat{w^\varphi}$ by $\widehat{m^\varphi}$.
Now, if  $\left\langle\Psi,\widehat{m^{\varphi}}\Psi\right\rangle$ is small, \ed
the probability of finding approximately $\rho$ bad particles in the gas is
small. As time goes by more and more particles in the gas will become
bad, due to interactions with other particles. Even for a perfect product
state there will always be a small
deviation of the true field from the mean field. The more bad particles there are
in the gas the stronger this deviation will be, and one may expect that the
rate of ``infection'' of formerly good particles is proportional to the number
of bad particles, up to a small term. The strategy of our proof is thus to show,
with the help of a Grönwall argument,
that if, initially, the number of bad particles is small, it will remain small for
any finite time interval. \\

Before we can start presenting the proofs of our results we must recall some properties of
the weighted counting measures, which have originally been studied 
in Lemma 1 in \cite{pickl2010a}. We summarize those properties that will be
needed in our analysis here while postponing their proofs to the appendix.
\begin{enumerate} \item
      \begin{equation}\label{eq:hat-multipilication}
        \widehat{v^{\varphi}}\widehat{w^{\varphi}}=\widehat{\left(vw\right)^{\varphi}}=\widehat{w^{\varphi}}\widehat{v^{\varphi}}
      \end{equation}

\item
  \begin{equation}
    \left[\widehat{w^{\varphi}},p_{k}^{\varphi}\right]=\left[\widehat{w^{\varphi}},q_{k}^{\varphi}\right]=0\label{eq:hat-pq-commutation}
  \end{equation}

\item \[ \left[\widehat{w^{\varphi}},P_{k}^{\varphi}\right]=0 \]

\item For $n(k)=\sqrt{\frac{k}{N}}$ we have
  \begin{equation}
    \left(\widehat{n^{\varphi}}\right)^{2}=\frac{1}{N}\sum_{k=1}^{N}q_{k}^{\varphi}\label{eq:old-q1}
  \end{equation}

\item For $\Psi\in\left(L^{2}\right)^{\odot N}$ we have that
  \begin{eqnarray}
    \left\Vert \widehat{w^{\varphi}}q_{1}^{\varphi}\Psi\right\Vert_2  & = &
    \left\Vert \widehat{w^{\varphi}}\widehat{n^{\varphi}}\Psi\right\Vert_2
    \label{eq:old-q1-2}\\ \left\Vert
    \widehat{w^{\varphi}}q_{1}^{\varphi}q_{2}^{\varphi}\Psi\right\Vert_2  & \leq &
    \sqrt{\frac{N}{N-1}}\left\Vert
    \widehat{w^{\varphi}}\left(\widehat{n^{\varphi}}\right)^{2}\Psi\right\Vert_2
    \label{eq:old-q1-3}
  \end{eqnarray}

\item For any  function $Y\in L^\infty(\mathbb{R}^3)$ and $Z\in L^\infty(\mathbb{R}^6)$ \ed and
  \[ A_{0}^{\varphi}=p_{1}^{\varphi},\qquad
    A_{1}^{\varphi}=q_{1}^{\varphi},\qquad
    B_{0}^{\varphi}=p_{1}^{\varphi}p_{2}^{\varphi},\qquad
    B_{1}^{\varphi}=p_{1}^{\varphi}q_{2}^{\varphi},\qquad
    B_{2}^{\varphi}=q_{1}^{\varphi}q_{2}^{\varphi} \] we have
    \begin{equation}
      \widehat{w^{\varphi}}A_{j}^{\varphi}Y(x_{1})A_{l}^{\varphi}=A_{j}^{\varphi}Y(x_{1})A_{l}^{\varphi}\widehat{w_{j-l}^{\varphi}}\quad \text{with}\,\, j,l=0,1,\label{eq:hat-Q-commutation-1}
\end{equation}
and
\begin{equation}
\widehat{w^{\varphi}}B_{j}^{\varphi}Z(x_{1},x_{2})B_{l}^{\varphi}=B_{j}^{\varphi}Z(x_{1},x_{2})B_{l}^{\varphi}\widehat{w_{j-l}^{\varphi}}\quad \text{with}\,\,j,l=0,1,2.\label{eq:hat-Q-commutation-1bis}
    \end{equation}
\end{enumerate}

In the following lemma the weighted number of bad particles encountered in the course of
time evolution is estimated. The proofs 
of our main results in Section~\ref{sec:proofs-main-results} rely on this fundamental lemma.
Another crucial point will be to justify assumption (\ref{eq:varphi-Linfty-bound-assumption})
below, which will be address in Section~\ref{sec:propagationestimates}.
\begin{lem} \label{lem:m-bound} 
    Let $U\in\mathcal C^\infty_c(\mathbb R^3,\mathbb R)$. 
    Let $\Psi_{t}$ be the solution to
    equation (\ref{eq:microscopic}) for initial data as in
    Condition \ref{def:initial-conditions}. Assume that, for some $T \leq \infty$, there
    is a $C\in\mathrm{Bounds}$ such that
  \begin{align}
    \label{eq:varphi-Linfty-bound-assumption}
    \Vert\varphi_t\Vert_\infty\leq C(t), \qquad 0\leq t< T.
  \end{align}
  \ed Then there is a
  $C\in\mathrm{Bounds}$ such that
  \begin{equation} 
     \left\langle
        \widehat{m^{\varphi_{t}}}\right\rangle_{t}  
      :=
      \left\langle
        \Psi_{t},\widehat{m^{\varphi_{t}}}\Psi_{t}\right\rangle
      \leq
        C(t)\frac{\Lambda}{\rho}, \qquad 0\leq t<T
        ,\label{eq:m-bound}
  \end{equation}
  where the weight function $m$ corresponding to counting operator
  $\widehat{m^{\varphi_t}}$ is defined in (\ref{eq:m-weight}).
\end{lem}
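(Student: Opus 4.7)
The plan is to establish a Grönwall inequality for $a_t := \langle\Psi_t,\widehat{m^{\varphi_t}}\Psi_t\rangle$. Since $\Psi_0$ is the pure product state $\prod_k \varphi_0(x_k)/\Lambda^{1/2}$, every projector $p_k^{\varphi_0}$ leaves $\Psi_0$ invariant, whence $P_0^{\varphi_0}\Psi_0 = \Psi_0$ and $a_0 = m(0) = 0$. It therefore suffices to prove a differential inequality of the form $\dot a_t \leq C(t)(a_t + \Lambda/\rho)$, which by Grönwall gives $a_t \leq C'(t)\,\Lambda/\rho$.

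To compute $\dot a_t$ I exploit the Hartree equation (\ref{eq:varphi-evolution}), which implies $\partial_t p^{\varphi_t} = -i[h_x[\varphi_t],p^{\varphi_t}]$; summing over particles, $\widehat{m^{\varphi_t}}$ evolves under the effective one-body Hamiltonian $H^{\mathrm{eff}} := \sum_{j=1}^N h_{x_j}[\varphi_t]$. Combined with the microscopic equation $i\partial_t\Psi_t = H\Psi_t$, the one-body parts of $H$ drop out and
\[
\dot a_t \;=\; i\,\bigl\langle\Psi_t,[H-H^{\mathrm{eff}},\widehat{m^{\varphi_t}}]\Psi_t\bigr\rangle,
\]
so that only the difference between the true pair interaction and the self-consistent mean field remains.

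By permutation symmetry this reduces to a pair expression in $(x_1,x_2)$. Inserting $\mathbbm{1} = p_j^{\varphi_t} + q_j^{\varphi_t}$ for $j=1,2$ into $U(x_1-x_2)$ and $U*|\varphi_t|^2(x_1)$, the commutation identities (\ref{eq:hat-Q-commutation-1})--(\ref{eq:hat-Q-commutation-1bis}) force every diagonal $B_j Z B_j$ block to commute with $\widehat{m^{\varphi_t}}$ and drop out. The pivotal cancellation comes from
\[
p_2^{\varphi_t}\,U(x_1-x_2)\,p_2^{\varphi_t} \;=\; \Lambda^{-1}\bigl(U*|\varphi_t|^2\bigr)(x_1)\,p_2^{\varphi_t},
\]
which, after accounting for the prefactors $\binom{N}{2}/\rho$ and $N$, lets those pieces of the pair commutator carrying a $p_2^{\varphi_t}$ on both sides of $U$ cancel the Hartree mean-field subtraction up to remainders of subleading combinatorial order. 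The surviving off-diagonal fluctuation terms, which schematically take the form $B_j^{\varphi_t} U B_l^{\varphi_t}(\widehat{m_{j-l}^{\varphi_t}}-\widehat{m^{\varphi_t}})$ with $j\neq l$, are handled using $|m(k+d)-m(k)|\leq |d|/\rho$, Cauchy--Schwarz, and the identities (\ref{eq:old-q1-2})--(\ref{eq:old-q1-3}), which convert $q^{\varphi_t}$-factors into the counting operator $\widehat{n^{\varphi_t}}$ and thus back into $\widehat{m^{\varphi_t}}$; the hypothesis $\|\varphi_t\|_\infty \leq C(t)$, together with $\|U*|\varphi_t|^2\|_\infty\leq \|U\|_1\|\varphi_t\|_\infty^2$, then produces estimates of the form $C(t)(a_t + \Lambda/\rho)$ for each term.

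The main obstacle will be this cancellation itself: the combinatorial prefactor $N(N-1)/(\rho\Lambda) = N-1$ coming from the two-body term matches the factor $N$ from the mean-field subtraction only to leading order, and the subleading residual together with a leftover $p_2^{\varphi_t}$- (or $q_2^{\varphi_t}$-) insertion must be bounded explicitly by $C(t)\Lambda/\rho$ rather than by any naively smaller quantity. A secondary nuisance is that $U*|\varphi_t|^2(x_1)$ is of order $\|U\|_1$ uniformly in $\Lambda$, so insertions of this operator cannot be absorbed into a small parameter and must be handled through Cauchy--Schwarz against $q^{\varphi_t}$-factors rather than by absolute estimates. Once the differential bound $\dot a_t\leq C(t)(a_t+\Lambda/\rho)$ is in hand, Grönwall with $a_0 = 0$ closes the argument.
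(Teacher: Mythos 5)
Your overall architecture is the same as the paper's: Grönwall for $a_t=\langle\widehat{m^{\varphi_t}}\rangle_t$ with $a_0=0$, cancellation of the one-body terms in $\frac{d}{dt}\langle\widehat{m^{\varphi_t}}\rangle_t$, reduction by symmetry to a pair expression, insertion of $p^{\varphi_t}+q^{\varphi_t}$, the exact vanishing of the $p_1p_2\,Z\,p_1q_2$ block via $p_1^{\varphi_t}U(x_1-x_2)p_1^{\varphi_t}=\Lambda^{-1}U*|\varphi_t|^2(x_2)\,p_1^{\varphi_t}$, and Cauchy--Schwarz with the counting identities (\ref{eq:old-q1-2})--(\ref{eq:old-q1-3}) for the remaining blocks. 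For the mean-field term and for the $p_1q_2\,Z\,q_1q_2$ block this scheme indeed yields $C(t)/\rho$ and $C(t)\langle\widehat{m^{\varphi_t}}\rangle_t$, exactly as in the paper.

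The gap is the $p_1p_2\,Z\,q_1q_2$ block (the paper's term (\ref{eq:term-II})): the tools you list do not suffice there. With $\Vert p_1^{\varphi_t}Z(x_1,x_2)\Vert\leq C(t)\Lambda^{-1/2}$, $\Vert(\widehat{m^{\varphi_t}}-\widehat{m_{-2}^{\varphi_t}})^{1/2}\Psi_t\Vert_2\leq C\rho^{-1/2}$, and $\Vert q_1^{\varphi_t}q_2^{\varphi_t}(\widehat{m^{\varphi_t}}-\widehat{m_{-2}^{\varphi_t}})^{1/2}\Psi_t\Vert_2\leq C(\langle\widehat{m^{\varphi_t}}\rangle_t/N)^{1/2}(\rho/N)^{1/2}$, a direct Cauchy--Schwarz gives (using $N=\rho\Lambda$)
\begin{equation*}
\frac{N^2}{\rho}\cdot\frac{C(t)}{\Lambda^{1/2}}\cdot\frac{C}{\rho^{1/2}}\cdot C\Bigl(\frac{\langle\widehat{m^{\varphi_t}}\rangle_t}{N}\Bigr)^{1/2}\Bigl(\frac{\rho}{N}\Bigr)^{1/2}
= C(t)\,\Lambda^{1/2}\,\langle\widehat{m^{\varphi_t}}\rangle_t^{1/2},
\end{equation*}
and $\Lambda^{1/2}\langle\widehat{m^{\varphi_t}}\rangle_t^{1/2}\leq\frac12(\epsilon^{-1}\langle\widehat{m^{\varphi_t}}\rangle_t+\epsilon\Lambda)$ can never be brought into the form $C(t)(a_t+\Lambda/\rho)$: either the inhomogeneity is $\Lambda$ (too large) or the Grönwall coefficient is of order $\rho$ (exponentially ruinous). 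This is precisely the point the paper flags ("a rough estimate ... leads to a $\Lambda^{-1}$-decay ... not good enough"). The missing ingredients are: symmetrizing the $q$-side over the particle index, expanding the resulting squared norm (\ref{eq:term-II-1.5}) into diagonal and off-diagonal parts, and proving the improved second-order smoothing bound $\Vert p_1^{\varphi_t}p_2^{\varphi_t}Z(x_1,x_2)Z(x_1,x_3)p_3^{\varphi_t}p_1^{\varphi_t}\Vert\leq C(t)\Lambda^{-2}$ via the factorization $U=\sqrt U\sqrt U$ (the paper's (\ref{eq:ppUUpp-term}), (\ref{estim.ppZZpp})). Only then does the off-diagonal part give $C(t)\langle\widehat{m^{\varphi_t}}\rangle_t$ and the diagonal part the term $C(t)\Lambda/\rho$ that fixes the final rate; as Remark~\ref{rem:technique}(ii) notes, this is the estimate that determines the right-hand side of (\ref{eq:m-bound}). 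Your "main obstacle" paragraph correctly senses that this block is the delicate one, but the mechanism you propose for it would fail, so the proof as outlined does not close.
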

\begin{proof}
    The heart of the proof is a Grönwall argument for which we need
  to control the time derivative of $\left\langle
      \widehat{m^{\varphi_{t}}}\right\rangle _{t}$. Note that we have so-called
      ``intermediate picture'' here as 
      both the wave function and the operator are time dependent.
      
      The time derivative of $p_{k}^{\varphi_t}$ is given by
      $\frac{d}{dt}p_{k}^{\varphi_t}
      =-i[h_{x_{k}}[\varphi_{t}],p_{k}^{\varphi_t}]$ which can be seen best by
      noting that  
      in bra-ket notation $p_{k}^{\varphi_t}$ is given by
      $|\varphi_t\rangle\langle\varphi_t|$ acting on the $k^{\it th}$
      particle;
       see (\ref{eq:projectors}). Since $q_{k}^{\varphi_t}=1-p_{k}^{\varphi_t}$
       it follows that  $\frac{d}{dt}q_{k}^{\varphi_t}
       =-i[h_{x_{k}}[\varphi_{t}],q_{k}^{\varphi_t}]$. Consequently, as
       $P_k^{\varphi_t}$ is a symmetric product of $p$'s and $q$'s, one has
      $$\frac{d}{dt}P_k^{\varphi_t}=-i\left[\sum_{k=1}^{N}h_{x_{k}}[\varphi_{t}],P_k^{\varphi_t}\right]\;.$$
     Since any weighted counting operator is a sum of operators
     $P_k^{\varphi_t}$ multiplied by real numbers (see (\ref{eq:m-weight})), it
     follows that
     $\frac{d}{dt}\widehat{m^{\varphi_{t}}}=-i\left[\sum_{k=1}^{N}h_{x_{k}}[\varphi_{t}],\widehat{m^{\varphi_{t}}}\right]$ and thus           
  \begin{align} 
      \frac{d}{dt}\left\langle
    \widehat{m^{\varphi_{t}}}\right\rangle _{t} & =  i\left\langle
    \left[H-\sum_{k=1}^{N}h_{x_{k}}[\varphi_{t}],\widehat{m^{\varphi_{t}}}\right]\right\rangle
    _{t}\nonumber \\ & = i\left\langle \left[\frac{1}{\rho}\sum_{1\leq
    j<k\leq
  N}U(x_{j}-x_{k})-\sum_{k=1}^{N}\frac{N}{\rho}
U*\frac{|\varphi_{t}|^{2}}{\Lambda}(x_k),\widehat{m^{\varphi_{t}}}\right]\right\rangle
  _{t}.\label{eq:dt-m}
 \end{align}
  Using the symmetry in the bosonic degree
of freedom we find
\begin{align} \left|(\ref{eq:dt-m})\right| & \leq
  \frac{N(N-1)}{2\rho}\left|\left\langle
  \left[\underbrace{U(x_{1}-x_{2})-U*\frac{|\varphi_{t}|^{2}}{\Lambda}(x_{1})-U*\frac{|\varphi_{t}|^{2}}{\Lambda}(x_{2})}_{=:Z(x_{1},x_{2})},\widehat{m^{\varphi_{t}}}\right]\right\rangle
  _{t}\right|\label{eq:m-main-term}\\  &
  \qquad+\frac{N}{\rho}\left|\left\langle
  \left[\underbrace{U*\frac{|\varphi_{t}|^{2}}{\Lambda}(x_{1})}_{=:Y(x_{1})},\widehat{m^{\varphi_{t}}}\right]\right\rangle
  _{t}\right|.\label{eq:m-small-term}
\end{align} The first term, viz.
(\ref{eq:m-main-term}), in the expression above is the physically relevant one.
The second term, (\ref{eq:m-small-term}), only gives rise to a small correction.
But we shall estimate this term first, because this actually permits us to
demonstrate a crucial technique without too much additional ballast. We start
by inserting identity operators, in the form of $\mathrm{id}_{{\cal
H}}=p_{1}^{\varphi_{t}}+q_{1}^{\varphi_{t}}$, on the left- and right side
of the scalar product in (\ref{eq:m-small-term}), i.e.,
\begin{eqnarray}
 (\ref{eq:m-small-term}) & = & \frac{N}{\rho}\left|\left\langle
  \left(p_{1}^{\varphi_{t}}+q_{1}^{\varphi_{t}}\right)\left(Y(x_{1})\widehat{m^{\varphi_{t}}}-\widehat{m^{\varphi_{t}}}Y(x_{1})\right)\left(p_{1}^{\varphi_{t}}+q_{1}^{\varphi_{t}}\right)\right\rangle
  _{t}\right|.\label{eq:m-small-term-pq}\\
 &\leq &  \frac{N}{\rho}\left|\left\langle
  p_{1}^{\varphi_{t}}\left(Y(x_{1})\widehat{m^{\varphi_{t}}}-\widehat{m^{\varphi_{t}}}Y(x_{1})\right)p_{1}^{\varphi_{t}}\right\rangle
  _{t}\right|\label{eq:m-small-term-pq-1}
  \\
 & &+ \frac{N}{\rho}\left|\left\langle
  q_{1}^{\varphi_{t}}\left(Y(x_{1})\widehat{m^{\varphi_{t}}}-\widehat{m^{\varphi_{t}}}Y(x_{1})\right)q_{1}^{\varphi_{t}}\right\rangle
  _{t}\right|\label{eq:m-small-term-pq-2}
  \\
 &&+  \frac{2N}{\rho}\left|\left\langle
  p_{1}^{\varphi_{t}}\left(Y(x_{1})\widehat{m^{\varphi_{t}}}-\widehat{m^{\varphi_{t}}}Y(x_{1})\right)q_{1}^{\varphi_{t}}\right\rangle
  _{t}\right|\\
 &= &   \frac{2N}{\rho}\left|\left\langle
  p_{1}^{\varphi_{t}}\left(Y(x_{1})\widehat{m^{\varphi_{t}}}-\widehat{m^{\varphi_{t}}}Y(x_{1})\right)q_{1}^{\varphi_{t}}\right\rangle
  _{t}\right|\,.
  \end{eqnarray}
Here, (\ref{eq:m-small-term-pq-1} ) and (\ref{eq:m-small-term-pq-2}) are seen to be identically zero using (\ref{eq:hat-pq-commutation}) and 
(\ref{eq:hat-Q-commutation-1}) for $j=l=0$, e.g.,
$$p_{1}^{\varphi_{t}}Y(x_{1})\widehat{m^{\varphi_{t}}}p_{1}^{\varphi_{t}}=p_{1}^{\varphi_{t}}Y(x_{1})p_{1}^{\varphi_{t}}\widehat{m^{\varphi_{t}}}=\widehat{m^{\varphi_{t}}}p_{1}^{\varphi_{t}}Y(x_{1})p_{1}^{\varphi_{t}}=p_{1}^{\varphi_{t}}\widehat{m^{\varphi_{t}}}Y(x_{1})p_{1}^{\varphi_{t}}\,.$$
  
Without further
notice we will frequently use that \begin{align}
  \Vert \varphi_t \Vert_2^2 = \Lambda,
  \label{eq:varphi-L2-norm}
\end{align}
as implied by
(\ref{eq:psi-varphi-L2-norm}) and (\ref{eq:varphi-evolution}). 

Next, we apply the commutation relations in
(\ref{eq:hat-pq-commutation}) and after that the pull-through formula in
(\ref{eq:hat-Q-commutation-1}) for $j=0$ and $l=1$ to find
\begin{eqnarray}
  (\ref{eq:m-small-term}) & \leq & \frac{2N}{\rho}\left|\left\langle
  p_{1}^{\varphi_{t}}\left(Y(x_{1})\widehat{m^{\varphi_{t}}}-\widehat{m^{\varphi_{t}}}Y(x_{1})\right)q_{1}^{\varphi_{t}}\right\rangle
  _{t}\right| \label{eq:uno}\\ 
 &=&  \frac{2N}{\rho}\left|\left\langle
  p_{1}^{\varphi_{t}}Y(x_{1})q_{1}^{\varphi_{t}}\widehat{m^{\varphi_{t}}}-\widehat{m^{\varphi_{t}}}p_{1}^{\varphi_{t}}Y(x_{1})q_{1}^{\varphi_{t}}\right\rangle
 _{t}\right| \label{eq:due} \\ 
   & = & \frac{2N}{\rho}\left|\left\langle
  p_{1}^{\varphi_{t}}Y(x_{1})q_{1}^{\varphi_{t}}\left(\widehat{m^{\varphi_{t}}}-\widehat{m_{-1}^{\varphi_{t}}}\right)\right\rangle
  _{t}\right|.\label{eq:tre}
\end{eqnarray} Using the definition in (\ref{eq:def-w}) we find 
\begin{eqnarray} (\ref{eq:m-small-term}) & = &
  \frac{2N}{\rho}\left|\left\langle
  p_{1}^{\varphi_{t}}Y(x_{1})q_{1}^{\varphi_{t}}\left(\sum_{k=0}^{N}m(k)P^{\varphi_t}_k-\sum_{k=1}^{N+1}m(k-1)P^{\varphi_t}_k\right)\right\rangle
  _{t}\right| \label{second-term-1}\\ 
  &=& \frac{2N}{\rho}\left|\left\langle
  p_{1}^{\varphi_{t}}Y(x_{1})q_{1}^{\varphi_{t}}\left(\sum_{k=1}^{N}(m(k)-m(k-1))P^{\varphi_t}_k\right)\right\rangle
  _{t}\right| \label{second-term-2}\\ 
 & = & \frac{2N}{\rho}\left|\left\langle
  p_{1}^{\varphi_{t}}Y(x_{1})q_{1}^{\varphi_{t}}\left(\sum_{1\leq k\leq \rho}
  \frac{P^{\varphi_t}_k}{\rho}\right)\right\rangle
  _{t}\right| \label{second-term-3}\\ 
  & \leq & \frac{N}{\rho}C\left\Vert
  U*\frac{|\varphi_{t}|^{2}}{\Lambda}\right\Vert \frac{1}{\rho}\label{second-term-4}
  \\
   & \leq & \frac{C(t)}{\rho},
  \label{eq:small-term-one-over-rho}
\end{eqnarray} 
where we have used the following ingredients:
\begin{itemize}
\item for the step from (\ref{second-term-1}) to (\ref{second-term-2}) we have used that $m(0)=0$ and $P^{\varphi_t}_{N+1}=0$;
\item for the step from (\ref{second-term-2}) to (\ref{second-term-3}) we have used that $m(k)-m(k-1)=\frac{1}{\rho}$ for $k=1,\dots, \rho$ and $m(k)-m(k-1)=0$ for $k>\rho$; see (\ref{eq:m-weight});
\item for the step from (\ref{second-term-3}) to (\ref{second-term-4}) we have used  the definition of $Y(x_1)$ in (\ref{eq:m-small-term}) and 
    that $P^{\varphi_t}_k$, $1\leq k\leq N$, are pairwise orthogonal projectors;
\item in the last step we have made use of  assumption
(\ref{eq:varphi-Linfty-bound-assumption})
to infer the bound \[ \left\Vert U*|\varphi_{t}|^{2}\right\Vert \leq\left\Vert
  U\right\Vert _{1}\left\Vert \varphi_{t}\right\Vert _{\infty}^{2}\leq
  C(t)\left\Vert U\right\Vert _{1}.  \]
\end{itemize}

 In what comes next we will invoke assumption
  (\ref{eq:varphi-Linfty-bound-assumption}) without further mentioning.\\

A similar technique is used to estimate (\ref{eq:m-main-term}).  Again, we begin
by inserting identity operators, in the form of $\mathrm{id}_{{\cal
H}}=p_{1}^{\varphi_{t}}+q_{1}^{\varphi_{t}}$ and $\mathrm{id}_{{\cal
H}}=p_{2}^{\varphi_{t}}+q_{2}^{\varphi_{t}}$, in order to extract different
types of processes from the interaction which have to be treated separately:
\begin{eqnarray*} (\ref{eq:m-main-term}) & = &
  \frac{N(N-1)}{2\rho}\left|\left\langle
  Z(x_{1},x_{2})\widehat{m^{\varphi_{t}}}-\widehat{m^{\varphi_{t}}}Z(x_{1},x_{2})\right\rangle
  _{t}\right| \\ & = &
  \frac{N(N-1)}{2\rho}\bigg|\bigg<\left(p_{1}^{\varphi_{t}}+q_{1}^{\varphi_{t}}\right)\left(p_{2}^{\varphi_{t}}+q_{2}^{\varphi_{t}}\right)\times
  \\ &  &
  \qquad\times\left(Z(x_{1},x_{2})\widehat{m^{\varphi_{t}}}-\widehat{m^{\varphi_{t}}}Z(x_{1},x_{2})\right)\left(p_{1}^{\varphi_{t}}+q_{1}^{\varphi_{t}}\right)\left(p_{2}^{\varphi_{t}}+q_{2}^{\varphi_{t}}\right)\bigg>_{t}\bigg|.
\end{eqnarray*}
Due to symmetry 
\begin{eqnarray} (\ref{eq:m-main-term}) & \leq&
  \frac{N(N-1)}{2\rho}\bigg|\bigg<p_{1}^{\varphi_{t}}p_{2}^{\varphi_{t}}\left(Z(x_{1},x_{2})\widehat{m^{\varphi_{t}}}-\widehat{m^{\varphi_{t}}}Z(x_{1},x_{2})\right)p_{1}^{\varphi_{t}}p_{2}^{\varphi_{t}}\bigg>_{t}\bigg|
  \\&&+\frac{N(N-1)}{2\rho}\bigg|\bigg<\left(p_{1}^{\varphi_{t}}q_{2}^{\varphi_{t}}+q_{1}^{\varphi_{t}}p_{2}^{\varphi_{t}}\right)\left(Z(x_{1},x_{2})\widehat{m^{\varphi_{t}}}-\widehat{m^{\varphi_{t}}}Z(x_{1},x_{2})\right)\left(p_{1}^{\varphi_{t}}q_{2}^{\varphi_{t}}+q_{1}^{\varphi_{t}}p_{2}^{\varphi_{t}}\right)\bigg>_{t}\bigg|
      \\&&+\frac{N(N-1)}{2\rho}\bigg|\bigg<q_{1}^{\varphi_{t}}q_{2}^{\varphi_{t}}\left(Z(x_{1},x_{2})\widehat{m^{\varphi_{t}}}-\widehat{m^{\varphi_{t}}}Z(x_{1},x_{2})\right)q_{1}^{\varphi_{t}}q_{2}^{\varphi_{t}}\bigg>_{t}\bigg|
          \\&&+\frac{2N(N-1)}{\rho}\bigg|\bigg<p_{1}^{\varphi_{t}}p_{2}^{\varphi_{t}}\left(Z(x_{1},x_{2})\widehat{m^{\varphi_{t}}}-\widehat{m^{\varphi_{t}}}Z(x_{1},x_{2})\right)p_{1}^{\varphi_{t}}q_{2}^{\varphi_{t}}\bigg>_{t}\bigg|
              \\&&+\frac{N(N-1)}{\rho}\bigg|\bigg<p_{1}^{\varphi_{t}}p_{2}^{\varphi_{t}}\left(Z(x_{1},x_{2})\widehat{m^{\varphi_{t}}}-\widehat{m^{\varphi_{t}}}Z(x_{1},x_{2})\right)q_{1}^{\varphi_{t}}q_{2}^{\varphi_{t}}\bigg>_{t}\bigg|
                  \\&&+\frac{2N(N-1)}{\rho}\bigg|\bigg<p_{1}^{\varphi_{t}}q_{2}^{\varphi_{t}}\left(Z(x_{1},x_{2})\widehat{m^{\varphi_{t}}}-\widehat{m^{\varphi_{t}}}Z(x_{1},x_{2})\right)q_{1}^{\varphi_{t}}q_{2}^{\varphi_{t}}\bigg>_{t}\bigg|\;.
\end{eqnarray}

Using the pull-through formula in (\ref{eq:hat-Q-commutation-1bis}) and the commutation relations given in
(\ref{eq:hat-pq-commutation}) we can recast
the last expression to get that 
\begin{align} (\ref{eq:m-main-term}) %\tag{\ref{eq:epsilon-L2-bound}}
   \leq & \quad C\frac{N(N-1)}{\rho}\left|\left\langle
    p_{1}^{\varphi_{t}}p_{2}^{\varphi_{t}}Z(x_{1},x_{2})p_{1}^{\varphi_{t}}p_{2}^{\varphi_{t}}\left(\widehat{m^{\varphi_{t}}}-\widehat{m^{\varphi_{t}}}\right)\right\rangle
    _{t}\right|\label{eq:term-0-1}\\ 
    &   +C\frac{N(N-1)}{\rho}\left|\left\langle
    \left(p_{1}^{\varphi_{t}}q_{2}^{\varphi_{t}}+q_{1}^{\varphi_{t}}p_{2}^{\varphi_{t}}\right)Z(x_{1},x_{2})\left(p_{1}^{\varphi_{t}}q_{2}^{\varphi_{t}}+q_{1}^{\varphi_{t}}p_{2}^{\varphi_{t}}\right)\left(\widehat{m^{\varphi_{t}}}-\widehat{m^{\varphi_{t}}}\right)\right\rangle
    _{t}\right|\label{eq:term-0-2}\\ 
    &   +C\frac{N(N-1)}{\rho}\left|\left\langle
    q_{1}^{\varphi_{t}}q_{2}^{\varphi_{t}}Z(x_{1},x_{2})q_{1}^{\varphi_{t}}q_{2}^{\varphi_{t}}\left(\widehat{m^{\varphi_{t}}}-\widehat{m^{\varphi_{t}}}\right)\right\rangle
    _{t}\right|\label{eq:term-0-3}\\
    &+C\frac{N(N-1)}{\rho}\left|\left\langle
  p_{1}^{\varphi_{t}}p_{2}^{\varphi_{t}}Z(x_{1},x_{2})p_{1}^{\varphi_{t}}q_{2}^{\varphi_{t}}\left(\widehat{m^{\varphi_{t}}}-\widehat{m_{-1}^{\varphi_{t}}}\right)\right\rangle
  _{t}\right|\label{eq:term-I}\\ 
 &  +C\frac{N(N-1)}{\rho}\left|\left\langle
  p_{1}^{\varphi_{t}}p_{2}^{\varphi_{t}}Z(x_{1},x_{2})q_{1}^{\varphi_{t}}q_{2}^{\varphi_{t}}\left(\widehat{m^{\varphi_{t}}}-\widehat{m_{-2}^{\varphi_{t}}}\right)\right\rangle
  _{t}\right|\label{eq:term-II}\\ 
 &  +C\frac{N(N-1)}{\rho}\left|\left\langle
  p_{1}^{\varphi_{t}}q_{2}^{\varphi_{t}}Z(x_{1},x_{2})q_{1}^{\varphi_{t}}q_{2}^{\varphi_{t}}\left(\widehat{m^{\varphi_{t}}}-\widehat{m_{-1}^{\varphi_{t}}}\right)\right\rangle
  _{t}\right|\label{eq:term-III}\;.
\end{align} 
Lines (\ref{eq:term-0-1})-(\ref{eq:term-0-3}) all contain the factor
$\left(\widehat{m^{\varphi_{t}}}-\widehat{m^{\varphi_{t}}}\right)$. Hence, they are identically equal to zero. 
\ed
 In the following we provide
estimates for the terms (\ref{eq:term-I})-(\ref{eq:term-III}). We
use that, for any $f\in L^2$,
\begin{equation}\label{eq:pfp}
  p_{1}^{\varphi_{t}} f(x_1-x_2)p_1^{\varphi_{t}} 
  =
  p_{1}^{\varphi_{t}} \int dx_1 \, \frac{\varphi_t^*(x_1)}{\|\varphi_t\|_2}
  f(x_1-x_2) \frac{\varphi_t(x_1)}{\|\varphi_t\|_2} p_{1}^{\varphi_{t}} 
  =
  \Lambda^{-1}f*|\varphi_{t}|^{2}(x_{2}) p_{1}^{\varphi_{t}}\;,
\end{equation} 
holds so that we can estimate
\begin{equation}\label{eq:fpop}
  \Vert p_{1}^{\varphi_{t}} f(x_1-x_2)\Vert =
  \left\Vert p_{1}^{\varphi_{t}} |f(x_1-x_2)|^2  p_{1}^{\varphi_{t}} \right\Vert^{1/2}
\leq C(t)\Lambda^{-1/2} \Vert f\Vert_2
\end{equation}
and
\begin{equation}\label{eq:fpop2}
\Vert p_{1}^{\varphi_{t}} f(x_1) \Vert =\Vert p_{1}^{\varphi_{t}} 
  |f(x_1)|^2  p_{1}^{\varphi_{t}} \Vert^{1/2}
\leq C(t)\Lambda^{-1/2} \Vert f\Vert_2\;.
\end{equation}
\ed

\noun{Term (\ref{eq:term-I}):} Using (\ref{eq:pfp}), the equation
\emph{\noun{\begin{multline*}
  p_{1}^{\varphi_{t}}p_{2}^{\varphi_{t}}Z(x_{1},x_{2})p_{1}^{\varphi_{t}}q_{2}^{\varphi_{t}}\\
  =p_{1}^{\varphi_{t}}p_{2}^{\varphi_{t}}
  \left(\underbrace{p_{1}^{\varphi_{t}}U(x_{1}-x_{2})p_{1}^{\varphi_{t}}}_{=\Lambda^{-1}U*|\varphi_{t}|^{2}(x_{2})
  p_{1}^{\varphi_{t}}}-U*\frac{|\varphi_{t}|^{2}}{\Lambda}(x_{2})p_{1}^{\varphi_{t}}\right)
  q_{2}^{\varphi_{t}}-p_{1}^{\varphi_{t}}U*\frac{|\varphi_{t}|^{2}}{\Lambda}(x_{1})p_{1}^{\varphi_{t}}\underbrace{p_{2}^{\varphi_{t}}q_{2}^{\varphi_{t}}}_{=0}=0
\end{multline*} }}implies that
\begin{equation}
  (\ref{eq:term-I})=0.\label{eq:term-I-final-bound}
\end{equation}

\noun{Term (\ref{eq:term-III}):}
We need some preliminary results on operator norms and $L^2$-norms that are used in the next steps. By (\ref{eq:fpop}) we can estimate
\begin{align*}
  \Vert p_1^{\varphi_t} U(x_1-x_2) \Vert  
  \leq C(t)\Lambda^{-1/2}.
\end{align*}
Furthermore, using Young's inequality and the conservation of the $L_2$-norm of $\varphi_t$ we get
\begin{align}
\label{eq:u-conv-bound}
 \left\Vert U*\frac{|\varphi_t|^2}{\Lambda} \right\Vert_2 \leq \Vert U \Vert_1 \,\frac{\|\varphi_t^2\|_2}{\Lambda}\leq \Vert U \Vert_1 \, \frac{\Vert
  \varphi_t \Vert_\infty\|\varphi_t\|_2}{\Lambda} \leq  \Vert U \Vert_1 \, \|\frac{\Vert
  \varphi_t \Vert_\infty}{\Lambda^{\frac{1}{2}}}\;.
 \end{align}
Finally, starting from the definition of $Z(x_1,x_2)$ in (\ref{eq:m-main-term}), (\ref{eq:fpop2}) and (\ref{eq:u-conv-bound}) are seen to imply
\begin{equation}\label{eq:u-conv-bound-1}
\Vert p_1^{\varphi_t}Z(x_1,x_2)\Vert \leq \frac{C(t)}{\Lambda^{1/2}}.
\end{equation}

 Next, let $r:\mathbb{Z}\to\mathbb{R}^+_0$ be given by $r(k):=\sqrt{m(k)-m(k-1)}$ 
which is well defined because $m(k)$ is monotone increasing. Relation (\ref{eq:hat-multipilication}) 
implies that
$\left(\widehat{r^{\varphi_{t}}}\right)^2=\widehat{m^{\varphi_{t}}}-\widehat{m_{-1}^{\varphi_{t}}}$. Then we can write
  \begin{eqnarray}   (\ref{eq:term-III})& = &
     C\frac{N(N-1)}{\rho}\left|\left\langle
    p_{1}^{\varphi_{t}}q_{2}^{\varphi_{t}}Z(x_{1},x_{2})q_{1}^{\varphi_{t}}q_{2}^{\varphi_{t}}\left(\widehat{r^{\varphi_{t}}}\right)^2\right\rangle
    _{t}\right|\nonumber  \\
 &=&   C\frac{N(N-1)}{\rho}\left|\left\langle
    p_{1}^{\varphi_{t}}q_{2}^{\varphi_{t}}Z(x_{1},x_{2})q_{1}^{\varphi_{t}}q_{2}^{\varphi_{t}}  \widehat{r^{\varphi_{t}}} \,\widehat{r^{\varphi_{t}}}\right\rangle
    _{t}\right|
\nonumber    
 \end{eqnarray} 
Using the pull-through formula in (\ref{eq:hat-Q-commutation-1bis}) with $j=1$ and $l=2$ we get that 
\noun{
  \begin{eqnarray} (\ref{eq:term-III}) & = &
     C\frac{N(N-1)}{\rho}\left|\left\langle
    \widehat{r_{1}^{\varphi_{t}}}p_{1}^{\varphi_{t}}q_{2}^{\varphi_{t}}Z(x_{1},x_{2})q_{1}^{\varphi_{t}}q_{2}^{\varphi_{t}}\widehat{r^{\varphi_{t}}}\right\rangle
    _{t}\right|\nonumber   \end{eqnarray} }
Finally, using the commutation relations in (\ref{eq:hat-pq-commutation}), the   bounds in (\ref{eq:u-conv-bound-1}), and Schwartz inequality we can estimate
\noun{
  \begin{eqnarray} (\ref{eq:term-III}) 
  & =&  C\frac{N(N-1)}{\rho}\left|\left\langle
    \widehat{r_{1}^{\varphi_{t}}}p_{1}^{\varphi_{t}}q_{2}^{\varphi_{t}}Z(x_{1},x_{2})q_{1}^{\varphi_{t}}q_{2}^{\varphi_{t}}\widehat{r^{\varphi_{t}}}\right\rangle
    _{t}\right|\nonumber\\
   & = & C\frac{N(N-1)}{\rho}\left|\left\langle
    q_{2}^{\varphi_{t}}\widehat{r_{1}^{\varphi_{t}}}p_{1}^{\varphi_{t}}Z(x_{1},x_{2})\widehat{r^{\varphi_{t}}}q_{1}^{\varphi_{t}}q_{2}^{\varphi_{t}}\right\rangle
    _{t}\right|\nonumber \\ 
     & \leq & C\frac{N(N-1)}{\rho}\left\Vert
    \widehat{r_{1}^{\varphi_{t}}}q_{2}^{\varphi_{t}}\Psi_{t}\right\Vert
    _{2}\;\;\;\Big\|p_1^{\varphi_t}Z(x_1, x_2)\Big\|\left\Vert
    \widehat{r^{\varphi_{t}}}q_{1}^{\varphi_{t}}q_{2}^{\varphi_{t}}\Psi_{t}\right\Vert
    _{2}\nonumber\\
    & \leq & C\frac{N(N-1)}{\rho}\left\Vert
    \widehat{r_{1}^{\varphi_{t}}}q_{2}^{\varphi_{t}}\Psi_{t}\right\Vert
    _{2}\;\;\;\frac{C(t)}{\Lambda^{1/2}}\left\Vert
    \widehat{r^{\varphi_{t}}}q_{1}^{\varphi_{t}}q_{2}^{\varphi_{t}}\Psi_{t}\right\Vert
    _{2}\label{eq:term-III-1}\;.
\end{eqnarray} }
  %where we used the notation \[
   % \widehat{\Id_{[a,b]}^{\varphi_{t}}}:=\sum_{k=0}^{N}\Id_{\{a\leq k\leq
   % b\}}P_{k}^{\varphi_{t}}.  \] 
%   Note that
 %   $\widehat{\Id_{[2,N]}^{\varphi_{t}}}\left(\widehat{m^{\varphi_{t}}}-\widehat{m^{\varphi_{t}}}_{-1}\right)$
  %  and
  %  $\widehat{\Id_{[1,N-1]}^{\varphi_{t}}}\left(\widehat{m_{1}^{\varphi_{t}}}-\widehat{m^{\varphi_{t}}}\right)$
   % are a positive operators. 
   Using properties (\ref{eq:old-q1}) and
    (\ref{eq:old-q1-2}) of the counting measures and the definitions
 in (\ref{eq:m-weight}) and (\ref{eq:def-w}) we find that
    \begin{eqnarray}
      \left\Vert
       \widehat{r_{1}^{\varphi_{t}}}q_{2}^{\varphi_{t}}\Psi_{t}\right\Vert
      _{2} & = & \left\Vert
     \widehat{r_{1}^{\varphi_{t}}}\widehat{n^{\varphi_{t}}}\Psi_{t}\right\Vert
      _{2}\nonumber \\ & = & \left\Vert
      \sum_{k=1}^{N-1}\left(\left[m(k+1)-m(k)\right]\frac{k}{N}\right)^{1/2}P_{k}^{\varphi_{t}}\Psi_{t}\right\Vert
      _{2}\nonumber \\ & \leq & \frac{C}{N^{1/2}}\left\Vert
      \sum_{0\leq k<\rho}\left(\frac{k}{\rho}\right)^{1/2}P_{k}^{\varphi_{t}}\Psi_{t}\right\Vert _{2}\nonumber
      \\ & \leq & C\left(\frac{\left\langle
        \widehat{m^{\varphi_{t}}}\right\rangle
        _{t}}{N}\right)^{1/2},\label{eq:term-III-mprimeq_estimate}
      \end{eqnarray} 
      where we have used that $m(k)-m(k-1)=\frac{1}{\rho}$ for $k=1,\dots, \rho$ and $m(k)-m(k-1)=0$ for $k>\rho$.
      Quite similarly, and using (\ref{eq:old-q1-3}), we see that
      \begin{eqnarray}
        \left\Vert
         \widehat{r^{\varphi_{t}}}q_{1}^{\varphi_{t}}q_{2}^{\varphi_{t}}\Psi_{t}\right\Vert
        _{2} & \leq & \sqrt{\frac{N}{N-1}}\left\Vert
        \left(\widehat{m^{\varphi_{t}}}-\widehat{m_{-1}^{\varphi_{t}}}\right)^{1/2}\left(\widehat{n^{\varphi_{t}}}\right)^{2}
        \Psi_{t}\right\Vert
        _{2} \label{eq:-eq:term-I-one}\\ & = & \left\Vert
        \sum_{k=1}^{N}\left(\left[m(k)-m(k-1)\right]\frac{k^{2}}{N^{2}}\right)^{1/2}P_{k}^{\varphi_t}\Psi_{t}\right\Vert _{2} \label{eq:-eq:term-I-two}\\ & \leq &
        \frac{C}{N^{1/2}}\left\Vert
        \sum_{0\leq
        k<\rho}\left(\frac{k}{\rho}\frac{k}{N}\right)^{1/2}P^{\varphi_t}_{k}\Psi_{t}\right\Vert _{2}\nonumber \\ & \leq &
        C\left(\frac{\left\langle \widehat{m^{\varphi_{t}}}\right\rangle
        _{t}}{N}\right)^{1/2}\left(\frac{\rho}{N}\right)^{1/2}.\label{eq:term-III-2}
      \end{eqnarray} 
     As a consequence, going back to (\ref{eq:term-III-1}), the
      bounds (\ref{eq:term-III-mprimeq_estimate}), (\ref{eq:term-III-2}), and
      (\ref{eq:varphi-Linfty-bound}) are seen to imply
      \begin{equation}
        (\ref{eq:term-III})\leq C\frac{N^{2}}{\rho}\left(\frac{\left\langle
          \widehat{m^{\varphi_{t}}}\right\rangle
          _{t}}{N}\right)^{1/2}\frac{C(t)}{\Lambda^{1/2}}\left(\frac{\left\langle
            \widehat{m^{\varphi_{t}}}\right\rangle
            _{t}}{N}\right)^{1/2}\left(\frac{\rho}{N}\right)^{1/2}\leq
            C(t)\left\langle \widehat{m^{\varphi_{t}}}\right\rangle
            _{t}.\label{eq:term-III-final-bound}
          \end{equation}

\noun{Term (\ref{eq:term-II}): }Again, we write
$\left(\widehat{m^{\varphi_{t}}}-\widehat{m_{-2}^{\varphi_{t}}}\right)$ as the 
square of its square root and we use the pull-through formula in (\ref{eq:hat-Q-commutation-1bis}) for $j=0$ and $l=2$:

\begin{eqnarray} (\ref{eq:term-II}) & = &
  C\frac{N(N-1)}{\rho}\left|\left\langle
  p_{1}^{\varphi_{t}}p_{2}^{\varphi_{t}}Z(x_{1},x_{2})q_{1}^{\varphi_{t}}q_{2}^{\varphi_{t}}\left(\widehat{m^{\varphi_{t}}}-\widehat{m_{-2}^{\varphi_{t}}}\right)\right\rangle
  _{t}\right|\nonumber \\ & = &
  C\frac{N(N-1)}{\rho}\bigg|\bigg<\left(\widehat{m_{2}^{\varphi_{t}}}-\widehat{m^{\varphi_{t}}}\right)^{1/2}p_{1}^{\varphi_{t}}p_{2}^{\varphi_{t}}\times\nonumber\\
  &  & \qquad\qquad\qquad\qquad\qquad\qquad\times
  Z(x_{1},x_{2})q_{1}^{\varphi_{t}}q_{2}^{\varphi_{t}}\left(\widehat{m^{\varphi_{t}}}-\widehat{m_{-2}^{\varphi_{t}}}\right)^{1/2}\bigg>_{t}.\nonumber
\end{eqnarray} 
Next, we use the symmetry in the bosonic degrees of freedom of the wave function $\Psi_t$ and of the counting measures to arrive at
\begin{eqnarray} (\ref{eq:term-II}) & = &
  C\frac{N}{\rho}\bigg|\bigg<\left(\widehat{m_{2}^{\varphi_{t}}}-\widehat{m^{\varphi_{t}}}\right)^{1/2}
  p_{1}^{\varphi_{t}}\sum_{k=2}^{N}p_{k}^{\varphi_{t}}\times\nonumber\\
  &  & \qquad\qquad\qquad\qquad\times
  Z(x_{1},x_{k})q_{k}^{\varphi_{t}}q_{1}^{\varphi_{t}}
  \left(\widehat{m^{\varphi_{t}}}-\widehat{m_{-2}^{\varphi_{t}}}\right)^{1/2}\bigg>_{t},\nonumber
\end{eqnarray} 
and finally use Schwarz inequality\noun{\begin{eqnarray}
  (\ref{eq:term-II}) & \leq &
  C\frac{N}{\rho}\left\Vert
  \sum_{k=2}^{N}q_{k}^{\varphi_{t}}Z(x_{1},x_{k})p_{k}^{\varphi_{t}}p_{1}^{\varphi_{t}}
  \left(\widehat{m^{\varphi_{t}}_{2}}-\widehat{m^{\varphi_{t}}}\right)^{1/2}\Psi_{t}\right\Vert
  _{2}\times\label{eq:term-II-L2norm}\\ &  &
  \qquad\qquad\qquad\qquad\times\left\Vert
  q_{1}^{\varphi_{t}}\left(\widehat{m^{\varphi_{t}}}-\widehat{m_{-2}^{\varphi_{t}}}\right)^{1/2}\Psi_{t}\right\Vert
  _{2}.\label{eq:term-II-1}
\end{eqnarray} }
 Furthermore, a computation similar to the one leading to (\ref{eq:term-III-mprimeq_estimate}) shows that
\begin{eqnarray}
  (\ref{eq:term-II-1})=\left\Vert
  q_{1}^{\varphi_{t}}\left(\widehat{m^{\varphi_{t}}}-\widehat{m_{-2}^{\varphi_{t}}}\right)^{1/2}\Psi_{t}\right\Vert
  _{2} & \leq & C\left(\frac{\left\langle
    \widehat{m^{\varphi_{t}}}\right\rangle _{t}}{N}\right)^{1/2}.\label{estim.qmm}
  \end{eqnarray} 
  Next, we estimate the square of the $L^{2}-$ norm in
  (\ref{eq:term-II-L2norm}).  In order to obtain a good estimate, we rewrite this
  expression according to
  \begin{align} & \left\Vert
    \sum_{k=2}^{N}q_{k}^{\varphi_{t}}Z(x_{1},x_{k})p_{k}^{\varphi_{t}}p_{1}^{\varphi_{t}}
    \left(\widehat{m_{2}^{\varphi_{t}}}-\widehat{m^{\varphi_{t}}}\right)^{1/2}\Psi_{t}\right\Vert
    _{2}^{2}\label{eq:term-II-1.5}\\ = &
    \sum_{k=2}^{N}\bigg\langle\left(\widehat{m_{2}^{\varphi_{t}}}-\widehat{m^{\varphi_{t}}}\right)^{1/2}p_{1}^{\varphi_{t}}p_{k}^{\varphi_{t}}Z(x_{1},x_{k})q_{k}^{\varphi_{t}}\times\nonumber
    \\ & \qquad\qquad\qquad\qquad\qquad\qquad\times
    Z(x_{1},x_{k})p_{k}^{\varphi_{t}}p_{1}^{\varphi_{t}}
    \left(\widehat{m_{2}^{\varphi_{t}}}-\widehat{m^{\varphi_{t}}}\right)^{1/2}\bigg\rangle_{t}\nonumber
    \\ & +\sum_{j,k=2,j\neq
    k}^{N}\bigg\langle\left(\widehat{m_{2}^{\varphi_{t}}}-\widehat{m^{\varphi_{t}}}\right)^{1/2}p_{1}^{\varphi_{t}}p_{k}^{\varphi_{t}}Z(x_{1},x_{k})q_{k}^{\varphi_{t}}q_{j}^{\varphi_{t}}\times\nonumber
    \\ & \qquad\qquad\qquad\qquad\qquad\qquad\times
    Z(x_{1},x_{j})p_{j}^{\varphi_{t}}p_{1}^{\varphi_{t}}
    \left(\widehat{m_{2}^{\varphi_{t}}}-\widehat{m^{\varphi_{t}}}\right)^{1/2}\bigg\rangle_{t}.\nonumber
  \end{align} 
  Furthermore, we exploit the symmetry in the bosonic degrees of freedom
  and split the summations into diagonal- and off-diagonal parts, with the
  result that 
  \begin{align} 
      (\ref{eq:term-II-1.5})\leq &
    N\bigg\langle\left(\widehat{m_{2}^{\varphi_{t}}}-\widehat{m^{\varphi_{t}}}\right)^{1/2}p_{1}^{\varphi_{t}}p_{2}^{\varphi_{t}}Z(x_{1},x_{2})q_{2}^{\varphi_{t}}
    Z(x_{1},x_{2})p_{2}^{\varphi_{t}}p_{1}^{\varphi_{t}}\left(\widehat{m_{2}^{\varphi_{t}}}-\widehat{m^{\varphi_{t}}}\right)^{1/2}\bigg\rangle_{t}
    \label{eq:term-II-2-2}\\
    &
    +N^{2}\bigg\langle\left(\widehat{m_{2}^{\varphi_{t}}}-\widehat{m^{\varphi_{t}}}\right)^{1/2}p_{1}^{\varphi_{t}}p_{2}^{\varphi_{t}}Z(x_{1},x_{2})q_{2}^{\varphi_{t}}q_{3}^{\varphi_{t}}
    Z(x_{1},x_{3})p_{3}^{\varphi_{t}}p_{1}^{\varphi_{t}}\left(\widehat{m_{2}^{\varphi_{t}}}-\widehat{m^{\varphi_{t}}}\right)^{1/2}\bigg\rangle_{t}.\label{eq:term-II-3-2}
  \end{align} 
 % Thanks to the bound (\ref{eq:u-conv-bound}) and
 % \begin{eqnarray} &  & \left\Vert
 %   U(x_{1}-x_{2})p_{1}^{\varphi_{t}}p_{2}^{\varphi_{t}}\right\Vert
 %   \nonumber\\ &\leq & \left\Vert
 %   U(x_{1}-x_{2})p_{1}^{\varphi_{t}}\right\Vert\;\Vert p_{2}^{\varphi_{t}}\Vert
 %   \leq
 %   \frac{C(t)}{\Lambda^{1/2}}.\label{eq:Z12p1p2-bound}
 % \end{eqnarray}
   Using (\ref{eq:u-conv-bound}) we find
   \begin{align}
      \left\Vert
    Z(x_{1},x_{2})p_{1}^{\varphi_{t}}p_{2}^{\varphi_{t}}\right\Vert
    \leq  \left\Vert
    Z(x_{1},x_{2})p_{1}^{\varphi_{t}} \right\Vert \;  \Vert p_{2}^{\varphi_{t}}\Vert\leq
    \frac{C(t)}{\Lambda^{1/2}},\label{eq:Z12p1p2-bound}
  \end{align}
  We observe also that, using the definitions in (\ref{eq:def-w})
  and (\ref{eq:m-weight}), for any $\Psi$ with $\|\Psi\|_2=1$ one has
 \begin{eqnarray}
  \left\Vert
  \left(\widehat{m^{\varphi_{t}}}-\widehat{m_{-2}^{\varphi_{t}}}\right)^{1/2}\Psi \right\Vert^2_{2} & = &  \left\langle \Psi,   (\widehat{m^{\varphi_{t}}}-\widehat{m_{-2}^{\varphi_{t}}}) \Psi\right\rangle  \\
  &= &   \left\langle \Psi, \left(\sum_{k=0}^{N}m(k)P_{k}^{\varphi_{t}}-
  \sum_{k=2}^{N+2}m(k-2)P_{k}^{\varphi_{t}} \right)\Psi \right\rangle \\
  &\leq  & C \left\langle
  \Psi,\left(\sum_{k=0}^{N}\frac{1}{\rho}P_{k}^{\varphi_{t}} \right)\Psi \right\rangle \\
  &\leq & \frac{C}{\rho}\label{eq:norm-difference-counting-measures}
  \end{eqnarray} 
 because $\sum_{k=0}^{N}P_{k}^{\varphi_{t}}$ coincides with the identity operator.
Therefore, using (\ref{eq:Z12p1p2-bound}) and (\ref{eq:norm-difference-counting-measures}), we can estimate the diagonal terms by
  \begin{eqnarray}
    (\ref{eq:term-II-2-2}) & \leq & CN\left\Vert
    \left(\widehat{m^{\varphi_{t}}_{2}}-\widehat{m^{\varphi_{t}}}\right)^{1/2}\right\Vert
    ^{2}\left\Vert
    Z(x_{1},x_{2})p_{2}^{\varphi_{t}}p_{1}^{\varphi_{t}}\right\Vert
    _{2}^{2}\nonumber \\ & \leq & CN\left\Vert
   \left(\widehat{m^{\varphi_{t}}_{2}}-\widehat{m^{\varphi_{t}}}\right)^{1/2}\right\Vert
    ^{2} \frac{C(t)}{\Lambda} \nonumber \\ & \leq &
    C(t).\label{eq:term-II-4}
  \end{eqnarray}  For the
  off-diagonal terms we find
  \begin{eqnarray}
    (\ref{eq:term-II-3-2}) & = &
    N^{2}\bigg<\left(\widehat{m_{2}^{\varphi_{t}}}-\widehat{m^{\varphi_{t}}}\right)^{1/2}q_{3}^{\varphi_{t}}p_{1}^{\varphi_{t}}p_{2}^{\varphi_{t}}Z(x_{1},x_{2})\times
    \nonumber
    \\
    &  & \qquad\qquad\qquad\qquad\times
    Z(x_{1},x_{3})p_{3}^{\varphi_{t}}p_{1}^{\varphi_{t}}q_{2}^{\varphi_{t}}
    \left(\widehat{m_{2}^{\varphi_{t}}}-\widehat{m^{\varphi_{t}}}\right)^{1/2}\bigg>_{t}
    \nonumber
    \\
    & \leq & N^{2}\left\Vert
    q_{3}^{\varphi_{t}}\left(\widehat{m_{2}^{\varphi_{t}}}-\widehat{m^{\varphi_{t}}}\right)^{1/2}\Psi_{t}\right\Vert
    _{2}
    \nonumber
    \\
    & & \qquad\qquad\times
    \left\Vert
    p_{1}^{\varphi_{t}}p_{2}^{\varphi_{t}}Z(x_{1},x_{2})    Z(x_{1},x_{3})p_{3}^{\varphi_{t}}p_{1}^{\varphi_{t}}\right\Vert \times
    \label{eq:Z-term}
    \\ &
    & \qquad\qquad\qquad\qquad\times\left\Vert
    q_{2}^{\varphi_{t}}\left(\widehat{m_{2}^{\varphi_{t}}}-\widehat{m^{\varphi_{t}}}\right)^{1/2}\Psi_{t}\right\Vert
    _{2}.
    \nonumber
  \end{eqnarray}
  Here it becomes apparent why the splitting of (\ref{eq:term-II-1.5}) into a
  diagonal- and an off-diagonal part is necessary: A rough estimate of the term
  (\ref{eq:Z-term}), using (\ref{eq:term-II-1.5}), leads to a
  $\Lambda^{-1}-$decay. As it will turn out in
  (\ref{eq:term-II-final-bound}), this decay is not good enough. Fortunately,
  the situation is better than that, as
  the following analysis shows. First, we note that for non-negative $U$ one
  finds
  \begin{align}
    & \left\Vert
      p_{1}^{\varphi_{t}}p_{2}^{\varphi_{t}}U(x_{1}-x_{2})  U(x_{1}-x_{3})p_{3}^{\varphi_{t}}p_{1}^{\varphi_{t}}
    \right\Vert
    \nonumber
    \\
    & = \left\Vert
      p_{1}^{\varphi_{t}}p_{2}^{\varphi_{t}}\sqrt{U(x_{1}-x_{3})}\sqrt{U(x_{1}-x_{2})}
      \sqrt{U(x_{1}-x_{3})}\sqrt{U(x_{1}-x_{2})}p_{3}^{\varphi_{t}}p_{1}^{\varphi_{t}}
    \right\Vert 
    \nonumber
    \\
    & = \left\Vert
      p_{1}^{\varphi_{t}}\sqrt{U(x_{1}-x_{3})}p_{2}^{\varphi_{t}}\sqrt{U(x_{1}-x_{2})}
      \sqrt{U(x_{1}-x_{3})}p_{3}^{\varphi_{t}}\sqrt{U(x_{1}-x_{2})}p_{1}^{\varphi_{t}}
    \right\Vert 
    \nonumber
    \\
    & \leq \left\Vert p_{1}^{\varphi_{t}}\sqrt{U(x_{1}-x_{3})} \right\Vert^4_2 
    \leq \frac{C(t)}{\Lambda^2}\Vert U\Vert_1^2,
    \label{eq:ppUUpp-term}
  \end{align}
  where in the last step we have used (\ref{eq:fpop}) and $\Vert \sqrt
  U\Vert_2^2=\Vert U\Vert_1$. Choosing the branch cut of the square root
  conveniently one observes that the formula holds for general $U$.
  
  Second, due to (\ref{eq:fpop}) and (\ref{eq:fpop2})
  \begin{align*}
    \Vert p_j p_k U(x_j - x_k) \Vert
    \leq \Vert p_k U(x_j - x_k) \Vert
   \leq  \frac{C(t)}{\Lambda^{1/2}},
    \\
  \left\Vert p_j p_k \frac{|\varphi_t|^2}{\Lambda} (x_j) \right\Vert \leq
   \left\Vert p_j \frac{|\varphi_t|^2}{\Lambda} (x_j) \right\Vert \leq  \frac{C(t)}{\Lambda^{3/2}}
  \end{align*}
 that together with (\ref{eq:ppUUpp-term}) imply 
  \begin{equation}\label{estim.ppZZpp}
    \Vert p_1 p_2 Z(x_1,x_2) Z(x_1,x_3) p_1 p_3 \Vert
    \leq
    \frac{C(t)}{\Lambda^2}.
  \end{equation}
 Analogously to (\ref{estim.qmm}),  one can prove that
  \begin{equation}
 \left\Vert
    q_{k}^{\varphi_{t}}\left(\widehat{m_{2}^{\varphi_{t}}}-\widehat{m^{\varphi_{t}}}\right)^{1/2}\Psi_{t}\right\Vert_{2}\leq C\left(\frac{\left\langle
    \widehat{m^{\varphi_{t}}}\right\rangle _{t}}{N}\right)^{1/2}.\label{estim.qmm-bis}
  \end{equation}
 Hence, invoking the estimates in (\ref{estim.qmm-bis}) and (\ref{estim.ppZZpp}), we arrive at
  \begin{eqnarray} 
    (\ref{eq:term-II-3-2}) & \leq &
    N^{2}\left\Vert
    q_{3}^{\varphi_{t}}\left(\widehat{m_{2}^{\varphi_{t}}}-\widehat{m^{\varphi_{t}}}\right)^{1/2}\Psi_{t}\right\Vert
    _{2}
   \frac{C(t)}{\Lambda^{2}}
    \left\Vert
    q_{2}^{\varphi_{t}}\left(\widehat{m_{2}^{\varphi_{t}}}-\widehat{m^{\varphi_{t}}}\right)^{1/2}\Psi_{t}\right\Vert
    _{2}\nonumber \\ & \leq & C(t)N^{2}\left(\frac{\left\langle
      \widehat{m^{\varphi_{t}}}\right\rangle
      _{t}}{N}\right)^{1/2}\frac{1}{\Lambda^{2}}\left(\frac{\left\langle
        \widehat{m^{\varphi_{t}}}\right\rangle _{t}}{N}\right)^{1/2}\nonumber
        \\ & \leq & C(t)N\frac{1}{\Lambda^{2}}\left\langle
        \widehat{m^{\varphi_{t}}}\right\rangle \label{eq:term-II-3-2-1}\;.
      \end{eqnarray} %using estimates analogous to (\ref{eq:Z12p1p2-bound}) and (\ref{eq:term-III-mprimeq_estimate}). 
       Thus
      \begin{eqnarray} (\ref{eq:term-II}) & \leq &
        C(t)\frac{N}{\rho}\sqrt{(\ref{eq:term-II-1.5})}\times(\ref{eq:term-II-1})\nonumber
        \\ & \leq &
        C(t)\frac{N}{\rho}\sqrt{(\ref{eq:term-II-4})+(\ref{eq:term-II-3-2-1})}\times(\ref{eq:term-II-1})\nonumber
        \\ & \leq &
        C(t)\frac{N}{\rho}\left(1+N\frac{1}{\Lambda^{2}}\left\langle
        \widehat{m^{\varphi_{t}}}\right\rangle
        \right)^{1/2}\left(\frac{\left\langle
          \widehat{m^{\varphi_{t}}}\right\rangle
          _{t}}{N}\right)^{1/2}\nonumber \\ & \leq &
          C(t)\left(\frac{\Lambda}{\rho}+\left\langle
          \widehat{m^{\varphi_{t}}}\right\rangle
          _{t}\right).\label{eq:term-II-final-bound}
        \end{eqnarray} The bounds (\ref{eq:small-term-one-over-rho}),
        (\ref{eq:term-I-final-bound}), (\ref{eq:term-III-final-bound}), and
        (\ref{eq:term-II-final-bound}) yield
        \begin{eqnarray*}
          \left|\frac{d}{dt}\left\langle
          \widehat{m^{\varphi_{t}}}\right\rangle_t \right| & \leq &
          (\ref{eq:m-main-term})+(\ref{eq:m-small-term})\\
          & \leq & 
          C(t)\left(\left\langle\widehat{m^{\varphi_{t}}}\right\rangle_t +\frac{1+\Lambda}{\rho}\right).
        \end{eqnarray*}
        Finally, for any initial wave function $\Psi_0$ with the property that
        \begin{align}
\label{eq:ini-gronwall}
          \left\langle
          \widehat{m^{\varphi_{t}}}\right\rangle_t\bigg|_{t=0}
          \leq C\frac{\Lambda}{\rho},
        \end{align}
        Grönwall's Lemma yields the claim
        (\ref{eq:m-bound}).
        According to Condition~\ref{def:initial-conditions} we have
        $\left\langle\widehat{m^{\varphi_{t}}}\right\rangle\big|_{t=0}=0$ so
        that the bound (\ref{eq:ini-gronwall}) is
        fulfilled which
        concludes the proof of Lemma \ref{lem:m-bound}. \ed
      \end{proof}
\begin{rem}\label{rem:technique}
  (i) The proof can be extended to more general initial
  conditions than those specified in Condition~\ref{def:initial-conditions},
  namely to all wave functions, $\Psi_0$, for which the bound
  (\ref{eq:ini-gronwall}) holds.
  (ii) Note that (\ref{eq:term-II-4}) is the crucial estimate
  that determines the right-hand side of claim (\ref{eq:m-bound}). It follows from the 
  auxiliary  bound (\ref{eq:Z12p1p2-bound}), which cannot be
  improved without new insights into the dynamics of Bose gases.
  (iii) Provided $\|\varphi_t\|_\infty$ is bounded, the proof holds also for
  attractive potentials.
\end{rem}

%%%%%%%%%%%%%%%%%%%%%%%%%%%%%%%%%%%%%%%%%%%%%%%%%%%%%%%%%%%%%%%%%%%

\subsection{Proofs of Theorem~\ref{thm:rho-diff-brutal},
Theorem~\ref{thm:probM}, and Theorem~\ref{thm:improved}} \label{sec:proofs-main-results}

Lemma~\ref{lem:m-bound} immediately implies that, for a suitable class of initial wave functions,
 the microscopic and the macroscopic descriptions of the dynamics are close to one another, 
which is the content of our main results, Theorems \ref{thm:rho-diff-brutal}, \ref{thm:probM} and
Theorem~\ref{thm:improved}. 
%\begin{thm}[i.e., Theorem~\ref{thm:probM} and
%Theorem~\ref{thm:main-epsilon}]
%  \label{thm:main}
%  Let $U\in{\cal C}^\infty_c(\mathbb R^3,\mathbb R^+_0)$ be a
%  repulsive potential, $0\leq T< T_{\max}$, and $\Lambda$ sufficiently large.  
%  Then there are
%  $C_{1},C_{2}\in\mathrm{Bounds}$ such that (see (\ref{eq:PtotM}))
%  \begin{align*}
%      \widetilde\Psi_t=R\Psi_t , \qquad
%      \widetilde\rho_{t}^{(\mathrm{micro})}:=q_{t}^{(\mathrm{ref})}\,
%      tr{}_{x_{2},\ldots,x_{N}}\left|\Lambda^{1/2}\widetilde{\Psi}_{t}\right\rangle
%      \left\langle
%      \Lambda^{1/2}\widetilde{\Psi}_{t}\right|q_{t}^{(\mathrm{ref})},
%      \qquad\text{and}\qquad
%      \rho_{t}^{(\mathrm{macro})}:=\left|\epsilon_{t}\right\rangle \left\langle
%  \epsilon_{t}\right|,
%\end{align*} 
%fulfill the following inequalities for times $0\leq t\leq T$:
%\begin{enumerate}[label=(\roman*)]
%  \item  $\left\Vert \Psi_{t}-\widetilde{\Psi}_{t}\right\Vert _{2}\leq
%  C_{1}(t)\sqrt{\frac{\Lambda}{\rho}}$;
%  \item $\left\Vert
%    \widetilde\rho_{t}^{(\mathrm{micro})}-\rho_{t}^{(\mathrm{macro})}\right\Vert \leq
%    C_{2}(t)\sqrt{\frac{\Lambda}{\rho}}$.
%\end{enumerate}
%\end{thm}
Since we assume that the potential $U$ is repulsive,
%Corollary~\ref{lem:phi-propagationa} and
Corollary~\ref{lem:propagation-varphi} and
Lemma~\ref{lem:propagation-epsilon} of Section~\ref{sec:propagationestimates}
below
provide the following estimates:
 There are $C_1,C_2,C_3\in\mathrm{Bounds}$ such that 
\begin{align} \Vert\varphi_t\Vert_\infty&\leq
  C_1(t),\label{eq:varphi-Linfty-bound} \\\label{eq:epsilon-L2-bound} \Vert
  \epsilon_t \Vert_2 & \leq C_2(t), \\ \label{eq:p-epsilon-L2-bound}\Vert
  p^{(\mathrm{ref})}_t \epsilon_t \Vert_2 & \leq
  \frac{C_3(t)}{\Lambda^{1/2}},
\end{align}
  for all $t\geq 0$ provided $\Lambda$ is sufficiently large.
  We temporarily assume the bounds in (\ref{eq:varphi-Linfty-bound}),
(\ref{eq:epsilon-L2-bound}) and (\ref{eq:p-epsilon-L2-bound}) and proceed to
proving our second and third main result; the first main results,
Theorem~\ref{thm:rho-diff-brutal},
will latter be proven
as a corollary.\ed
\begin{proof}[Proof of Theorem~\ref{thm:probM}]
  Because of (\ref{eq:varphi-Linfty-bound}), Lemma~\ref{lem:m-bound}
  implies that
  \begin{equation} \left|\left\langle
    \widehat{m^{\varphi_{t}}}\right\rangle _{t}\right|\leq
    C(t)\frac{\Lambda}{\rho}.
  \end{equation}
  In (\ref{eq:PtotM}) we have introduced a wave function $\tilde{\Psi}_{t}$ by setting
  \begin{align*}
    \widetilde{\Psi}_{t}:=\sum_{0\leq k\leq
    \rho}P_{k}^{\varphi_{t}}\Psi_{t}.
    \end{align*}
  Using the definition of the counting measure $m(k)$, see (\ref{eq:m-weight}), we see that
      \begin{eqnarray*} \left\Vert
        \Psi_{t}-\widetilde{\Psi}_{t}\right\Vert _{2}^{2} & = &
        \sum_{\rho<k\leq N}\left\Vert P_{k}^{\varphi_{t}}\Psi_{t}\right\Vert
        _{2}^{2}=\sum_{\rho<k\leq N}m(k)\left\Vert
        P_{k}^{\varphi_{t}}\Psi_{t}\right\Vert _{2}^{2}\\ & \leq &
        \sum_{k=0}^{N}m(k)\left\Vert P_{k}^{\varphi_{t}}\Psi_{t}\right\Vert
        _{2}^{2}=\left\langle
        \Psi_{t},\widehat{m^{\varphi_{t}}}\Psi_{t}\right\rangle.
      \end{eqnarray*} 
      By \lemref{m-bound}, there is a $C\in\mathrm{Bounds}$
      such that \[ \left\Vert \Psi_{t}-\widetilde{\Psi}_{t}\right\Vert
    _{2}\leq C(t)\sqrt{\frac{\Lambda}{\rho}}, \] 
    which concludes the proof of Theorem \ref{thm:probM}.
\end{proof}

\begin{proof}[Proof of
    Theorem~\ref{thm:improved}]
 Notice that  $\left\Vert
 P_{k}^{\varphi_{t}}\Psi_{t}\right\Vert\geq\left\Vert
 P_{k}^{\varphi_{t}}\widetilde{\Psi}_{t}\right\Vert$, for any $0\leq k\leq N$.
 This fact and definition (\ref{eq:def-w}) yield
 \begin{align}\left\langle
     \widehat{m^{\varphi_{t}}}\right\rangle_t
      =\sum_{0\leq k\leq N}m(k)\left\langle
      \Psi_{t},P_{k}^{\varphi_{t}}\Psi_{t}\right\rangle \geq \sum_{0\leq k\leq
      N}m(k)\left\langle
      \widetilde{\Psi}_{t},P_{k}^{\varphi_{t}}\widetilde{\Psi}_{t}\right\rangle
      =&\Lambda\sum_{0\leq k\leq N}\frac{k}{N}\left\langle
      \widetilde{\Psi}_{t},P_{k}^{\varphi_{t}}\widetilde{\Psi}_{t}\right\rangle
      \nonumber\\
      &-\sum_{0\leq
      k\leq N}\left(\frac{k}{\rho}-m(k)\right)\left\langle
      \widetilde{\Psi}_{t},P_{k}^{\varphi_{t}}\widetilde{\Psi}_{t}\right\rangle
      \label{eq:braket-m-zero}.
  \end{align} 
  Since $ P_{k}^{\varphi_{t}}\widetilde{\Psi}_{t}= 0$,  for $k>\rho$,
  and $\frac{k}{\rho}-m(k)=0$, for $0\leq k\leq \rho$ -- see (\ref{eq:m-weight})
  -- term (\ref{eq:braket-m-zero}) vanishes.  Using (\ref{eq:old-q1}) and the symmetry of bosonic wave functions, we get
       \[
        \Lambda\sum_{0\leq k\leq N}\frac{k}{N}\left\langle
                          \widetilde{\Psi}_{t},P_{k}^{\varphi_{t}}\widetilde{\Psi}_{t}\right\rangle
        =\Lambda\sum_{0\leq k\leq N}\frac{1}{N}\left\langle
                    \widetilde{\Psi}_{t},q_{k}^{\varphi_{t}}\widetilde{\Psi}_{t}\right\rangle=\Lambda\left\langle
        \widetilde{\Psi}_{t},q_{1}^{\varphi_{t}}\widetilde{\Psi}_{t}\right\rangle.\] \ed
        This implies that
        \begin{equation} \Lambda\left\langle
          \widetilde{\Psi}_{t},q_{1}^{\varphi_{t}}\widetilde{\Psi}_{t}\right\rangle
                \leq\left\langle \widehat{m^{\varphi_{t}}}\right\rangle_t.
          \label{eq:q-and-m-relation}
        \end{equation} Furthermore, upon
        inserting identity operators, in the form of 
        $\mathrm{id}_{{\cal H}}=p_{1}^{\varphi_{t}}+q_{1}^{\varphi_{t}}$, 
        the difference of the density matrices can be bounded by
        \begin{eqnarray} \left\Vert
          \widetilde\rho_{t}^{(\mathrm{micro})}-\rho_{t}^{(\mathrm{macro})}\right\Vert
          & \equiv & \left\Vert \Lambda q_{t}^{(\mathrm{ref})}\,
          tr{}_{x_{2},\ldots,x_{N}}\left|\widetilde{\Psi}_{t}\right\rangle
          \left\langle
          \widetilde{\Psi}_{t}\right|q_{t}^{(\mathrm{ref})}-\left|\epsilon_{t}\right\rangle
          \left\langle \epsilon_{t}\right|\right\Vert \nonumber \\ & \leq &
          \left\Vert \Lambda q_{t}^{(\mathrm{ref})}\,
          tr{}_{x_{2},\ldots,x_{N}}\left[p_{1}^{\varphi_{t}}\left|\widetilde{\Psi}_{t}\right\rangle
          \left\langle
        \widetilde{\Psi}_{t}\right|p_{1}^{\varphi_{t}}\right]q_{t}^{(\mathrm{ref})}-\left|\epsilon_{t}\right\rangle
        \left\langle \epsilon_{t}\right|\right\Vert \label{eq:density_PP}\\ &
        & +2\Lambda \left\Vert q_{t}^{(\mathrm{ref})}\,
        tr{}_{x_{2},\ldots,x_{N}}\left[p_{1}^{\varphi_{t}}\left|\widetilde{\Psi}_{t}\right\rangle
        \left\langle
      \widetilde{\Psi}_{t}\right|q_{1}^{\varphi_{t}}\right]q_{t}^{(\mathrm{ref})}\right\Vert
      \label{eq:density_PQ}\\ &  & +\Lambda \left\Vert
      q_{t}^{(\mathrm{ref})}\,
      tr{}_{x_{2},\ldots,x_{N}}\left[q_{1}^{\varphi_{t}}\left|\widetilde{\Psi}_{t}\right\rangle
      \left\langle
    \widetilde{\Psi}_{t}\right|q_{1}^{\varphi_{t}}\right]q_{t}^{(\mathrm{ref})}\right\Vert.\label{eq:density_QQ}
  \end{eqnarray} 
  In order to estimate (\ref{eq:density_PP}), we shall need the preliminary bound
  \begin{align} &
    \left\Vert
    q_{t}^{(\mathrm{ref})}\left|\varphi_{t}\right\rangle
    \left\langle
    \varphi_{t}\right|q_{t}^{(\mathrm{ref})}-\left|\epsilon_{t}\right\rangle
    \left\langle \epsilon_{t}\right|\right\Vert \nonumber \\ = & \left\Vert
    q_{t}^{(\mathrm{ref})}\left|\phi_{t}^{(\mathrm{ref})}+\epsilon_{t}\right\rangle
    \left\langle
    \phi_{t}^{(\mathrm{ref})}+\epsilon_{t}\right|q_{t}^{(\mathrm{ref})}-\left|\epsilon_{t}\right\rangle
    \left\langle \epsilon_{t}\right|\right\Vert \nonumber \\ \leq & \left\Vert
    q_{t}^{(\mathrm{ref})}\left|\epsilon_{t}\right\rangle \left\langle
    \epsilon_{t}\right|q_{t}^{(\mathrm{ref})}-\left|\epsilon_{t}\right\rangle
    \left\langle \epsilon_{t}\right|\right\Vert \nonumber \\ \leq & \left\Vert
    p_{t}^{(\mathrm{ref})}\left|\epsilon_{t}\right\rangle \left\langle
    \epsilon_{t}\right|p_{t}^{(\mathrm{ref})}\right\Vert +2\left\Vert
    p_{t}^{(\mathrm{ref})}\left|\epsilon_{t}\right\rangle \left\langle
    \epsilon_{t}\right|\right\Vert \nonumber \\ \leq &
    \frac{C(t)^{2}}{\Lambda }+\frac{C(t)}{\Lambda ^{1/2}},\label{eq:phi-eps-dist}
  \end{align} where, in the last two lines, we have used
  (\ref{eq:epsilon-L2-bound}) and (\ref{eq:p-epsilon-L2-bound}) of 
  Lemma~\ref{lem:propagation-epsilon}, (see Subsection \ref{sec:propagation-epsilon}). 
  We are now prepared to provide the estimates of 
  terms (\ref{eq:density_PP}), (\ref{eq:density_PQ}) and (\ref{eq:density_QQ}):\\

  \noun{Term (\ref{eq:density_PP})}: Fubini's Theorem justifies the identity
\[
  \left\langle
  \frac{\varphi_{t}}{\Lambda^{1/2}}\right|tr{}_{x_{2},\ldots,x_{N}}\left|\Psi_{t}\right\rangle
  \left\langle
\Psi_{t}\right|\left|\frac{\varphi_{t}}{\Lambda^{1/2}}\right\rangle
  =
  \left\langle \Psi_{t},\left|\frac{\varphi_{t}}{\Lambda^{1/2}}\right\rangle \left\langle
  \frac{\varphi_{t}}{\Lambda^{1/2}}\right|\Psi_{t}\right\rangle
  =
  1-\left\langle \Psi_{t},q_{1}^{\varphi_{t}}\Psi_{t} \right\rangle.
\]
The right side can be bounded according to \[ \left|1-\left\langle
\Psi_{t},q_{1}^{\varphi_{t}}\Psi_{t}\right\rangle
\right|\leq1+\frac{\left\langle \widehat{m^{\varphi_{t}}}\right\rangle_t
}{\Lambda }\leq 2, \] 
provided $\Lambda$ is sufficiently large.
Hence, (\ref{eq:q-and-m-relation}) and (\ref{eq:phi-eps-dist}), together with
(\ref{eq:epsilon-L2-bound}) and (\ref{eq:p-epsilon-L2-bound}) of 
Lemma~\ref{lem:propagation-epsilon}, guarantee that
\begin{eqnarray} (\ref{eq:density_PP}) & = & \left\Vert
  \Lambda q_{t}^{(\mathrm{ref})}\left|\frac{\varphi_{t}}{\Lambda^{1/2}}\right\rangle
  \left\langle
  \frac{\varphi_{t}}{\Lambda^{1/2}}\right|tr{}_{x_{2},\ldots,x_{N}}\left[\left|\Psi_{t}\right\rangle
  \left\langle \Psi_{t}\right|\right]\left|\frac{\varphi_{t}}{\Lambda^{1/2}}\right\rangle
  \left\langle
  \frac{\varphi_{t}}{\Lambda^{1/2}}\right|q_{t}^{(\mathrm{ref})}-\left|\epsilon_{t}\right\rangle
  \left\langle \epsilon_{t}\right|\right\Vert \nonumber \\ & = & \left\Vert
  \left(1-\left\langle \Psi_{t},q_{1}^{\varphi_{t}}\Psi_{t}\right\rangle
  \right)\left[q_{t}^{(\mathrm{ref})}\left|\varphi_{t}\right\rangle
  \left\langle
\varphi_{t}\right|q_{t}^{(\mathrm{ref})}-\left|\epsilon_{t}\right\rangle
\left\langle \epsilon_{t}\right|\right]+\left\langle
\Psi_{t},q_{1}^{\varphi_{t}}\Psi_{t}\right\rangle
\left|\epsilon_{t}\right\rangle \left\langle \epsilon_{t}\right|\right\Vert
\nonumber \\ & \leq & \left\Vert \left(1-\left\langle
\Psi_{t},q_{1}^{\varphi_{t}}\Psi_{t}\right\rangle
\right)\left[q_{t}^{(\mathrm{ref})}\left|\varphi_{t}\right\rangle
\left\langle
\varphi_{t}\right|q_{t}^{(\mathrm{ref})}-\left|\epsilon_{t}\right\rangle
\left\langle \epsilon_{t}\right|\right]\right\Vert +\left|\left\langle
\Psi_{t},q_{1}^{\varphi_{t}}\Psi_{t}\right\rangle \right|\left\Vert
\epsilon_{t}\right\Vert _{2}^{2}\nonumber \\ & \leq &
2\left(\frac{C(t)^{2}}{\Lambda }+\frac{C(t)}{\Lambda ^{1/2}}\right)+\frac{\left\langle
  \widehat{m^{\varphi_{t}}}\right\rangle_t
}{\Lambda }C(t)^{2}.\label{eq:density_PP_est}
\end{eqnarray} \noun{Term (\ref{eq:density_PQ}):} Thanks to
(\ref{eq:epsilon-L2-bound}) of Lemma~\ref{lem:propagation-epsilon} we have that
\begin{eqnarray} (\ref{eq:density_PQ}) & = & 2\Lambda \left\Vert
  q_{t}^{(\mathrm{ref})}\,
  tr{}_{x_{2},\ldots,x_{N}}\left[p_{1}^{\varphi_{t}}\left|\Psi_{t}\right\rangle
  \left\langle
\Psi_{t}\right|q_{1}^{\varphi_{t}}\right]q_{t}^{(\mathrm{ref})}\right\Vert
\nonumber \\ & \leq & 2\Lambda \left\Vert
q_{t}^{(\mathrm{ref})}\frac{\varphi_{t}}{\Lambda^{1/2}}\right\Vert _{2}\left\Vert
q_{1}^{\varphi_{t}}\Psi_{t}\right\Vert _{2}\nonumber \\ & = &
2\Lambda \left\Vert
q_{t}^{(\mathrm{ref})}\frac{\phi^{(\mathrm{ref})}+\epsilon_{t}}{\Lambda ^{1/2}}\right\Vert
_{2}\left\Vert q_{1}^{\varphi_{t}}\Psi_{t}\right\Vert _{2}\nonumber \\ & \leq
& 2\Lambda \left\Vert
\frac{\epsilon_{t}}{\Lambda ^{1/2}}\right\Vert
_{2}\sqrt{\frac{\left\langle \widehat{m^{\varphi_{t}}}\right\rangle_t
}{\Lambda }}\nonumber \\ & \leq & 2\sqrt{\left\langle
  \widehat{m^{\varphi_{t}}}\right\rangle_t
}C(t).\label{eq:density_PQ_est}
\end{eqnarray} \noun{Term
(\ref{eq:density_QQ}):} A straight-forward computation yields
\begin{eqnarray} (\ref{eq:density_QQ}) & = &
  \Lambda \left\Vert q_{t}^{(\mathrm{ref})}\,
  tr{}_{x_{2},\ldots,x_{N}}\left[q_{1}^{\varphi_{t}}\left|\Psi_{t}\right\rangle
  \left\langle
\Psi_{t}\right|q_{1}^{\varphi_{t}}\right]q_{t}^{(\mathrm{ref})}\right\Vert
\nonumber \\ & \leq & \Lambda \left\Vert
q_{1}^{\varphi_{t}}\Psi_{t}\right\Vert ^{2}\nonumber \\ & \leq & \left\langle
\widehat{m^{\varphi_{t}}}\right\rangle_t. \label{eq:density_QQ_est}
\end{eqnarray}

Collecting estimates (\ref{eq:density_PP_est}), (\ref{eq:density_PQ_est}) and
(\ref{eq:density_QQ_est}) we find
\begin{eqnarray*} 
    \left\Vert
  \widetilde\rho_{t}^{(\mathrm{micro})}-\rho_{t}^{(\mathrm{macro})}\right\Vert  & \leq &
  \frac{C(t)^{2}}{\Lambda }+\frac{C(t)}{\Lambda ^{1/2}}+\frac{\left\langle
    \widehat{m^{\varphi_{t}}}\right\rangle_t
  }{\Lambda }C(t)^{2}+2\sqrt{\left\langle
    \widehat{m^{\varphi_{t}}}\right\rangle_t }C(t)+\left\langle
    \widehat{m^{\varphi_{t}}}\right\rangle_t.
\end{eqnarray*}
  However, thanks to (\ref{eq:varphi-Linfty-bound}), Lemma~\ref{lem:m-bound}
  shows that
  \begin{equation} \left|\left\langle
    \widehat{m^{\varphi_{t}}}\right\rangle _{t}\right|\leq
    C(t)\frac{\Lambda}{\rho}, \qquad 0\leq t\leq T.
  \end{equation}
 As a consequence, there is a $C\in\mathrm{Bounds}$ such that
  \begin{eqnarray*}
\left\Vert
  \widetilde\rho_{t}^{(\mathrm{micro})}-\rho_{t}^{(\mathrm{macro})}\right\Vert
   & \leq &
    C(t)\sqrt{\frac{\Lambda}{\rho}}.
  \end{eqnarray*} 
\end{proof}

To conclude this section, we note that our first main result is an immediate consequence 
of Theorem~\ref{thm:probM} and Theorem~\ref{thm:improved}.

\begin{proof}[Proof of Theorem~\ref{thm:rho-diff-brutal}]
    Theorems ~\ref{thm:probM} and ~\ref{thm:improved} imply that
  \begin{align*}
    \left\Vert \rho_t^{(\mathrm{micro})} - \rho_t^{(\mathrm{macro})} \right\Vert
    & \leq 
    \left
      \Vert \rho_t^{(\mathrm{micro})}
      - \widetilde\rho_t^{(\mathrm{micro})}
    \right\Vert
    + \left\Vert
        \widetilde\rho_t^{(\mathrm{micro})} - \rho_t^{(\mathrm{macro})} 
      \right\Vert
    \\
    & \leq
        C \Lambda \left\Vert \Psi - \widetilde \Psi \right\Vert_2
        + C(t) \sqrt{\frac{\Lambda}{\rho}}
    \\
    & \leq
    C \Lambda C(t) \sqrt{\frac{\Lambda}{\rho}}
    + C(t) \sqrt{\frac{\Lambda}{\rho}}
    \leq
    C(t) \frac{\Lambda^{3/2}}{\rho^{1/2}}.
  \end{align*}
\end{proof}

%\begin{proof}[Proof of Theorem~\ref{thm:probM}]
%Note that $1\leq \frac{\rho}{M}m(k)$ for all $k\geq M$ and $M\leq \rho$;
%see (\ref{eq:m-weight}). By (\ref{eq:PtotM}) and  Lemma \ref{lem:m-bound} we
%find
%\begin{align*}\mathbb P\left(
%             \text{total number of ``bad'' particles}> M
%           \right)&
%  =
%  \left\langle\Psi_t,
%  \sum_{M< k\leq N}P^{\varphi_t}_k
%  \Psi_t\right\rangle
%    \leq \frac{\rho}{M}\left\langle
%        \widehat{m^{\varphi_{t}}}\right\rangle_t\leq C(t)\frac{\Lambda}{M}\;.
%  \end{align*}
%\end{proof}\
%
%\begin{proof}[Proof of Theorem~\ref{thm:prob}]
%  Theorem \ref{thm:main} implies
%  \begin{align*}
%     &\mathbb P \left(
%                \left\Vert \rho^{(\mathrm{experiment})}_t -
%                \rho^{(\mathrm{macro})}_t \right\Vert \leq
%                C_2(t)\sqrt{\frac{\Lambda}{\rho}}
%              \right)
%     \\
%     &\geq
%      \mathbb P \left(
%                \left\{
%                  \left\Vert \rho^{(\mathrm{experiment})}_t -
%                  \rho^{(\mathrm{macro})}_t \right\Vert \leq
%                  C_2(t)\sqrt{\frac{\Lambda}{\rho}}
%                \right\}
%                \cap
%                \left\{
%                  \rho^{(\mathrm{experiment})}_t=\widetilde\rho^{(\mathrm{micro})}_t
%                \right\}
%              \right)
%    \\
%    & =
%    \mathbb P\left(\rho^{(\mathrm{experiment})}_t=\widetilde\rho^{(\mathrm{micro})}_t\right)
%    = \Vert \widetilde \Psi_t \Vert_2^2
%    = 1 - \Vert \Psi_t - \widetilde \Psi_t \Vert_2^2
%    \\
%    & \geq 
%    1 - C_1(t)^2 \frac{\Lambda}{\rho}.
%  \end{align*}
%\end{proof}

\subsection{A Priori Propagation Estimates}\label{sec:propagationestimates}

In this section we prove the propagation estimates (\ref{eq:varphi-Linfty-bound}),
(\ref{eq:epsilon-L2-bound}) and (\ref{eq:p-epsilon-L2-bound}) --
Corollary~\ref{lem:propagation-varphi} and Lemma~\ref{lem:propagation-epsilon}
-- that have been required in the proofs of our first three main results.
%Furthermore, we provide the required bounds for the proof of the fourth main result,
%Theorem~\ref{thm:main-epsilon}. 

To gain 
the required control of the solutions to the
non-linear equations (\ref{eq:varphi-evolution}), (\ref{eq:phi-evolution}),
and (\ref{eq:epsilon-evolution}) turns out
to be quite involved. \ed
Therefore, it is convenient, to first
study the dynamics on a torus, $\mathbb{T}$, meaning that we view the region $\Lambda$ 
as a torus and impose periodic boundary conditions.  In order to distinguish these two different situations in
our notations, we use the following convention.  On $\mathbb R^3$ we refer to
the solutions of equations (\ref{eq:microscopic}), (\ref{eq:varphi-evolution}),
(\ref{eq:phi-evolution}), and (\ref{eq:epsilon-evolution}) as before, i.e., as
$$
  t\mapsto\Psi_{t}, \quad t\mapsto\varphi_{t},
  \quad t\mapsto\phi^{(\mathrm{ref})}_t, \quad t\mapsto\epsilon_t,
$$ 
whereas, on $\mathbb T$, we write
$$
  t\mapsto\Psi_{t}^{\mathbb T},
  \quad t\mapsto\varphi_{t}^{\mathbb T}, \quad t\mapsto\phi^{\mathbb T,(\mathrm{ref})}_t,
  \quad t\mapsto\epsilon^{\mathbb T}_t.
$$
The corresponding initial conditions on the torus are
\begin{align}
  \label{eq:iv-torus}
  e^{i\Vert U\Vert_1 t}\varphi^{\mathbb T}_0
  :=
  \phi^{\mathbb T,(\mathrm{ref})}_0+\epsilon_0^{\mathbb T},
  \qquad
  \phi^{\mathbb T,(\mathrm{ref})}_0
  :=
  1, \qquad \epsilon_0^{\mathbb T}
  := \epsilon_0;
\end{align}
see Condition~\ref{def:initial-conditions}. Note that we neither distinguish the
differential operators on $\mathbb T$ and $\mathbb R^3$ in our notation, nor we make the
domain, $\Lambda$, of integration explicit in the integrals. Both can be
unambiguously inferred from context. 
Furthermore, for some $T\leq \infty$ we
assume the above solutions to
exist on the time interval $[0,T)$ and consider only times $t\in[0,T)$. \ed

One of the main goals of this section is to provide $L^\infty$ norms on
$\phi^{\mathrm{(ref)}}_t$, $\varphi_t$, and $\epsilon_t$. 
The advantage of the torus is that the respective reference state $\phi^{\mathbb T,\mathrm{(ref)}}_t$ is
simply a constant, whereas $\phi^{\mathrm{(ref)}}_t$ on $\mathbb R^3$ has tails.
In consequence, on the torus the only kinetic energy there is stems from the
excitation. It can be readily estimated by energy conservation and provides an estimate
that is good enough to prevent excessive clustering of particles. 
Heuristically, the same is true for the reference
state in $\mathbb R^3$ as it is very flat. However, there it is more difficult to distinguish the
kinetic energy due to the excitation and the one due to the tails of the
reference state in the technical estimates.
Therefore, we first study $\phi^{\mathbb T,\mathrm{(ref)}}_t$, $\varphi^{\mathbb
T}_t$, and $\epsilon^{\mathbb T}_t$ on the torus in
Section~\ref{sec:propagation-on-torus}. Afterwards we construct 
auxiliary wave functions on $\mathbb R^3$ by means of the torus wave functions
which are
already in some sense close $\phi^{\mathrm{(ref)}}_t$, $\varphi_t$, and
$\epsilon_t$, 
respectively.
The propagation of errors is then controlled by Grönwall arguments which allow
to extend the results in the case of the torus to the one of $\mathbb R^3$; see
Sections~\ref{sec:propagation-phi} and \ref{sec:propagation-varphi}. The latter
sections also
provide the required control of the excitations which is discussed in Section~\ref{sec:propagation-epsilon}. 

While the quantum mechanical
spreading due to the Laplace term usually tends to relax bad situations,
the pair-interaction due to $U$ could give rise to such, and a strategy
is needed to control the $L^\infty$ norms of solutions over time.
Here it is important to recall that the respective $L^2$ norms 
$\phi^{\mathrm{(ref)}}_t$ and $\varphi_t$ scale proportionally to $\Lambda^{1/2}$.
Hence, over time the growth of the solutions due to the interaction can not simply be
controlled by using an $L^2$ estimate in a Cook's argument.
For this reason we introduce the following Lemma \ref{lem:L1-norm-evolution} 
which will be applied frequently below.
It holds on $\mathbb R^3$ as well as on the torus
$\mathbb T$ and makes use of the following convenient norms: \ed
\begin{defn}
 For $0\leq
  p_1,p_2,p_3,\ldots \leq \infty$ we define the norms
  \begin{align*}
    \Vert\zeta\Vert_{p_1\wedge p_2\wedge p_3\wedge\ldots} & :=
    \inf_{\zeta=\zeta_{p_1}+\zeta_{p_2}+\zeta_{p_3}+\ldots} \left(
    \Vert\zeta_{p_1}\Vert_{p_1}+\Vert\zeta_{p_2}\Vert_{p_2}
    +\Vert\zeta_{p_3}\Vert_{p_3}+\ldots \right). 
  \end{align*}
In order to compress the notation we also use
\[
    \Vert\zeta\Vert_{p_1, p_2,
    p_3, \ldots} := \Vert\zeta\Vert_{p_1}+\Vert\zeta\Vert_{p_2} + \ldots\;.
\]
\end{defn}
\begin{lem}\label{lem:L1-norm-evolution}
  Let $U\in\mathcal C^\infty_c(\mathbb R^3,\mathbb R)$ be a general
  potential. \ed
  Let $\zeta_t$ be solution of the nonlinear equation 
  $$
    i\partial_t\zeta_t(x) 
    =
    \left( -\frac{1}{2}\Delta + U*|\zeta_t|^2(x) \right)\zeta_{t}(x).
  $$
  for an initial value
  $\zeta_t|_{t=0}=\zeta_0$ such that:
  \begin{align}\label{eq:initial-constraint}
    \db \left(\left\Vert{\zeta_0}\right\Vert_\infty \leq \right) \ed
    \left\Vert\widehat{\zeta_0}\right\Vert_1 \leq C_1
    \text{ and }
    \db \left(\left\Vert\zeta_t\right\Vert_{2\wedge
    \infty}\leq\right)\ed
    \left\Vert\widehat{\,|\zeta_t|\,}\right\Vert_{1\wedge 2} \leq C_2(t)
  \end{align}
  for some $C_1,C_2\in\mathrm{Bounds}$. Then there exists a
  $C_3\in\mathrm{Bounds}$ such that
  $$
    \db \left(\left\Vert{\zeta_t}\right\Vert_\infty \leq \right) \ed
    \left\Vert\widehat{\zeta_t}\right\Vert_1
    \leq
    C_3(t).
  $$
\end{lem}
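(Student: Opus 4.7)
The plan is to pass to Fourier space and use Duhamel's formula, which reduces the problem to a Grönwall argument for $\|\widehat{\zeta_t}\|_1$. First, since the free propagator $e^{i t\Delta/2}$ acts as multiplication by a unit-modulus phase in Fourier space, Duhamel's formula gives
\begin{equation*}
  \widehat{\zeta_t}(k)
  = e^{-ik^2 t/2}\widehat{\zeta_0}(k)
  - i\int_0^t ds\; e^{-ik^2(t-s)/2}\,
  \widehat{\bigl( U*|\zeta_s|^2\bigr)\zeta_s}(k),
\end{equation*}
so taking $L^1$ norms in $k$ and using that the product-convolution formula turns $(U*|\zeta_s|^2)\zeta_s$ into a convolution $\bigl(\widehat{U}\cdot\widehat{|\zeta_s|^2}\bigr)*\widehat{\zeta_s}$ (up to $2\pi$ factors), together with Young's inequality $\|f*g\|_1\le\|f\|_1\|g\|_1$, yields
\begin{equation*}
  \|\widehat{\zeta_t}\|_1
  \le \|\widehat{\zeta_0}\|_1
  + C\int_0^t ds\; \bigl\|\widehat{U}\cdot\widehat{|\zeta_s|^2}\bigr\|_1
  \;\|\widehat{\zeta_s}\|_1.
\end{equation*}

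The heart of the argument is to bound the coefficient $\|\widehat{U}\cdot\widehat{|\zeta_s|^2}\|_1$ by $C(s)$. I would use that $|\zeta_s|^2=|\zeta_s|\cdot|\zeta_s|$ gives $\widehat{|\zeta_s|^2}=(2\pi)^{-3/2}\widehat{|\zeta_s|}*\widehat{|\zeta_s|}$, and then decompose $\widehat{|\zeta_s|}=g_1+g_2$ according to the $1\wedge 2$ norm, with $\|g_1\|_1+\|g_2\|_2\le 2 C_2(s)$. The three resulting pieces $g_1*g_1$, $g_1*g_2$, $g_2*g_2$ lie in $L^1$, $L^2$, $L^\infty$ respectively, by elementary Young-type inequalities. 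Since $U\in\mathcal C^\infty_c$, the Fourier transform $\widehat U$ is Schwartz, hence belongs to $L^1\cap L^2\cap L^\infty$; pairing each piece with Hölder's inequality ($L^\infty\cdot L^1$, $L^2\cdot L^2$, $L^1\cdot L^\infty$) shows
\begin{equation*}
  \bigl\|\widehat{U}\cdot\widehat{|\zeta_s|^2}\bigr\|_1
  \le C\bigl(\|\widehat U\|_\infty+\|\widehat U\|_2+\|\widehat U\|_1\bigr)
  \bigl\|\widehat{|\zeta_s|}\bigr\|_{1\wedge 2}^2
  \le C\, C_2(s)^2.
\end{equation*}

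Inserting this bound back into the Duhamel estimate gives
\begin{equation*}
  \|\widehat{\zeta_t}\|_1
  \le C_1 + C\int_0^t C_2(s)^2\,\|\widehat{\zeta_s}\|_1\, ds,
\end{equation*}
and Grönwall's lemma produces the desired $C_3\in\mathrm{Bounds}$. The bracketed bound on $\|\zeta_t\|_\infty$ then follows by $\|\zeta_t\|_\infty\le(2\pi)^{-3/2}\|\widehat{\zeta_t}\|_1$. The main obstacle is the $L^1$ bound on $\widehat U\cdot\widehat{|\zeta_s|^2}$: a naive Hölder estimate would require $\widehat{|\zeta_s|^2}$ in a single Lebesgue space, which the hypotheses do not provide, and the payoff of the $1\wedge 2$ splitting is precisely that it pairs each of the three convolution components with the appropriate Lebesgue bound on $\widehat U$.
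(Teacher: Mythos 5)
Your proposal is correct and follows essentially the same route as the paper: both work in Fourier space where the free flow is a unimodular multiplier and hence harmless for $\left\Vert\widehat{\zeta_t}\right\Vert_1$, both reduce the nonlinearity via Young's inequality to the key coefficient bound $\left\Vert\widehat U\cdot\widehat{\,|\zeta_s|^2\,}\right\Vert_1\leq C\,C_2(s)^2$ obtained by splitting $\widehat{\,|\zeta_s|\,}$ according to the $1\wedge 2$ norm and pairing the three convolution pieces with $\widehat U\in L^1\cap L^2\cap L^\infty$, and both close with Grönwall. The only (cosmetic) difference is that you use Duhamel's formula in integral form whereas the paper differentiates $\left\Vert\widehat{\zeta_t}\right\Vert_1$ directly; the substance of the estimate is identical.
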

\begin{proof}
  Grönwall's Lemma, the bound on the time derivative
  \begin{align*}
    \partial_t \left\Vert \widehat\zeta_t \right\Vert_1
    & \leq
    \int dk \,
      \frac{\Im \widehat\zeta^*_t(k)
        \left( \frac{k^2}{2}\widehat\zeta_t(k) +
             \reallywidehat{U*|\zeta_t|^2\zeta_t}(k)
        \right)}
      {\left| \widehat\zeta_t(k) \right|}
    \\
    & \leq
    \int dk \int dl \int dp \,
    \left|
      \widehat U(l) \widehat\zeta_t(l-p)
      \widehat\zeta_t(p)\widehat\zeta_t(k-l)
    \right|
    \\
    & \leq
    \int dl \int dp\,
    \left|
      \widehat U(l)
      \widehat\zeta_t(l-p) \widehat\zeta_t(p)
    \right|
    \, \left\Vert 
         \widehat\zeta_t
       \right\Vert_1
    \\
    & \leq
    C \Vert U \Vert_{1,2,\infty} \,
    \left\Vert \widehat{\,|\zeta_t|\,} \right\Vert_{1\wedge 2}^2 \,
    \left\Vert \widehat\zeta_t \right\Vert_1
    \\
    & \leq
    C C_1 C_2(t)^2
    \left\Vert \widehat\zeta_t \right\Vert_1
    =:C_3(t)
    \left\Vert \widehat\zeta_t \right\Vert_1,
  \end{align*}
  and the assumption on the initial condition (\ref{eq:initial-constraint}) imply the claim.
\end{proof}
The lemma states that an a priori bound in the $\Vert\cdot\Vert_{2\wedge
\infty}$ norm is sufficient to maintain control over the $L^\infty$ norm over time. The strategy will
therefore be to establish such a priori norms in the cases of 
$\phi^{\mathrm{(ref)}}_t$, $\varphi_t$, and $\epsilon_t$ and then apply the above
lemma.

\subsubsection{Estimates on the Torus}\label{sec:propagation-on-torus}

\db As discussed this section provides the needed properties of the evolution equations on the torus $\mathbb T$
for initial values (\ref{eq:iv-torus}) and repulsive potentials $U$, i.e.,\ed
\begin{align}
    U\geq 0.
    \label{eq:U-repulsive}
\end{align}
On $\mathbb T$ the unique solution to the evolution equation (\ref{eq:phi-evolution}) of the
reference state that corresponds to initial value (\ref{eq:iv-torus})
is given by the constant, i.e.,
\begin{equation}
  \phi^{\mathbb T,(\mathrm{ref})}_t=1
  \qquad \text{for all } t\in\mathbb R.
  \label{eq:phi-solution-torus}
\end{equation}
In consequence, Condition~\ref{def:initial-conditions} and (\ref{eq:iv-torus}) imply
\begin{equation}
  \left\Vert 
    \, \widehat{\, |\varphi^{\mathbb T}_0| \,} \,
  \right\Vert_1 
  \leq C,  
  \label{eq:varphi-flatness}
\end{equation}
and because of (\ref{eq:U-repulsive}), we have
\begin{equation} E_{\varphi_0^{\mathbb T}} =
  \left\langle \varphi^{\mathbb T}_0, h_x[\varphi^{\mathbb T}_0] \varphi^{\mathbb T}_0 \right\rangle 
  \geq 0.
  \label{eq:varphi-energy}
\end{equation}
The evolution of the excitation wave function on the torus $\mathbb T$ is,
analogously as in the case of $\mathbb R^3$,
defined by
\begin{equation}
  \epsilon^{\mathbb T}_t 
  = 
  \varphi^{\mathbb T}_t e^{i\Vert U\Vert_1 t} - \phi^{\mathbb
    T,(\mathrm{ref})}_t
    \db =
\varphi^{\mathbb T}_t e^{i\Vert U\Vert_1 t} - 1.\ed
  \label{eq:epsilon-evolution-torus}
\end{equation}
Together with (\ref{eq:phi-solution-torus}) and (\ref{eq:epsilon-evolution})
this implies
\begin{align}
  \label{eq:epsilon-torus}
  i\partial_{t}\epsilon_{t}^{\mathbb T}(x) 
  = &
  \left( -\frac{1}{2}\Delta+
    U*|\epsilon_{t}^{\mathbb T}|^{2}(x) + U*2\Re{\epsilon^{\mathbb T}_{t}}^{*}(x)
  \right) \epsilon_{t}(x)
  \\
  & + U*\left( |\epsilon^{\mathbb T}_{t}|^{2}(x) +
  2\Re{\epsilon^{\mathbb T}_{t}}^{*}(x) \right). 
  \nonumber
\end{align}

\begin{lem}\label{lem:torus}
  \db Let $U\in\mathcal C^\infty_c(\mathbb R^3,\mathbb R^+_0)$ be a repulsive
  potential.
  There are \ed $C_1,C_2,C_4\in\mathrm{Bounds}$ such that for all $1/4\leq r<1$
  \begin{align}
    \label{eq:kinetic-estimate}
    \Vert\nabla\varphi^{\mathbb T}_t\Vert_2 
    =
    \Vert\nabla\epsilon^{\mathbb T}_t\Vert_2 
    &\leq C_1,
    \\
    \label{eq:varphi-L2infty-torus} 
    \Vert\varphi^{\mathbb T}_t\Vert_{2\wedge\infty} 
    \leq
    \left\Vert \widehat{\,|\varphi^{\mathbb T}_t|\,} \right\Vert_{1\wedge 2} 
    & \leq
    C_2(t), 
    \\
    \label{eq:epsilon-cutoff}
    \Vert \chi_r\epsilon^{\mathbb T}_t\Vert_2
    & \leq
    \Lambda^{-\frac{1}{3}} C_3(t).
  \end{align}
\end{lem}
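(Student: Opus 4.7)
The first part of (\ref{eq:kinetic-estimate}) is immediate: $\phi^{\mathbb{T},\mathrm{ref}}_t\equiv 1$ and (\ref{eq:epsilon-evolution-torus}) give $\nabla\varphi^{\mathbb{T}}_t=e^{-i\|U\|_1 t}\nabla\epsilon^{\mathbb{T}}_t$, hence the equality of the two gradient norms. For the $\Lambda$-uniform upper bound I combine conservation of the Hartree energy $E[\varphi]=\frac{1}{2}\|\nabla\varphi\|_2^2+\frac{1}{2}\int|\varphi|^2(U{*}|\varphi|^2)$ along (\ref{eq:varphi-evolution}) with the mass conservation $\|\varphi^{\mathbb{T}}_t\|_2^2=\Lambda$. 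Setting $\eta_t:=|\varphi^{\mathbb{T}}_t|^2-1$, mass conservation forces $\int\eta_t=0$, which isolates the $\Lambda$-divergent piece of $E[\varphi^{\mathbb{T}}_t]$ and produces the renormalized conserved identity
\begin{equation*}
\tfrac{1}{2}\|\nabla\varphi^{\mathbb{T}}_t\|_2^2+\tfrac{1}{2}\int\eta_t(U*\eta_t)=\tfrac{1}{2}\|\nabla\epsilon_0\|_2^2+\tfrac{1}{2}\int\eta_0(U*\eta_0).
\end{equation*}
The right-hand side is bounded uniformly in $\Lambda$ by the assumptions on $\epsilon_0$ in Condition~\ref{def:initial-conditions}; for $U$ of positive type the interaction term on the left is non-negative, and the kinetic bound follows.

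For (\ref{eq:varphi-L2infty-torus}), the first inequality comes from $\|\check f\|_\infty\leq\|f\|_1$ together with Plancherel applied to an optimal decomposition of $\widehat{|\varphi^{\mathbb{T}}_t|}$, then passing from $|\varphi^{\mathbb{T}}_t|$ to $\varphi^{\mathbb{T}}_t$ by multiplying by the unit phase $\varphi^{\mathbb{T}}_t/|\varphi^{\mathbb{T}}_t|$. For the second, I apply Lemma~\ref{lem:L1-norm-evolution} with $\zeta_t=\varphi^{\mathbb{T}}_t$: the initial Fourier bound $\|\widehat{\varphi^{\mathbb{T}}_0}\|_1\leq C$ follows from (\ref{eq:varphi-flatness}) and Condition~\ref{def:initial-conditions} via the triangle inequality; the a~priori bound $\|\widehat{|\varphi^{\mathbb{T}}_t|}\|_{1\wedge 2}\leq C_2(t)$ is produced by the decomposition $|\varphi^{\mathbb{T}}_t|=1+g_t$, which absorbs the constant background into the $\ell^1$-component and leaves the remainder $g_t$ to be bounded in $L^2$. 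Since $|g_t|\lesssim|\eta_t|\leq 2|\epsilon^{\mathbb{T}}_t|+|\epsilon^{\mathbb{T}}_t|^2$, it suffices to bound $\|\epsilon^{\mathbb{T}}_t\|_2$ and $\|\epsilon^{\mathbb{T}}_t\|_4$: the first is propagated by a Gr\"onwall argument on $\|\epsilon^{\mathbb{T}}_t\|_2^2$ using (\ref{eq:epsilon-torus}), and the second follows from the first via Gagliardo--Nirenberg and the kinetic bound of step one.

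For (\ref{eq:epsilon-cutoff}), set $w_t:=\chi_r\epsilon^{\mathbb{T}}_t$. Since $\mathrm{supp}\,\epsilon_0\subset\mathcal{B}_{\Lambda^{1/3}/4}$ lies where $\chi_r\equiv 0$ for $r\geq 1/4$, we have $w_0=0$. Differentiating $\|w_t\|_2^2$ using (\ref{eq:epsilon-torus}), the real multiplicative potential contributions cancel from $\Im\langle w_t,\partial_t w_t\rangle$; the surviving terms are the commutator $[-\tfrac{1}{2}\Delta,\chi_r]\epsilon^{\mathbb{T}}_t$, which is small because $\|\nabla\chi_r\|_\infty\leq C\Lambda^{-1/3}$ from (\ref{eq:cutoff}) combines with the kinetic bound of step one, and the cut-off source $\chi_r\cdot U*(|\epsilon^{\mathbb{T}}_t|^2+2\Re\epsilon^{\mathbb{T}}_t)$, which is small because $U$ has range $O(1)$ while $\chi_r$ vanishes on a macroscopic ball. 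Combining these estimates with the $L^\infty$ bound of step two yields $\tfrac{d}{dt}\|w_t\|_2\leq C(t)\Lambda^{-1/3}(1+\|w_t\|_2)$; Gr\"onwall with $w_0=0$ then delivers the claim.

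The main obstacle is the a priori Fourier bound in step two. Because $|\varphi^{\mathbb{T}}_t|\approx 1$ over a region of volume $\Lambda$, its Fourier transform carries macroscopic low-frequency mass that must be absorbed by the $\ell^1$-component of the $1\wedge 2$-decomposition; propagating the $L^2$-smallness of the remainder $|\varphi^{\mathbb{T}}_t|-1$ in time requires a delicate bootstrap interlacing the kinetic bound of step one, Gagliardo--Nirenberg, and the $L^2$-evolution of $\epsilon^{\mathbb{T}}_t$.
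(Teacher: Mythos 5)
Your step one contains the decisive gap. The renormalized energy identity you write down is correct (mass conservation does force $\int\eta_t=0$ and isolates the $\|U\|_1\Lambda$ background), but to conclude the $\Lambda$-uniform kinetic bound you discard $\tfrac12\int\eta_t\,U*\eta_t$ by declaring it non-negative ``for $U$ of positive type''. The lemma assumes only $U\in\mathcal C^\infty_c(\mathbb R^3,\mathbb R^+_0)$, i.e.\ pointwise non-negativity; positive definiteness of $\widehat U$ is neither assumed here nor anywhere in the paper, and pointwise $U\geq0$ does not imply it. Without that sign you only get $\|\nabla\varphi^{\mathbb T}_t\|_2^2\leq C+\|U\|_1\|\eta_t\|_2^2$, and your own control of $\|\eta_t\|_2$ (via $\|\epsilon^{\mathbb T}_t\|_2$, Gagliardo--Nirenberg and $\|\epsilon^{\mathbb T}_t\|_4$) uses the kinetic bound as input, so the argument is circular unless you set up and close an explicit bootstrap -- which the proposal does not do. Note that the paper takes a different route here: it uses pointwise positivity to drop the \emph{un}-renormalized interaction term in the conserved energy, and then never needs an $L^2$ bound on $\eta_t$ for (\ref{eq:kinetic-estimate}); all later torus estimates are run through the phase-modulated comparison function $\widetilde\varphi^{\mathbb T}_t$ of (\ref{eq:tildevarphi-T}) and a Gr\"onwall for $\|\varphi^{\mathbb T}_t-\widetilde\varphi^{\mathbb T}_t\|_2$, rather than through your decomposition $|\varphi^{\mathbb T}_t|=1+g_t$. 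Your renormalization is a legitimate (arguably cleaner) way to organize the energy, but as stated it proves the lemma only under the extra hypothesis $\widehat U\geq0$.

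Two further points would need repair even granting step one. First, your appeal to Lemma~\ref{lem:L1-norm-evolution} in step two is backwards: the bound $\Vert\widehat{\,|\varphi^{\mathbb T}_t|\,}\Vert_{1\wedge2}\leq C_2(t)$ is the \emph{hypothesis} of that lemma, and its conclusion ($\Vert\widehat{\varphi^{\mathbb T}_t}\Vert_1\leq C(t)$) is Corollary~\ref{cor:epsilon-varphi-torus}, not (\ref{eq:varphi-L2infty-torus}); moreover the naive Gr\"onwall for $\|\epsilon^{\mathbb T}_t\|_2^2$ from (\ref{eq:epsilon-torus}) produces a cubic term $\|\epsilon^{\mathbb T}_t\|_2^3$, which does not close globally unless the Gagliardo--Nirenberg/kinetic input is woven into the same inequality (the paper avoids this by using $\|1+|\varphi^{\mathbb T}_t|\|_{2\wedge\infty}\leq C(t)$ from (\ref{eq:1plusPhi}), itself a consequence of (\ref{eq:varphi-tildevarphi-diff-torus})). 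Second, in step three the source term $\chi_r\,U*(|\epsilon^{\mathbb T}_t|^2+2\Re\epsilon^{\mathbb T}_t)$ is not small ``because $\chi_r$ vanishes on a macroscopic ball'': at $t>0$ the excitation is no longer supported in $\mathcal B_{\Lambda^{1/3}/4}$. The correct mechanism is to commute $\chi_r$ through the convolution at cost $\|\nabla\chi_r\|_\infty D=O(\Lambda^{-1/3})$, as in (\ref{eq:Um-decay}); this leaves, as in (\ref{eq:conv-1}), a term of size $C(t)\|\chi_r\epsilon^{\mathbb T}_t\|_2^2$ \emph{without} a $\Lambda^{-1/3}$ factor, which your claimed differential inequality omits (harmless for the conclusion once retained, since Gr\"onwall with $w_0=0$ still gives the $\Lambda^{-1/3}$ decay, but the inequality as you wrote it is not what the computation yields).
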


\begin{proof} To see (\ref{eq:kinetic-estimate}) we begin by noting that the
  evolution equation (\ref{eq:varphi-evolution}) conserves the energy so that
  due to (\ref{eq:phi-solution-torus}), (\ref{eq:varphi-energy}), and $U\geq 0$ one finds \db
  \begin{equation*}
      \Vert \nabla  \varphi^{\mathbb T}_t \Vert_2^2 = \Vert \nabla
    \epsilon^{\mathbb T}_t \Vert_2^2 = E_{\varphi_0^{\mathbb T}} - \left\langle \varphi_t^{\mathbb T},
    U*|\varphi^{\mathbb T}_t|^2 \varphi_t^{\mathbb T}\right\rangle 
    \leq E_{\varphi_0^{\mathbb T}}.
  \end{equation*}
%  Since $U\geq 0$, a variational computation in
%  the wave function reveals
%  \begin{equation*} 0 \leq \left\langle \phi^{\mathbb T,(\mathrm{ref})}_t,
%    U*|\phi^{\mathbb T,(\mathrm{ref})}_t|^2 \phi^{\mathbb T,(\mathrm{ref})}_t\right\rangle
%    \leq \left\langle \delta, U*|\delta|^2 \delta\right\rangle
%    \qquad \forall \delta\in L^2,
%  \end{equation*}
%  and choosing $\delta=\varphi_t^{\mathbb T}$ implies that 
  Hence, the claim (\ref{eq:kinetic-estimate}) holds for the choice of constant
  $C_1^2=E_{\varphi_0^{\mathbb T}}$.\ed \\

  In order to provide the estimate (\ref{eq:varphi-L2infty-torus}) we exploit
  that the Schrödinger dispersion effectively acts only on that part of the
  wave function which is not constant. It is therefore convenient to split
  $\varphi^{\mathbb T}_t$ into two parts. For this purpose we introduce
  \db the auxiliary wave function $\widetilde{\varphi}^{\mathbb T}_t$ by
  \begin{equation} \label{eq:tildevarphi-T}
    \widetilde\varphi^{\mathbb T}_t =
    \exp\left( -i\int_0^{t}ds\, U*|\varphi^{\mathbb T}_s|^2 \right) \varphi^{\mathbb T}_0
    %= \exp\left( -it U*|\varphi^{\mathbb T}_0|^2 \right) \varphi^{\mathbb T}_0.
  \end{equation}
  \ed
  so that
  \begin{equation}
    |\widetilde\varphi^{\mathbb T}_t|=|\varphi^{\mathbb T}_0|. \label{eq:tilde-varphi-evolution}
  \end{equation} 
%  as \ed the solution to
%  \begin{equation}
%    i\partial_t\widetilde{\varphi}^{\mathbb T}_t(x) 
%    = 
%    U*|\widetilde\varphi^{\mathbb T}_t|^2(x)
%    \widetilde\varphi_t^{\mathbb T}(x), \qquad \widetilde\varphi^{\mathbb T}_0 
%    :=
%    \varphi^{\mathbb T}_0.
%  \end{equation} 
% Note since $U\in\mathbb R$ we have
  Next, we split the desired norm of $\varphi^{\mathbb T}_t$ as follows \db
  \begin{align}
    \Vert \varphi^{\mathbb T}_t \Vert_{2\wedge\infty} 
    & =
    \inf_{\varphi^{\mathbb T}_t=\varphi^{\mathbb T}_{t,\infty}
    + \varphi^{\mathbb T}_{t,2}} 
    \left( 
       \Vert \varphi^{\mathbb T}_{t,2} \Vert_2 + \Vert \varphi^{\mathbb T}_{t,\infty} \Vert_\infty
    \right)
    \leq
    \inf_{\varphi^{\mathbb T}_t=\varphi^{\mathbb T}_{t,\infty}
    + \varphi^{\mathbb T}_{t,2}} 
    \left( 
       \Vert \varphi^{\mathbb T}_{t,2} \Vert_2 + \left\Vert \widehat{|\varphi^{\mathbb
       T}_{t,\infty}|} \right\Vert_1
    \right)
    =
    \left\Vert \widehat{|\varphi^{\mathbb T}_t|} \right\Vert_{1\wedge 2}
    \label{eq:2inf-12}
\end{align}
for which we find
\begin{align}
    \left\Vert \widehat{|\varphi^{\mathbb T}_t|} \right\Vert_{1\wedge 2}
   & \leq
        \Vert \varphi^{\mathbb T}_t-\widetilde\varphi_t^{\mathbb T} \Vert_2
        +
        \left\Vert \widehat{|\widetilde \varphi^{\mathbb T}_t|} \right\Vert_1
    \\
    & =
        \Vert \varphi^{\mathbb T}_t-\widetilde\varphi_t^{\mathbb T} \Vert_2
        +
        \left\Vert \widehat{|\varphi^{\mathbb T}_0|} \right\Vert_1
    \\
%    & \leq
%    \inf_{\varphi^{\mathbb T}_t
%    =
%    \varphi^{\mathbb T}_{t,\infty}+\varphi^{\mathbb T}_{t,2}} 
%    \left(
%      \left\Vert \, \reallywidehat{\,|\varphi^{\mathbb T}_{t,\infty}\,|} \, \right\Vert_1 +
%      \left\Vert \widehat{\varphi^{\mathbb T}_{t,2}} \right\Vert_2 
%    \right) 
%    = 
%    \left\Vert \,
%      \reallywidehat{\,|\varphi^{\mathbb T}_{t}\,|} \,
%    \right\Vert_{1\wedge 2}
%    \\ 
%    & \leq
%      \left\Vert 
%        \widehat{\, \left| \text{$\widetilde{\varphi}^{\mathbb T}_t$} \right| \,}
%      \right\Vert_1 
%      + \Vert \varphi_t^{\mathbb T} - \widetilde\varphi_t^{\mathbb T} \Vert_2 
%    = 
%    \left\Vert 
%      \,\reallywidehat{\,|\varphi^{\mathbb T}_0|\,} \, 
%    \right\Vert_1
%    + \Vert \varphi_t^{\mathbb T} - \widetilde\varphi_t^{\mathbb T} \Vert_2 
%    \\ 
    & \leq 
    \Vert \varphi_t^{\mathbb T} 
    - \widetilde\varphi_t^{\mathbb T} \Vert_2 + C,
    \label{eq:norm-splitting}
\end{align} \ed
where we used (\ref{eq:tilde-varphi-evolution}) and
(\ref{eq:varphi-flatness}).  It is left to control the difference of
$\varphi^{\mathbb T}_t$ and $\widetilde\varphi^{\mathbb T}_t$ in the $L^2$ norm.  \db Thanks to the conservation of the $L^2$ norms of
$\varphi^{\mathbb T}_t$ and $\widetilde \varphi^{\mathbb T}_t$, the
evolution equation (\ref{eq:varphi-evolution}), (\ref{eq:tildevarphi-T}), and (\ref{eq:kinetic-estimate}) we
find \ed
\begin{align} 
  \partial_t
  \left\Vert\varphi^{\mathbb T}_t-\widetilde \varphi^{\mathbb
  T}_t\right\Vert_2^2 & \leq 2|\partial_t \langle 
  \varphi^{\mathbb T}_t,\widetilde\varphi^{\mathbb T}_t\rangle| = | \langle
  \varphi^{\mathbb T}_t,\Delta\widetilde\varphi^{\mathbb T}_t\rangle| \leq
  \Vert\nabla\varphi^{\mathbb T}_t\Vert_2 \Vert\nabla\widetilde\varphi^{\mathbb T}_t\Vert_2
  \leq \db C_1 \Vert\nabla\widetilde\varphi^{\mathbb T}_t\Vert_2 \;.
  \label{eq:growall-L2-diff}
\end{align} 
\db Using (\ref{eq:tildevarphi-T}), the kinetic energy of
$\widetilde\varphi^{\mathbb T}_t$ can be estimated by
\begin{align}
  \Vert\nabla\widetilde \varphi^{\mathbb T}_t\Vert_2 
  & \leq
  \Vert\nabla\varphi^{\mathbb T}_0\Vert_2 
  + \db \int_0^t ds \, \left\Vert U*\left(2\Re {\varphi^{\mathbb
  T}_s}^*\nabla\varphi^{\mathbb T}_s \right) \varphi^{\mathbb T}_0 \right\Vert_2 
  \\ 
  & \leq
  \Vert\nabla\varphi^{\mathbb T}_0\Vert_2 + 
  \db 2\Vert U\Vert_{1,2}\int_0^t ds \, \Vert\varphi^{\mathbb T}_s \,
  \Vert_{2\wedge\infty}
  \Vert\nabla\varphi^{\mathbb T}_s\Vert_{2} \, \Vert\varphi^{\mathbb T}_0\Vert_\infty 
  \\ 
  & \leq C(t)\left( 1+ \int_0^t ds \, \Vert\varphi^{\mathbb T}_s \,
  \Vert_{2\wedge\infty}\right),\label{eq:1inf-int-est}
\end{align} 
where we used (\ref{eq:kinetic-estimate}).
%, (\ref{eq:2inf-12}) and (\ref{eq:norm-splitting}).
Thus, collecting the estimates (\ref{eq:growall-L2-diff}) and
(\ref{eq:1inf-int-est}) yields
\begin{align*}
  \partial_t
  \left\Vert\varphi^{\mathbb T}_t-\widetilde \varphi^{\mathbb
  T}_t\right\Vert_2^2 \leq C(t)\left( 1+
  \int_0^t ds\, \Vert \varphi_s^{\mathbb T} - \widetilde\varphi_s^{\mathbb T}
  \Vert_2^2 \right) 
\end{align*}
where we have used the inequality $x\leq 1+x^2$, $\forall\, x\in\mathbb R$, 
to get a quadratic exponent under the integral.
Grönwall's Lemma then ensures the existence of a $C\in\mathrm{Bounds}$ such that
\begin{equation}
  \Vert\varphi^{\mathbb T}_t-\widetilde \varphi^{\mathbb T}_t\Vert_2^2 \leq
  C(t),
  \label{eq:varphi-tildevarphi-diff-torus}
\end{equation}
which together with (\ref{eq:norm-splitting}) and
$\widetilde\varphi_0^{\mathbb T}=\varphi_0^{\mathbb T}$ implies the claim
(\ref{eq:varphi-L2infty-torus}). \ed \\

  We now prove the remaining claim (\ref{eq:epsilon-cutoff}).  First, we note
  that according to (\ref{eq:epsilon-torus})
  \begin{align}
    \partial_t\Vert \chi_r\epsilon^{\mathbb T}_t\Vert^2_2 \leq & \left|
    \left\langle \epsilon^{\mathbb T}_t,\left[
    \frac{\Delta}{2},\chi_r^2\right ] \epsilon^{\mathbb T}_t
    \right\rangle \right| \label{eq:laplace-term} \\
    & + 2\left| \left\langle
    U*\left(|\epsilon^{\mathbb T}_t|^2 + \Re {\epsilon^{\mathbb T}_t}^*\right)
   ,\chi_r^2 \epsilon_t^{\mathbb T}  
    \right\rangle
    \right|.\label{eq:convolution-term}
  \end{align}
  Using partial integration, (\ref{eq:cutoff}), and
  (\ref{eq:kinetic-estimate}) we find
  \begin{align} (\ref{eq:laplace-term}) 
    %&=
    %\left|\left\langle\epsilon^{\mathbb T}_t, \chi_r\nabla
    %\chi_r\nabla\epsilon^{\mathbb T}_t\right\rangle
    %+
    %\frac{1}{2}\left\langle\epsilon^{\mathbb T}_t,
    %\Delta(\chi_r^2)\epsilon^{\mathbb T}_t
    %\right\rangle
    %\right| 
    %\nonumber 
    %\\ 
    & =
    \left|\left\langle\epsilon^{\mathbb T}_t, \chi_r\nabla
    \chi_r\nabla\epsilon^{\mathbb T}_t\right\rangle\right| \nonumber 
    \\ 
    & \leq 
    \Vert \chi_r\epsilon^{\mathbb T}_t\Vert_2 \, \Vert \nabla
    \chi_r\Vert_\infty \, \Vert \nabla\epsilon^{\mathbb T}_t\Vert_2
    \nonumber 
    \\ 
    & \leq 
    \Vert \chi_r\epsilon^{\mathbb T}_t\Vert_2 
    C\Lambda^{-\frac{1}{3}}
    C_1.  
    \label{eq:kin-est}
  \end{align} Next,
  equation (\ref{eq:epsilon-evolution-torus}) together with
  (\ref{eq:phi-solution-torus}) imply 
  \[ \left|\, |\epsilon^{\mathbb T}_t|^2 +
      2\Re {\epsilon^{\mathbb T}_t}^* \,\right| \leq |\epsilon^{\mathbb
      T}_t|\left( 1+|\varphi^{\mathbb T}_t| \right), \] 
  which yields the estimate
  \begin{align} (\ref{eq:convolution-term}) & \leq
  2\left\Vert \chi_r U*\left[ |\epsilon^{\mathbb T}_t|(1 + |\varphi^{\mathbb T}_t|) \right]
  \right\Vert_2 \Vert \chi_r \epsilon^{\mathbb T}_t \Vert_2 \\ & \leq \phantom{+} 2
  \left[ \int dx \left| \int dy\, U(x-y) \left( |\epsilon^{\mathbb T}_t(y)|(1 +
    |\varphi^{\mathbb T}_t(y)|) \right) \chi_r(y) \right|^2 \right]^{1/2} \Vert
    \chi_r \epsilon^{\mathbb T}_t \Vert_2 \label{eq:conv-1} \\ & \phantom{\leq} + 2
    \left[ \int dx \left| \int dy\, U(x-y) \left( |\epsilon^{\mathbb T}_t(y)|(1 +
      |\varphi^{\mathbb T}_t(y)|) \right) \left( \chi_r(x)-\chi_r(y) \right)
    \right|^2 \right]^{1/2} \Vert \chi_r \epsilon^{\mathbb T}_t \Vert_2.
    \label{eq:conv-2}
  \end{align}
  Furthermore,
  \begin{align} 
    (\ref{eq:conv-1})
    \leq 
    C\Vert U \Vert_{1,2} \; \Vert\, 1+|\varphi^{\mathbb T}_t|
    \,\Vert_{2\wedge\infty} \; \Vert \chi_r \epsilon^{\mathbb T}_t \Vert_2^2,
    \label{eq:conv-1-1}
  \end{align}
  \begin{align} (\ref{eq:conv-2}) \leq
    C\Lambda^{-\frac{1}{3}}D\Vert U \Vert_{1,2} \, \Vert\, 1+|\varphi^{\mathbb T}_t|
    \,\Vert_{2\wedge\infty} \, \Vert \epsilon^{\mathbb T}_t \Vert_2 \Vert \chi_r
    \epsilon^{\mathbb T}_t \Vert_2,
    \label{eq:conv-1-2}
  \end{align} where we have used
  that $U$ is supported in a ball of radius $D\geq0$ around the origin so that
  by (\ref{eq:cutoff})
  \begin{align} \db |U(x-y)(\chi_r(x)-\chi_r(y))|\leq C\Lambda^{-\frac{1}{3}}|U(x-y)|D.
    \label{eq:Um-decay}
  \end{align} Now equation
  (\ref{eq:tilde-varphi-evolution}) and the bound in
  (\ref{eq:varphi-tildevarphi-diff-torus}) ensure
  \begin{align} \Vert\,
    1+|\varphi^{\mathbb T}_t| \,\Vert_{2\wedge\infty} & \leq \Vert\, 1+
    |\widetilde\varphi_t^{\mathbb T}| +|\varphi^{\mathbb
    T}_t-\widetilde\varphi^{\mathbb T}_t| \,\Vert_{2\wedge\infty} \leq \Vert
    1+|\widetilde\varphi^{\mathbb T}_t| \, \Vert_\infty +\Vert \varphi^{\mathbb T}_t
    - \widetilde\varphi_t^{\mathbb T} \Vert_2 \nonumber \\ & \leq 1+\Vert \varphi^{\mathbb
    T}_0 \Vert_\infty + C(t) \leq C(t).
    \label{eq:1plusPhi}
  \end{align}
  Finally, a similar computation as the one used in (\ref{eq:laplace-term})
  gives
  \begin{align*} \partial_t\Vert \epsilon^{\mathbb T}_t\Vert^2_2 & \leq 2\left|
      \left\langle U*\left(|\epsilon^{\mathbb T}_t|^2 + \Re {\epsilon^{\mathbb
      T}_t}^*\right), \epsilon_t^{\mathbb T} \right\rangle \right| \\ & \leq
      \db 2\Vert U \Vert_{1,2} \, \left\Vert \, 1 + |\varphi^{\mathbb T}_t|\,
      \right\Vert_{2\wedge\infty} \, \Vert \epsilon^{\mathbb T}_t \Vert_2^2
  \end{align*} which thanks to (\ref{eq:1plusPhi}) and Grönwall's Lemma means
  \begin{align}\label{eq:epsilon-L2-bound-torus} \Vert \epsilon_t^{\mathbb T} \Vert_2
    \leq C(t).
  \end{align} 
  Hence, (\ref{eq:conv-1-1}) and (\ref{eq:conv-1-2}) imply
  \begin{align*} (\ref{eq:convolution-term}) \leq C(t)\Lambda^{-\frac{1}{3}}\Vert
    \chi_r \epsilon^{\mathbb T}_t \Vert_2 +C(t)\Vert \chi_r
    \epsilon^{\mathbb T}_t \Vert_2^2.
  \end{align*} 
  Finally, (\ref{eq:laplace-term}), which was estimated in (\ref{eq:kin-est}), and (\ref{eq:convolution-term})
  guarantee
  \[ \Vert \chi_r \epsilon^{\mathbb T}_t \Vert_2^2 \leq
  C(t)\left(\Lambda^{-\frac{1}{3}}+\Vert \chi_r \epsilon^{\mathbb T}_0
  \Vert_2\right).  \] 
  Note that by initial
  constraint (\ref{eq:conditions-epsilon}) one has $\chi_r \epsilon^{\mathbb
  T}_0=0$ for $r\geq 1/4$.
  In conclusion,
  the claim (\ref{eq:epsilon-cutoff}) is a consequence of Grönwall's
  Lemma.
\end{proof}

Lemma~\ref{lem:L1-norm-evolution} and Lemma~\ref{lem:torus} imply the following
corollary.
\begin{cor}\label{cor:epsilon-varphi-torus} 
  \db Let $U\in\mathcal C^\infty_c(\mathbb R^3,\mathbb R^+_0)$ be a repulsive
  potential. \ed
    There is a
  $C\in\mathrm{Bounds}$ such that 
  \[ \Vert \epsilon_t^{\mathbb T} \Vert_\infty \leq 1 +
    \Vert \varphi_t^{\mathbb T} \Vert_\infty \leq 1 + \left\Vert \widehat{\varphi^{\mathbb T}_t}
  \right\Vert_1 \leq C(t).  
  \]
\end{cor}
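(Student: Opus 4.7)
The proof will be a short assembly of already-established ingredients. The three inequalities are proved left to right.

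For the first inequality, I would start from the defining identity (\ref{eq:epsilon-evolution-torus}), namely $\epsilon^{\mathbb T}_t = \varphi^{\mathbb T}_t e^{i\|U\|_1 t} - 1$, and simply take pointwise absolute values, so that $|\epsilon^{\mathbb T}_t(x)| \leq |\varphi^{\mathbb T}_t(x)| + 1$ and hence $\|\epsilon^{\mathbb T}_t\|_\infty \leq 1 + \|\varphi^{\mathbb T}_t\|_\infty$. The second inequality is the standard Fourier-inversion bound: $\varphi^{\mathbb T}_t(x)$ is recovered as an absolutely convergent expansion in terms of $\widehat{\varphi^{\mathbb T}_t}$, and taking absolute values inside the integral/sum yields $\|\varphi^{\mathbb T}_t\|_\infty \leq \|\widehat{\varphi^{\mathbb T}_t}\|_1$ (possibly up to an irrelevant universal constant which can be absorbed into $C$).

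The substantive step is the third inequality, which I would obtain by applying Lemma~\ref{lem:L1-norm-evolution} with $\zeta_t := \varphi_t^{\mathbb T}$. The evolution equation (\ref{eq:varphi-evolution}) is exactly of the form required by the lemma. I must therefore verify its two hypotheses. The a priori bound $\|\widehat{\,|\varphi^{\mathbb T}_t|\,}\|_{1\wedge 2} \leq C_2(t)$ is precisely the content of (\ref{eq:varphi-L2infty-torus}) in Lemma~\ref{lem:torus} and so is immediate. For the initial-data hypothesis $\|\widehat{\varphi^{\mathbb T}_0}\|_1 \leq C_1$, I would write $\varphi^{\mathbb T}_0 = 1 + \epsilon_0$ according to (\ref{eq:iv-torus}), note that on the torus the constant function $1$ has finite $\ell^1$ Fourier spectrum (a single mode at $k=0$), and bound the second piece by $\|\widehat{\epsilon_0}\|_1 \leq \|\widehat{|\epsilon_0|}\|_1 \leq C$ using (\ref{eq:conditions-epsilon}). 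Lemma~\ref{lem:L1-norm-evolution} then delivers $\|\widehat{\varphi^{\mathbb T}_t}\|_1 \leq C(t)$, which is the third inequality.

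There is no real obstacle here; all the work has been done in Lemma~\ref{lem:torus} and Lemma~\ref{lem:L1-norm-evolution}. The only mild subtlety worth flagging is the verification that the initial datum $\varphi^{\mathbb T}_0$ has bounded $L^1$-Fourier norm on the torus, but this follows simply from the additive decomposition into the constant reference state and the compactly supported excitation $\epsilon_0$, both of whose Fourier transforms are controlled by Condition~\ref{def:initial-conditions}.
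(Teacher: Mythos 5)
Your proposal coincides with the paper's (implicit) argument: the corollary is presented as an immediate consequence of Lemma~\ref{lem:L1-norm-evolution} applied with $\zeta_t=\varphi^{\mathbb T}_t$, the a priori hypothesis being (\ref{eq:varphi-L2infty-torus}) from Lemma~\ref{lem:torus}, the initial-datum hypothesis coming from (\ref{eq:iv-torus}) and Condition~\ref{def:initial-conditions} via $\varphi^{\mathbb T}_0=1+\epsilon_0$, and the first two inequalities being the pointwise triangle inequality for $\epsilon^{\mathbb T}_t=\varphi^{\mathbb T}_t e^{i\Vert U\Vert_1 t}-1$ and Fourier inversion, exactly as you write. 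The only blemish is the step $\Vert\widehat{\epsilon_0}\Vert_1\leq\Vert\widehat{|\epsilon_0|}\Vert_1$, which is not a valid inequality in general; it is harmless here because $\epsilon_0$ is a fixed, $\Lambda$-independent $\mathcal C^\infty_c$ function, so $\Vert\widehat{\epsilon_0}\Vert_1\leq C$ holds directly (the paper is equally loose about $\widehat{|\cdot|}$ versus $\widehat{\cdot}$ at this point), and the rest of your argument is unaffected.
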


\subsubsection{Estimates for
  $\phi^{(\mathrm{ref})}_{t}$}\label{sec:propagation-phi}

\begin{lem}\label{lem:phi-propagation} 
\db Let $U\in\mathcal C^\infty_c(\mathbb R^3,\mathbb R)$ be a general 
  potential, and let $\Lambda$ be sufficiently large. \ed
    There are
  $C_1,C_2\in\mathrm{Bounds}$ such that
  \begin{align}
    \label{eq:phi-L2infty-bound}
    \Vert\phi_{t}^{(\mathrm{ref})}\Vert_{2\wedge\infty} \leq \left\Vert \,
    \reallywidehat{\,|\phi_{t}^{(\mathrm{ref})}|} \, \right\Vert_{1\wedge 2} & \leq C_1(t).
    \\ \label{eq:moduls-phi-phi0-L2-bound}
    \left\Vert |\phi_{t}^{(\mathrm{ref})}| -
      |\phi_{0}^{(\mathrm{ref})}| 
    \right\Vert_{2} & \leq C_2(t) \Lambda^{-\frac{1}{6}}.
  \end{align}
\end{lem}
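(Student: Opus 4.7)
The plan is to mirror the torus argument of Lemma~\ref{lem:torus} by introducing an auxiliary wave function on $\mathbb{R}^3$ adapted to the reference-state equation. Define
\[
\widetilde{\phi}_t := \exp\!\Bigl(-i\int_0^t V_s\, ds\Bigr)\,\phi_0^{(\mathrm{ref})},\qquad
V_s := U*|\phi_s^{(\mathrm{ref})}|^2 - \|U\|_1,
\]
so that $|\widetilde{\phi}_t|=|\phi_0^{(\mathrm{ref})}|$ for all $t$. By Condition~\ref{def:initial-conditions}, this yields $\|\widetilde{\phi}_t\|_\infty \leq \|\widehat{|\phi_0^{(\mathrm{ref})}|}\|_1 \leq C$ and $\|\widehat{|\widetilde{\phi}_t|}\|_1 \leq C$. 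The central estimate to be established is the $L^2$-closeness
\begin{equation}\label{plan:key}
\|\phi_t^{(\mathrm{ref})} - \widetilde{\phi}_t\|_2 \leq C(t)\,\Lambda^{-1/6},
\end{equation}
from which both claims of the lemma will follow.

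Granting (\ref{plan:key}), the bound (\ref{eq:moduls-phi-phi0-L2-bound}) is immediate by the reverse triangle inequality $\||\phi_t^{(\mathrm{ref})}|-|\phi_0^{(\mathrm{ref})}|\|_2 = \||\phi_t^{(\mathrm{ref})}|-|\widetilde{\phi}_t|\|_2 \leq \|\phi_t^{(\mathrm{ref})}-\widetilde{\phi}_t\|_2$. For (\ref{eq:phi-L2infty-bound}), the middle inequality follows from the decomposition $\phi_t^{(\mathrm{ref})} = (\phi_t^{(\mathrm{ref})}-\widetilde{\phi}_t) + \widetilde{\phi}_t$ and the very definition of $\|\cdot\|_{2\wedge\infty}$; likewise, by Plancherel and (\ref{plan:key}), $\|\widehat{|\phi_t^{(\mathrm{ref})}|}\|_{1\wedge 2} \leq \||\phi_t^{(\mathrm{ref})}|-|\phi_0^{(\mathrm{ref})}|\|_2 + \|\widehat{|\phi_0^{(\mathrm{ref})}|}\|_1 \leq C(t)$. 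Then Lemma~\ref{lem:L1-norm-evolution} applied to the gauge-transformed solution $e^{-i\|U\|_1 t}\phi_t^{(\mathrm{ref})}$ (which solves the standard Hartree equation with the same modulus as $\phi_t^{(\mathrm{ref})}$) delivers $\|\widehat{\phi_t^{(\mathrm{ref})}}\|_1 \leq C(t)$, completing the proof of (\ref{eq:phi-L2infty-bound}).

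For (\ref{plan:key}), observe that the difference $u_t := \phi_t^{(\mathrm{ref})} - \widetilde{\phi}_t$ satisfies $u_0=0$ and
\[
i\partial_t u_t = \bigl(-\tfrac12\Delta + V_t\bigr)u_t - \tfrac12\Delta\widetilde{\phi}_t,
\]
where the $V_t$-contributions cancel because $V_t$ is real-valued and hence self-adjoint. Since $-\tfrac12\Delta + V_t$ generates a unitary propagator on $L^2$, Duhamel's formula immediately gives
\[
\|\phi_t^{(\mathrm{ref})}-\widetilde{\phi}_t\|_2 \leq \tfrac12\int_0^t \|\Delta\widetilde{\phi}_s\|_2\, ds,
\]
so that (\ref{plan:key}) reduces to showing $\|\Delta\widetilde{\phi}_s\|_2 \leq C(s)\Lambda^{-1/6}$.

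The main obstacle lies in this last estimate. Writing $\Phi_s := \int_0^s V_r\, dr$ and expanding,
\[
\Delta\widetilde{\phi}_s = e^{-i\Phi_s}\bigl(\Delta\phi_0^{(\mathrm{ref})} - 2i\,\nabla\Phi_s\cdot\nabla\phi_0^{(\mathrm{ref})} - i(\Delta\Phi_s)\phi_0^{(\mathrm{ref})} - (\nabla\Phi_s)^2\phi_0^{(\mathrm{ref})}\bigr),
\]
the first term is controlled directly by (\ref{eq:nabla-phi-2}). The remaining three terms require the pointwise bound $\|\nabla\Phi_s\|_{L^\infty(\operatorname{supp}\phi_0^{(\mathrm{ref})})} \leq C(s)\Lambda^{-1/3}$ together with $\|(\Delta\Phi_s)\phi_0^{(\mathrm{ref})}\|_2 \leq C(s)\Lambda^{-1/6}$. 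Since $V_s = U*(|\phi_s^{(\mathrm{ref})}|^2-1)$, derivatives of $V_s$ reduce to derivatives of $|\phi_s^{(\mathrm{ref})}|^2$ smeared by $U$; the compact support of $U$ localizes these derivatives to a neighborhood of the region where $|\phi_s^{(\mathrm{ref})}|^2$ varies, namely $\partial\Lambda$. Closing the estimate then requires a priori propagation bounds on the pointwise derivatives of $|\phi_s^{(\mathrm{ref})}|^2$, which I will bootstrap through a Grönwall argument analogous to (\ref{eq:growall-L2-diff}) using the torus comparison of Lemma~\ref{lem:torus} as the reference baseline.
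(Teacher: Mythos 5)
Your reduction to the bound $\Vert\Delta\widetilde\phi_s\Vert_2\leq C(s)\Lambda^{-1/6}$ is where the argument breaks down, and the culprit is your choice of phase. You define the phase through the \emph{time-dependent} potential $V_s=U*|\phi_s^{(\mathrm{ref})}|^2-\Vert U\Vert_1$, which indeed makes the nonlinear difference term disappear (unitarity of the propagator generated by $-\tfrac12\Delta+V_t$ plus Duhamel is fine). But then $\Delta\widetilde\phi_s$ involves $\nabla\Phi_s$ and $\Delta\Phi_s$, i.e.\ first and second spatial derivatives of $U*|\phi_r^{(\mathrm{ref})}|^2$ at \emph{positive} times $r$, in $L^\infty$ on $\operatorname{supp}\phi_0^{(\mathrm{ref})}$ and in weighted $L^2$. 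No such propagation estimates exist in the paper: Condition~\ref{def:initial-conditions} constrains only the initial data, and Lemma~\ref{lem:torus} yields only $L^2$-type information ($\Vert\nabla\epsilon_t^{\mathbb T}\Vert_2\leq C$, $\Vert\chi_r\epsilon_t^{\mathbb T}\Vert_2\leq C(t)\Lambda^{-1/3}$), on the torus, and only for repulsive $U\geq 0$ — whereas the lemma you are proving is stated for general real $U$, so a "bootstrap using the torus comparison as baseline" is doubly unavailable. A Grönwall argument "analogous to (\ref{eq:growall-L2-diff})" controls $L^2$ differences of wave functions; it does not produce the pointwise bounds $\Vert\nabla|\phi_r^{(\mathrm{ref})}|^2\Vert_\infty\lesssim\Lambda^{-1/3}$ you need, and obtaining $L^\infty$ gradient control of a Hartree-type flow is essentially a new (and harder) problem, not a routine step. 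As it stands, the key estimate (\ref{plan:key}) is therefore not proved.

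The paper sidesteps exactly this by freezing the phase at the initial data: $\widetilde\phi_t:=\exp\bigl(-itU*(|\phi_0^{(\mathrm{ref})}|^2-1)\bigr)\phi_0^{(\mathrm{ref})}$, see (\ref{eq:tilde-phi-def}). Then every derivative of $\widetilde\phi_t$ is an explicit expression in $\phi_0^{(\mathrm{ref})}$ alone, and (\ref{eq:conditions-phi}) together with (\ref{eq:nabla-phi-2}) give $\Vert\Delta\widetilde\phi_t\Vert_2\leq C(t)\Lambda^{-1/6}$ at once. The price is that the difference of generators now contains the extra term $U*\bigl(|\phi_t^{(\mathrm{ref})}|^2-|\phi_0^{(\mathrm{ref})}|^2\bigr)\widetilde\phi_t$, which the paper handles by expanding $|\phi_t^{(\mathrm{ref})}|^2-|\widetilde\phi_t|^2=|\phi_t^{(\mathrm{ref})}-\widetilde\phi_t|^2+2\Re\,\widetilde\phi_t^*(\phi_t^{(\mathrm{ref})}-\widetilde\phi_t)$ and closing a Grönwall/continuity argument in $L^2$ for large $\Lambda$ — an estimate that needs only quantities you already control. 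If you adopt the frozen phase, the rest of your scheme (Duhamel, reverse triangle inequality for (\ref{eq:moduls-phi-phi0-L2-bound}), the splitting argument for (\ref{eq:phi-L2infty-bound})) goes through; your final appeal to Lemma~\ref{lem:L1-norm-evolution} is not needed for the lemma itself, only for the subsequent corollary.
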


\begin{proof} In order to provide the bound (\ref{eq:phi-L2infty-bound}) we
  introduce the auxiliary wave function
  \begin{align} \widetilde\phi_{t} :=
    \exp\left(-it U*\left(|\phi_{0}^{(\mathrm{ref})}|^2-1\right)\right)
    \phi_{0}^{(\mathrm{ref})}, \label{eq:tilde-phi-def}
  \end{align} and using
  the evolution equation (\ref{eq:phi-evolution}) we estimate the time
  derivative
  \begin{align} 
    \partial_t \left\Vert \phi^{(\mathrm{ref})}_t -
    \widetilde\phi_t \right\Vert_2 
    & \leq 
    \left\Vert
    -\frac{1}{2}\Delta\widetilde\phi_t + U*\left(
    |\phi^{(\mathrm{ref})}_t|^2-|\phi_{0}^{(\mathrm{ref})}|^2 \right)\widetilde\phi_t
    \right\Vert_2 \nonumber 
    \\ 
    & \leq 
    \left\Vert
    \frac{1}{2}\Delta\widetilde\phi_t \right \Vert_2 + \Vert U \Vert_{1,2}
    \, \left\Vert |\phi^{(\mathrm{ref})}_t|^2-|\phi_{0}^{(\mathrm{ref})}|^2
    \right\Vert_{1\wedge2} \, 
    \left\Vert\widetilde\phi_t \right\Vert_\infty.
    \label{eq:phi-tildephi-dervative}
  \end{align} 
  We estimate the terms on the
  right-hand side of (\ref{eq:phi-tildephi-dervative}) individually:\\

  Noting that \db
  \begin{align*} 
        \nabla  \widetilde\phi_t 
    = & 
        \left(
            \left[ -it U*\nabla| \phi^{(\mathrm{ref})}_0|^2 \right]
            \phi^{(\mathrm{ref})}_0 
            +
            \nabla \phi^{(\mathrm{ref})}_0
        \right)
        \exp\left(-it U*\left(|\phi_{0}^{(\mathrm{ref})}|^2-1\right)\right)
        ,
    \\ 
    \Delta \widetilde\phi_t 
    =  &
        \bigg(
        \left[ -it U*\Delta|\phi^{(\mathrm{ref})}_0|^2 \right]  
        \phi^{(\mathrm{ref})}_0 
    +
        \left[ -it U*\nabla|\phi^{(\mathrm{ref})}_0|^2 \right]^2 
        \phi^{(\mathrm{ref})}_0 
    \\ 
    & +
        2\left[ -it U*\nabla|\phi^{(\mathrm{ref})}_0|^2 \right] 
        \nabla \phi^{(\mathrm{ref})}_0
    + 
        \Delta\phi^{(\mathrm{ref})}_0
    \bigg)
    \exp\left(-itU*\left(|\phi_{0}^{(\mathrm{ref})}|^2-1\right)\right),
  \end{align*} 
  \ed and recalling
  (\ref{eq:conditions-phi}) and (\ref{eq:nabla-phi-2}), we find
  \begin{align} 
    \Vert \nabla  \widetilde
    \phi_t \Vert_\infty 
    & \leq 
    \left( 1+2|t|\,\Vert U \Vert_1 \, \Vert
    \phi_{0}^{(\mathrm{ref})} \Vert_\infty^2 \right) \Vert \nabla
    \phi_{0}^{(\mathrm{ref})} \Vert_\infty \leq C(t)\Lambda^{-\frac{1}{3}},
    \label{eq:nabla-tilde-phi-infty} 
    \\ 
    \Vert \nabla  \widetilde \phi_t \Vert_2 
    & \leq 
    \left( 1+2|t|\,\Vert U \Vert_1 \, \Vert \phi_{0}^{(\mathrm{ref})}
    \Vert_\infty^2 \right) \Vert \nabla  \phi_{0}^{(\mathrm{ref})} \Vert_2
    \leq 
    C(t) \Lambda^{\frac{1}{6}},  
    \label{eq:nabla-tilde-phi-2} 
    \\ 
    \Vert \Delta
    \widetilde \phi_t \Vert_2 
    & \leq 
    2|t| \, \Vert U \Vert_1 \, \Vert \phi^{(\mathrm{ref})}_{0} \Vert_\infty
    \left(
      \Vert \phi^{(\mathrm{ref})}_{0} \Vert_\infty \, \Vert \Delta \phi^{(\mathrm{ref})}_{0} \Vert_2
      +
      \Vert \nabla \phi^{(\mathrm{ref})}_{0} \Vert_\infty \, \Vert \nabla \phi^{(\mathrm{ref})}_{0} \Vert_2  
    \right)
    \nonumber
    \\
    & \db \quad +
    4t^2 \, \Vert U \Vert_1^2 \, \Vert \phi^{(\mathrm{ref})}_{0} \Vert_\infty^3 \,
    \Vert \nabla \phi^{(\mathrm{ref})}_{0} \Vert_\infty \,
    \Vert \nabla \phi^{(\mathrm{ref})}_{0} \Vert_2
    \nonumber\ed
    \\
    & \db \quad +
    4|t| \, \Vert U \Vert_1  \, \Vert \phi^{(\mathrm{ref})}_{0} \Vert_\infty \, 
    \Vert \nabla \phi^{(\mathrm{ref})}_{0} \Vert_\infty \,
    \Vert \nabla \phi^{(\mathrm{ref})}_{0} \Vert_2
    \nonumber\ed
    \\
    & \quad +
    \Vert \Delta \phi^{(\mathrm{ref})}_{0} \Vert_2
    \nonumber
    \\
    & \leq 
    C(t) \Lambda^{-\frac{1}{6}}.
    \label{eq:delta-tilde-phi}
  \end{align}

  These estimates together with \db (\ref{eq:conditions-phi}),
  $|\widetilde \phi_t|=|\phi^{(\mathrm{ref})}_0|$,
  and \ed
  (\ref{eq:phi-tildephi-dervative}) ensure\db
  \begin{align*} 
    \partial_t \left\Vert \phi^{(\mathrm{ref})}_t -
    \widetilde\phi_t \right\Vert_2 
    & \leq
     C(t)\Lambda^{-\frac{1}{6}} 
    + C \left\Vert
          |\phi^{(\mathrm{ref})}_t-\widetilde\phi_t|^2
          +
          2\Re \widetilde\phi_t^*
          \left(\phi^{(\mathrm{ref})}_t-\widetilde\phi_t\right)
        \right\Vert_{1\wedge2}
    \\
    & \leq
    C(t)\Lambda^{-\frac{1}{6}} 
    + C \left(\left\Vert \phi^{(\mathrm{ref})}_t-\widetilde\phi_t
    \right\Vert_2^2
    +2\Vert\phi^{(\mathrm{ref})}_0\Vert_\infty \, \left\Vert
    \phi^{(\mathrm{ref})}_t-\widetilde\phi_t \right\Vert_2
    \right). 
  \end{align*}
  Assume that there is a $0\leq \overline t\leq \infty$ such $\left\Vert
    \phi^{(\mathrm{ref})}_t-\phi^{(\mathrm{ref})}_0 \right\Vert_2\leq 1$ for all
    $t\in[0,\overline t]$. In this case we find
  \begin{align*}
    \partial_t \left\Vert \phi^{(\mathrm{ref})}_t -
    \widetilde\phi_t \right\Vert_2& \leq
    C(t)\Lambda^{-1/6} + C\left\Vert
    \phi^{(\mathrm{ref})}_t-\phi^{(\mathrm{ref})}_0 \right\Vert_2,
  \end{align*} 
  which thanks to Grönwall's Lemma and
  $\phi^{(\mathrm{ref})}_{0}=\widetilde\phi_0$ implies
  \begin{align}
    \left\Vert \phi^{(\mathrm{ref})}_t - \widetilde\phi_t \right\Vert_2 \leq
    C(t)\Lambda^{-\frac{1}{6}}
    \qquad
    \text{for }t\in[0,\overline t].\label{eq:phi-tildephi-L2}
  \end{align}
  Clearly, upon choosing $\Lambda$ sufficiently
  large the supremum of such times $\overline t$ in infinite. Hence,
  (\ref{eq:phi-tildephi-L2}) holds for all $t\in\mathbb R$ provided $\Lambda$ is
  sufficient large. 
  \ed
  In conclusion, due to (\ref{eq:conditions-phi}) we observe
  \begin{align*} \left\Vert
    \phi_{t}^{(\mathrm{ref})} \right\Vert_{2\wedge\infty} 
    & \leq 
    \left\Vert \,
    \reallywidehat{\, |\phi_{t}^{(\mathrm{ref})}|} \, \right\Vert_{1,2} 
    \leq \left\Vert
    \,\reallywidehat{\,|\widetilde\phi_t|}\, \right\Vert_1 + \left\Vert
    \phi_{t}^{(\mathrm{ref})} - \widetilde\phi_t \right\Vert_2 \leq \left\Vert
    \,\reallywidehat{\,|\phi^{(\mathrm{ref})}_0|}\, \right\Vert_1 + \left\Vert
    \phi_{t}^{(\mathrm{ref})} - \widetilde\phi_t \right\Vert_2 \\ & \leq
    C+C(t)\Lambda^{-\frac{1}{6}},
  \end{align*}
  which implies that the claim
  (\ref{eq:phi-L2infty-bound}) is true.
  \\

  Moreover, claim (\ref{eq:moduls-phi-phi0-L2-bound}) can be seen by
  (\ref{eq:phi-tildephi-L2}) and
  \begin{align*} \left\Vert
    |\phi_{t}^{(\mathrm{ref})}| - |\phi_{0}^{(\mathrm{ref})}| \right\Vert_2 &
    \leq \left\Vert \phi_{t}^{(\mathrm{ref})}-\widetilde\phi_t \right\Vert_2.
  \end{align*} 
\end{proof}

Lemma~\ref{lem:L1-norm-evolution} and Lemma~\ref{lem:phi-propagation} imply the following
corollary.
\begin{cor}\label{lem:phi-propagationa} 
    \db Let $U\in\mathcal C^\infty_c(\mathbb R^3,\mathbb R)$ be a general 
  potential, and let $\Lambda$ be sufficiently large. \ed
    There is a
  $C\in\mathrm{Bounds}$ such that \[ \Vert \phi^{(\mathrm{ref})}_{t}
    \Vert_\infty \leq \left\Vert \widehat{\phi^{(\mathrm{ref})}_{t}}
    \right\Vert_1 \leq C(t).  \]
  \end{cor}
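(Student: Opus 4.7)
The corollary is a direct application of Lemma~\ref{lem:L1-norm-evolution} to $\phi^{(\mathrm{ref})}_t$, after absorbing the extra phase shift coming from the $-\|U\|_1$ term in equation (\ref{eq:phi-evolution}). The plan is as follows.

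First I would set $\zeta_t := e^{-i\|U\|_1 t}\phi^{(\mathrm{ref})}_t$. A short computation using (\ref{eq:phi-evolution}) shows that this $\zeta_t$ obeys
\[
    i\partial_t\zeta_t(x) = \left(-\tfrac{1}{2}\Delta + U*|\zeta_t|^2(x)\right)\zeta_t(x),
\]
which is exactly the nonlinear Schrödinger equation in the hypothesis of Lemma~\ref{lem:L1-norm-evolution}. Moreover the phase is unimodular, so pointwise $|\zeta_t| = |\phi^{(\mathrm{ref})}_t|$ and $|\widehat{\zeta_t}| = |\widehat{\phi^{(\mathrm{ref})}_t}|$; in particular $\|\widehat{\zeta_t}\|_1 = \|\widehat{\phi^{(\mathrm{ref})}_t}\|_1$ for every $t$.

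Next I would verify the two inputs required by Lemma~\ref{lem:L1-norm-evolution}. The initial bound $\|\widehat{\zeta_0}\|_1 \le C_1$ is precisely $\|\widehat{\phi^{(\mathrm{ref})}_0}\|_1 \le C$, which is part of Condition~\ref{def:initial-conditions} (the assumption on $\phi^{(\mathrm{ref})}_0$ in (\ref{eq:conditions-phi}), together with the fact that the cut-off constraint (\ref{eq:phi-cutoff-constraint-epsilon-decay}) makes $\phi^{(\mathrm{ref})}_0$ non-negative so that $\widehat{\phi^{(\mathrm{ref})}_0} = \widehat{|\phi^{(\mathrm{ref})}_0|}$). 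The time-dependent a priori bound $\|\widehat{\,|\zeta_t|\,}\|_{1\wedge 2} \le C_2(t)$ is exactly (\ref{eq:phi-L2infty-bound}) from Lemma~\ref{lem:phi-propagation}, applied to $|\zeta_t| = |\phi^{(\mathrm{ref})}_t|$.

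Having verified the hypotheses, Lemma~\ref{lem:L1-norm-evolution} yields a $C_3 \in \mathrm{Bounds}$ with $\|\widehat{\zeta_t}\|_1 \le C_3(t)$, and therefore $\|\widehat{\phi^{(\mathrm{ref})}_t}\|_1 \le C_3(t)$. The remaining inequality $\|\phi^{(\mathrm{ref})}_t\|_\infty \le \|\widehat{\phi^{(\mathrm{ref})}_t}\|_1$ is just Fourier inversion (since the inverse Fourier integral is bounded in absolute value by the $L^1$-norm of $\widehat{\phi^{(\mathrm{ref})}_t}$ up to the convention-dependent constant, which is harmlessly absorbed into $C(t)$). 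There is no real obstacle here; the corollary is essentially a bookkeeping step whose whole content is already contained in the preceding lemma, with the only minor subtlety being the phase rotation needed to bring (\ref{eq:phi-evolution}) into the exact form required by Lemma~\ref{lem:L1-norm-evolution}.
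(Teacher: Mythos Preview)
Your proposal is correct and matches the paper's approach exactly: the paper simply states that Lemma~\ref{lem:L1-norm-evolution} and Lemma~\ref{lem:phi-propagation} together imply the corollary, and you have filled in precisely those details, including the phase rotation needed to put (\ref{eq:phi-evolution}) into the form assumed in Lemma~\ref{lem:L1-norm-evolution}. The only step the paper leaves implicit and which you make explicit is the identification $\widehat{\phi^{(\mathrm{ref})}_0}=\widehat{|\phi^{(\mathrm{ref})}_0|}$ via the non-negativity of $\phi^{(\mathrm{ref})}_0$ forced by (\ref{eq:phi-cutoff-constraint-epsilon-decay}); this is the intended reading.
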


\subsubsection{Estimates for
  $\varphi_t$}\label{sec:propagation-varphi}

\begin{lem}\label{lem:varphi-L2infty-bound} 
  \db Let $U\in\mathcal C^\infty_c(\mathbb R^3,\mathbb R^+_0)$ be a repulsive
  potential. \ed
    There exists a
  $C\in\mathrm{Bounds}$ such that 
  \[ 
    \Vert\varphi_t\Vert_{2\wedge\infty}\leq
    \left\Vert \, \reallywidehat{\,|\varphi_t|\,} \, \right\Vert_{1\wedge2} \leq C(t).  
  \]
\end{lem}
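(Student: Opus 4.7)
The plan is as follows. The first inequality in the statement follows directly from the characterization of the $\Vert\cdot\Vert_{2\wedge\infty}$-norm displayed in (\ref{eq:2inf-12}), applied with $\varphi_t$ in place of $\varphi^{\mathbb T}_t$. The work lies in the second inequality, which I handle in the spirit of Lemma~\ref{lem:phi-propagation} and Lemma~\ref{lem:torus}: I construct an auxiliary wave function $\widetilde\varphi_t$ such that (a) $\Vert\widehat{|\widetilde\varphi_t|}\Vert_{1\wedge 2}\leq C(t)$, and (b) $\Vert\varphi_t-\widetilde\varphi_t\Vert_2\leq C(t)$. Combining (a) and (b) with the pointwise reverse triangle inequality $\bigl||\varphi_t|-|\widetilde\varphi_t|\bigr|\leq|\varphi_t-\widetilde\varphi_t|$, Plancherel's theorem (which yields $\Vert\widehat{f}\Vert_{1\wedge 2}\leq\Vert f\Vert_2$), and the triangle inequality for $\Vert\cdot\Vert_{1\wedge 2}$, one concludes $\Vert\widehat{|\varphi_t|}\Vert_{1\wedge 2}\leq C(t)$.

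The natural candidate for the auxiliary function, motivated by the decomposition $\varphi_t e^{i\Vert U\Vert_1 t}=\phi^{(\mathrm{ref})}_t+\epsilon_t$ and the parallel torus identity $\varphi^{\mathbb T}_t e^{i\Vert U\Vert_1 t}=1+\epsilon^{\mathbb T}_t$, is
\[
    \widetilde\varphi_t := e^{-i\Vert U\Vert_1 t}\Bigl(\phi^{(\mathrm{ref})}_t + (1-\chi_{1/4})\,\epsilon^{\mathbb T}_t\Bigr),
\]
where the cutoff $1-\chi_{1/4}$ localizes the torus-excitation to the interior of $\Lambda$ so that it defines an $L^2$-function on $\mathbb R^3$; by the cutoff estimate (\ref{eq:epsilon-cutoff}), this restriction costs only $O(\Lambda^{-1/3})$ in $L^2$. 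For (a), the reverse triangle inequality for moduli yields $|\widetilde\varphi_t|=|\phi^{(\mathrm{ref})}_t|+g_t$ with $\Vert g_t\Vert_2\leq\Vert\epsilon^{\mathbb T}_t\Vert_2\leq C(t)$ (using (\ref{eq:epsilon-L2-bound-torus})), so Lemma~\ref{lem:phi-propagation} together with Plancherel gives $\Vert\widehat{|\widetilde\varphi_t|}\Vert_{1\wedge 2}\leq\Vert\widehat{|\phi^{(\mathrm{ref})}_t|}\Vert_{1\wedge 2}+\Vert g_t\Vert_2\leq C(t)$.

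For (b), I derive the evolution equation of $\widetilde\varphi_t$ by combining (\ref{eq:phi-evolution}) and (\ref{eq:epsilon-torus}), and compare it to (\ref{eq:varphi-evolution}). Three families of error terms arise: (i) commutators of the Laplacian with the cutoff $\chi_{1/4}$, which contribute $O(\Lambda^{-1/3})$ via $\Vert\nabla\chi_r\Vert_\infty\leq C\Lambda^{-1/3}$ together with the kinetic bound (\ref{eq:kinetic-estimate}); (ii) the discrepancy between torus and $\mathbb R^3$ convolutions with the compactly supported $U$, controlled by the localization estimate (\ref{eq:epsilon-cutoff}); and (iii) cross-terms involving $\Vert\varphi_s\Vert_{2\wedge\infty}$, which I absorb via the self-consistent bootstrap $\Vert\varphi_s\Vert_{2\wedge\infty}\leq\Vert\widetilde\varphi_s\Vert_{2\wedge\infty}+\Vert\varphi_s-\widetilde\varphi_s\Vert_2$ combined with (a). Computing $\partial_t\Vert\varphi_t-\widetilde\varphi_t\Vert_2^2$ in parallel to (\ref{eq:growall-L2-diff}), applying the elementary inequality $x\leq 1+x^2$ exactly as in the proof of Lemma~\ref{lem:torus}, and invoking Grönwall's Lemma then yields (b).

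The main obstacle is step (b): carefully isolating each of the three error families and showing that they all admit bounds which are either uniform in $\Lambda$ or decay as a negative power of $\Lambda$, so that the Grönwall inequality closes without any circular use of the very $L^\infty$ bound one seeks. The cutoff-localization estimate (\ref{eq:epsilon-cutoff}), together with the constant-order kinetic bound (\ref{eq:kinetic-estimate}) on the torus, is the crucial input that makes the torus control transferable back to $\mathbb R^3$.
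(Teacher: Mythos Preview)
Your approach is plausible but takes a genuinely different route from the paper. The paper does not use an additive auxiliary function built from $\phi_t^{(\mathrm{ref})}$ and a cut-off torus excitation; instead it introduces the \emph{multiplicative} ansatz $\widetilde\varphi_t := \widetilde\phi_t\,\varphi_t^{\mathbb T}$, where $\widetilde\phi_t$ is the explicit phase-modulated reference state from (\ref{eq:tilde-phi-def}). This choice buys several simplifications. First, since $\widetilde\phi_t$ is compactly supported in $\Lambda$, the product $\widetilde\phi_t\varphi_t^{\mathbb T}$ is automatically well-defined on $\mathbb R^3$ without any cutoff on $\epsilon_t^{\mathbb T}$, so your commutator error family~(i) never appears. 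Second, because $\widetilde\phi_t$ evolves by pure multiplication by a local phase, the derivative $i\partial_t(\varphi_t-\widetilde\varphi_t)$ organizes itself so that the only source terms are $\nabla\widetilde\phi_t\cdot\nabla\varphi_t^{\mathbb T}$, $(\Delta\widetilde\phi_t)\varphi_t^{\mathbb T}$, and the potential discrepancy $U*\bigl(|\varphi_t|^2-|\phi_0^{(\mathrm{ref})}|^2+1-|\varphi_t^{\mathbb T}|^2\bigr)\widetilde\varphi_t$. The first two are small by (\ref{eq:nabla-tilde-phi-infty}), (\ref{eq:delta-tilde-phi}), (\ref{eq:kinetic-estimate}); for the last the multiplicative identity $|\widetilde\varphi_t|^2=|\phi_0^{(\mathrm{ref})}|^2|\varphi_t^{\mathbb T}|^2$ reduces the potential term to (\ref{eq:phidifferenzen}) plus $|\varphi_t|^2-|\widetilde\varphi_t|^2$, and the latter contributes only $\|\varphi_t-\widetilde\varphi_t\|_2$ and $\|\varphi_t-\widetilde\varphi_t\|_2^2$ to the Gr\"onwall inequality. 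In particular the paper's Gr\"onwall argument closes \emph{without} any appearance of $\|\varphi_s\|_{2\wedge\infty}$, so the self-consistent bootstrap you describe in~(iii) is not needed. Your additive ansatz should also work, but you will have to handle the cutoff commutators, the discrepancy between $U*|\phi_t^{(\mathrm{ref})}|^2-\|U\|_1$ and zero on the support of $(1-\chi_{1/4})\epsilon_t^{\mathbb T}$ (which forces you to combine (\ref{eq:moduls-phi-phi0-L2-bound}), (\ref{eq:phi-cutoff-constraint-epsilon-decay}), and (\ref{eq:epsilon-cutoff}) carefully on the annulus), and the bootstrap --- all avoidable overhead compared to the paper's multiplicative comparison.
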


\begin{proof} In order to provide the desired bound we introduce the auxiliary
  wave function 
  \begin{align} 
    \label{eq:varphi-tilde}
    \widetilde\varphi_{t} := \widetilde \phi_{t}
    \varphi_t^{\mathbb T}. 
  \end{align}
  Using the evolution equation (\ref{eq:varphi-evolution}) on
  $\mathbb R^3$, the corresponding one on the torus $\mathbb T$, and definition
  (\ref{eq:tilde-phi-def}), we compute the time derivative
  \begin{align*}
    i\partial_t ( \varphi_t - \widetilde\varphi_t ) 
    = & 
    \db \left(
    -\frac{1}{2}\Delta + U*|\varphi_t|^2 \right) \ed ( \varphi_t -
    \widetilde\varphi_t ) 
    \\ 
    & - \frac{1}{2}\Delta \widetilde\varphi_t +
    \widetilde\phi_t \frac{1}{2}\Delta \varphi_t^{\mathbb T} 
    \\ 
    & + U*\left(
    |\varphi_t|^2 - |\phi^{(\mathrm{ref})}_0|^2+1-|\varphi^{\mathbb T}_t|^2 \right)
    \widetilde\varphi_t.
  \end{align*} 
  \db Recall that $|\widetilde \phi_t|=|\phi^{(\mathrm{ref})}_0|$. \ed In consequence, we get the estimate
  \begin{align*} \partial_t \Vert \varphi_t - \widetilde\varphi_t \Vert_2 \leq
    & \, \Vert \nabla  \widetilde\phi_t \Vert_\infty \, \Vert \nabla
    \varphi_t^{\mathbb T} \Vert_2 + \frac{1}{2} \Vert \Delta \widetilde\phi_t \Vert_2
    \, \Vert \varphi^{\mathbb T}_t \Vert_\infty\\ & +\Vert U \Vert_{1,2} \, \left\Vert
    |\varphi_t|^2 - \db |\widetilde\phi_t|^2\ed + 1-|\varphi^{\mathbb T}_t|^2
    \right\Vert_{1\wedge2} \, \Vert \widetilde\varphi_t \Vert_\infty.
    \end{align*} Furthermore, we consider the bounds:
  \begin{itemize} 
\item The bounds in (\ref{eq:kinetic-estimate}), (\ref{eq:nabla-tilde-phi-infty}), (\ref{eq:delta-tilde-phi}) and 
  Corollary~\ref{cor:epsilon-varphi-torus} ensure
  \begin{align*} \Vert \nabla
    \widetilde\phi_t \Vert_\infty \, \Vert \nabla  \varphi_t^{\mathbb T} \Vert_2 +
    \frac{1}{2} \Vert \Delta \widetilde\phi_t \Vert_2 \, \Vert \varphi^{\mathbb T}_t
    \Vert_\infty \leq C(t)\Lambda^{-\frac{1}{6}};
\end{align*}
      \item
          Definition (\ref{eq:tilde-phi-def}) and 
          Corollary~\ref{cor:epsilon-varphi-torus} imply
          \begin{align*} \Vert
          \widetilde\varphi_t \Vert_\infty \leq \Vert \widetilde\phi_t
        \Vert_\infty \, \Vert \varphi^{\mathbb T}_t \Vert_\infty \leq C(t);
      \end{align*}
  \item\db
    \begin{align}
    \left\Vert |\varphi_t|^2 - |\widetilde\phi_t|^2 + 1-|\varphi^{\mathbb T}_t|^2
    \right\Vert_{1\wedge2}
    & \leq 
    \left\Vert |\widetilde\varphi_t|^2 -
    |\widetilde\phi_t|^2 + 1-|\varphi^{\mathbb T}_t|^2 \right\Vert_{2} +
    \left\Vert |\varphi_t|^2 - |\widetilde\varphi_t|^2 \right\Vert_{1\wedge2}
    \label{eq:pre-sum-of-L12norms};
  \end{align} \ed
  \item Recall definition (\ref{eq:epsilon-evolution-torus}). Using the identity
      \begin{align*}
        |\widetilde\varphi_t|^2-|\widetilde\phi_t|^2+1-|\varphi^{\mathbb T}_t|^2 
        & =
        \db 
        |1+\epsilon_t^{\mathbb T}|^2 \, |\widetilde\phi_t|^2 
        -|\widetilde\phi_t|^2
        + 1
        -|1+\epsilon_t^{\mathbb
        T}|^2 
        \\ 
        & = 
        \left(\epsilon_t^{\mathbb T}+{\epsilon_t^{\mathbb T}}^*+|\epsilon_t^{\mathbb T}|^2\right)
        \left(|\widetilde\phi_t|^2-1\right)
      \end{align*} we find
      \begin{align*} \left\Vert
        |\widetilde\varphi_t|^2-|\widetilde\phi_t|^2+1
        -|\varphi^{\mathbb T}_t|^2 \right\Vert_{2} \leq & 
        \,
        \left(2+\Vert\epsilon_t^{\mathbb T}\Vert_\infty
        \right)
        \left\Vert
        \epsilon_t^{\mathbb T}\left( |\widetilde\phi_t|-1 \right) \right\Vert_{2}
        \left\Vert |\widetilde\phi_t|+1 \right\Vert_{\infty}.
      \end{align*} 
      Moreover,
      $|\widetilde\phi_t|-1=|\phi^{(\mathrm{ref})}_0|-1\leq \chi_\Lambda$ as required in
      (\ref{eq:phi-cutoff-constraint-epsilon-decay}), so that by
      Lemma~\ref{lem:torus} \[
      \left\Vert\epsilon_t^{\mathbb T}\left(|\widetilde\phi_0|-1\right)\right\Vert_{2} \leq
    \left\Vert\epsilon_t^{\mathbb T}\chi_{\Lambda}\right\Vert_{2} \leq
  C(t)\Lambda^{-\frac{1}{3}}, \] and hence, by
 $|\widetilde \phi_t|=|\phi^{(\mathrm{ref})}_0|$,
(\ref{eq:conditions-phi}),
%  (\ref{eq:moduls-phi-phi0-L2-bound})
and
  Corollary~\ref{cor:epsilon-varphi-torus}
  \begin{align}\label{eq:phidifferenzen} \left\Vert
    |\widetilde\varphi_t|^2 - |\widetilde\phi_t|^2 + 1-|\varphi^{\mathbb T}_t|^2
  \right\Vert_{2} \leq C(t)\Lambda^{-\frac{1}{3}};
  \end{align} 
  \item \db This implies
    \begin{align}
    (\ref{eq:pre-sum-of-L12norms})  
    & \leq C(t)\Lambda^{-\frac{1}{3}} 
    + \left\Vert 
        |\varphi_t - \widetilde\varphi_t|^2 
        + 2\Re \widetilde\varphi_t^* (\varphi_t-\widetilde\varphi_t) 
      \right\Vert_{1\wedge 2} 
    \nonumber 
    \\ 
    & \leq \db
    C(t)\Lambda^{-\frac{1}{3}} + 
    C(t)\left(
        \Vert\varphi_t - \widetilde\varphi_t \Vert_2^2
        +
        \Vert\varphi_t - \widetilde\varphi_t \Vert_2^2
    \right).\ed
    \label{eq:sum-of-L12norms}
  \end{align}\ed 
\end{itemize} 
These
    ingredients yield the bound
    \begin{align*} \partial_t \Vert \varphi_t -
      \widetilde\varphi_t \Vert_2 \leq C(t)\left( \Lambda^{-\frac{1}{6}} + \Vert
      \varphi_t - \widetilde\varphi_t \Vert_2 + \Vert \varphi_t -
      \widetilde\varphi_t \Vert_2^2 \right).
    \end{align*} 
    With Grönwall's Lemma, $\varphi_0=\widetilde\varphi_0$, and a similar
    argument as used in the proof of Lemma~\ref{lem:phi-propagation},
    we may therefore conclude that 
    \begin{align}
      \Vert \varphi_t - \widetilde\varphi_t \Vert_2 \leq C(t)\Lambda^{-\frac{1}{6}}
      \label{eq:varphi-tildevarphi-diff-L2bound}
    \end{align} 
    holds for all $t\in\mathbb R$ provided $\Lambda$ is sufficiently large.
    This implies
    \begin{align*} \Vert \varphi_t \Vert_{2\wedge\infty} \leq \Vert
      \varphi_t - \widetilde\varphi_t \Vert_2 + \Vert \widetilde\varphi_t
      \Vert_\infty \leq C(t)(\Lambda^{-\frac{1}{6}} + 1)
    \end{align*} which proves the
    claim.
  \end{proof}

Lemma~\ref{lem:L1-norm-evolution} and Lemma~\ref{lem:varphi-L2infty-bound} imply
the following corollary.
\begin{cor}\label{lem:propagation-varphi}
    \db Let $U\in\mathcal C^\infty_c(\mathbb R^3,\mathbb R^+_0)$ be a repulsive
  potential. \ed
    There exists a
  $C\in\mathrm{Bounds}$ such that 
  \begin{align}
    \Vert\epsilon_t\Vert_\infty\leq1+\Vert\varphi_t\Vert_\infty\leq1+\Vert\widehat\varphi_t\Vert_1\leq1+C(t).
    \tag{\ref{eq:varphi-Linfty-bound}}
\end{align}
\end{cor}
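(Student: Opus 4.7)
The plan is to deduce this corollary as a direct packaging of the two preceding results, with essentially no new analytic content. First I would invoke Lemma~\ref{lem:L1-norm-evolution} with $\zeta_t := \varphi_t$, which solves the nonlinear Schr\"odinger equation (\ref{eq:varphi-evolution}) in the precise form required by that lemma. To apply the lemma I must verify its two hypotheses: an initial bound $\Vert\widehat{\varphi_0}\Vert_1 \leq C_1$ and an a priori propagated bound $\Vert\widehat{\,|\varphi_t|\,}\Vert_{1\wedge 2} \leq C_2(t)$. The initial bound follows because $\varphi_0 = \phi_0^{(\mathrm{ref})} + \epsilon_0$ is smooth and compactly supported, with the structural Fourier-$L^1$ estimates in (\ref{eq:conditions-phi}) and (\ref{eq:conditions-epsilon}) controlling it uniformly in $\Lambda$. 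The propagated bound is \emph{exactly} the statement of Lemma~\ref{lem:varphi-L2infty-bound}, which has just been established in the previous subsection.

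With both hypotheses verified, Lemma~\ref{lem:L1-norm-evolution} delivers $\Vert\widehat{\varphi_t}\Vert_1 \leq C(t) \in \mathrm{Bounds}$, which is the rightmost inequality of the claim (up to the harmless additive constant~$1$). The middle inequality $\Vert\varphi_t\Vert_\infty \leq \Vert\widehat{\varphi_t}\Vert_1$ is the elementary Fourier inversion estimate and requires no further argument.

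For the leftmost inequality concerning the excitation, I would use the definition (\ref{eq:epsilon-def}), namely $\epsilon_t = \varphi_t\, e^{i\Vert U\Vert_1 t} - \phi_t^{(\mathrm{ref})}$, and the triangle inequality in $L^\infty$ to obtain $\Vert\epsilon_t\Vert_\infty \leq \Vert\varphi_t\Vert_\infty + \Vert\phi_t^{(\mathrm{ref})}\Vert_\infty$. The second term is bounded by a quantity in $\mathrm{Bounds}$ thanks to Corollary~\ref{lem:phi-propagationa}, and can be absorbed into the overall $C(t) \in \mathrm{Bounds}$ on the right-hand side of the displayed chain (the ``$1$'' in the statement plays the role of this absorbed contribution). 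No step here presents any real obstacle: the genuine technical work was carried out in Lemma~\ref{lem:varphi-L2infty-bound} (which supplies the a priori $L^{1\wedge 2}$ control via the splitting $\varphi_t = \widetilde\varphi_t + (\varphi_t-\widetilde\varphi_t)$ and a Gr\"onwall argument) and in Lemma~\ref{lem:L1-norm-evolution} (which converts that a priori control into a true $L^1$-Fourier bound through Gr\"onwall applied to $\partial_t \Vert\widehat{\varphi_t}\Vert_1$). The corollary is merely the assembly of these ingredients.
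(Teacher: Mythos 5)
Your proposal is correct and takes essentially the same route as the paper, whose entire proof is the one-line remark that Lemma~\ref{lem:L1-norm-evolution} and Lemma~\ref{lem:varphi-L2infty-bound} imply the corollary -- precisely the assembly you spell out (initial Fourier-$L^1$ bound from Condition~\ref{def:initial-conditions}, propagated $1\wedge 2$ bound from Lemma~\ref{lem:varphi-L2infty-bound}, then Lemma~\ref{lem:L1-norm-evolution} and Fourier inversion). The only cosmetic difference is that you bound $\Vert\phi_t^{(\mathrm{ref})}\Vert_\infty$ via Corollary~\ref{lem:phi-propagationa} and absorb it into $C(t)$ instead of reproducing the literal additive constant $1$ in the displayed chain, which does not affect the substance of the statement.
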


\subsubsection{Estimates for $\epsilon_t$}\label{sec:propagation-epsilon}

\begin{lem}\label{lem:propagation-epsilon} 
\db Let $U\in\mathcal C^\infty_c(\mathbb R^3,\mathbb R^+_0)$ be a repulsive
  potential and $\Lambda$ be sufficiently large. \ed
    There exist
  $C_1,C_2\in\mathrm{Bounds}$ such that for all $1/4\leq r<1$
  \begin{align}
    \Vert \epsilon_t \Vert_2
    & \leq C_1(t), \tag{\ref{eq:epsilon-L2-bound}} \\ \Vert
    p^{(\mathrm{ref})}_t \epsilon_t \Vert_2 & \leq
    \frac{C_2(t)}{\Lambda^{1/2}}, \tag{\ref{eq:p-epsilon-L2-bound}} \\
    \Vert \chi_r\epsilon_t \Vert_2 & \leq C(t)\Lambda^{-\frac{1}{3}}.
    \label{eq:m-epsilon-L2}
  \end{align}
\end{lem}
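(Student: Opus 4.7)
The plan is to prove the three bounds in turn, all via the auxiliary wave functions $\widetilde\phi_t$ from~(\ref{eq:tilde-phi-def}) and $\widetilde\varphi_t=\widetilde\phi_t\,\varphi_t^{\mathbb T}$ from~(\ref{eq:varphi-tilde}), exploiting the algebraic identity $\widetilde\varphi_t\,e^{i\Vert U\Vert_1 t} = \widetilde\phi_t\,(1+\epsilon_t^{\mathbb T})$ that follows from multiplying~(\ref{eq:varphi-tilde}) with the phase and using $\varphi_t^{\mathbb T}\,e^{i\Vert U\Vert_1 t} = 1+\epsilon_t^{\mathbb T}$ from~(\ref{eq:epsilon-evolution-torus}). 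Inserting this into $\epsilon_t=\varphi_t\,e^{i\Vert U\Vert_1 t}-\phi_t^{(\mathrm{ref})}$ yields the clean decomposition
\[
  \epsilon_t
  \;=\;(\varphi_t-\widetilde\varphi_t)\,e^{i\Vert U\Vert_1 t}
  \;+\;(\widetilde\phi_t-\phi_t^{(\mathrm{ref})})
  \;+\;\widetilde\phi_t\,\epsilon_t^{\mathbb T}.
\]
For~(\ref{eq:epsilon-L2-bound}), the plan is to take $L^2$ norms and use~(\ref{eq:varphi-tildevarphi-diff-L2bound}), (\ref{eq:phi-tildephi-L2}), (\ref{eq:epsilon-L2-bound-torus}), $|\widetilde\phi_t|=|\phi_0^{(\mathrm{ref})}|$, and Condition~\ref{def:initial-conditions} to bound the three summands by $C(t)\Lambda^{-1/6}$, $C(t)\Lambda^{-1/6}$, and $C(t)$, respectively.

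For~(\ref{eq:m-epsilon-L2}), the plan is to run a Grönwall argument on $\Vert\chi_r\epsilon_t\Vert_2^2$ directly, paralleling the derivation of~(\ref{eq:epsilon-cutoff}) on the torus but using the full equation~(\ref{eq:epsilon-evolution}) on $\mathbb R^3$. The kinetic commutator $\langle\epsilon_t,[\Delta/2,\chi_r^2]\epsilon_t\rangle$ requires a $\Lambda$-uniform bound on $\Vert\nabla\epsilon_t\Vert_2$, which I obtain by combining the torus kinetic estimate~(\ref{eq:kinetic-estimate}) with the gradient control~(\ref{eq:nabla-tilde-phi-infty})--(\ref{eq:delta-tilde-phi}) on the auxiliary functions through the decomposition above. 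The extra potential $U*|\phi_t^{(\mathrm{ref})}|^2-\Vert U\Vert_1$ absent on the torus is small thanks to~(\ref{eq:phi-cutoff-constraint-epsilon-decay}) and~(\ref{eq:moduls-phi-phi0-L2-bound}), which force $|\phi_t^{(\mathrm{ref})}|^2-1$ to live essentially outside $\mathcal B_{\Lambda^{1/3}/2}$ with small $L^2$ mass, and the remaining convolution terms split via $U(x-y)\chi_r^2(x) = U(x-y)\chi_r^2(y)+U(x-y)(\chi_r^2(x)-\chi_r^2(y))$ exactly as in~(\ref{eq:conv-1})--(\ref{eq:conv-2}), with the second piece producing the $\Lambda^{-1/3}$ gain via~(\ref{eq:Um-decay}).

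The delicate bound is~(\ref{eq:p-epsilon-L2-bound}). Since $\Vert p_t^{(\mathrm{ref})}\epsilon_t\Vert_2 = |\langle\phi_t^{(\mathrm{ref})},\epsilon_t\rangle|/\Vert\phi_t^{(\mathrm{ref})}\Vert_2$, and since $\Vert\phi_t^{(\mathrm{ref})}\Vert_2 = \Vert\phi_0^{(\mathrm{ref})}\Vert_2 = \Lambda^{1/2}+O(\Lambda^{-1/2})$ by conservation of the $L^2$ norm under~(\ref{eq:phi-evolution}) together with Condition~\ref{def:initial-conditions}, it suffices to prove $|\langle\phi_t^{(\mathrm{ref})},\epsilon_t\rangle|\le C(t)$. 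The real part is algebraic: expanding the conserved quantity $\Lambda=\Vert\varphi_t\Vert_2^2=\Vert\phi_t^{(\mathrm{ref})}+\epsilon_t\Vert_2^2$ (phases drop out of the modulus) and using that $\Vert\phi_t^{(\mathrm{ref})}\Vert_2^2$ is also conserved yields
\[
  2\Re\,\langle\phi_t^{(\mathrm{ref})},\epsilon_t\rangle
  \;=\;2\Re\langle\phi_0^{(\mathrm{ref})},\epsilon_0\rangle+\Vert\epsilon_0\Vert_2^2-\Vert\epsilon_t\Vert_2^2,
\]
which is bounded by $C(t)$ using~(\ref{eq:epsilon-L2-bound}) and the compact support of $\epsilon_0$ in the flat region of $\phi_0^{(\mathrm{ref})}$. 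For the imaginary part, differentiating in $t$ and exploiting the cancellation of the kinetic and $\Vert U\Vert_1$ pieces of~(\ref{eq:varphi-evolution}) and~(\ref{eq:phi-evolution}) gives
\[
  \tfrac{d}{dt}\langle\phi_t^{(\mathrm{ref})},\epsilon_t\rangle
  \;=\;-i\,\langle\phi_t^{(\mathrm{ref})},\,U*(|\varphi_t|^2-|\phi_t^{(\mathrm{ref})}|^2)\,\varphi_t\,e^{i\Vert U\Vert_1 t}\rangle.
\]
The zeroth moment $\int(|\varphi_t|^2-|\phi_t^{(\mathrm{ref})}|^2) = \Lambda - \Vert\phi_t^{(\mathrm{ref})}\Vert_2^2 = O(1)$ by the same conservations, while $\phi_t^{(\mathrm{ref})*}\,\varphi_t\,e^{i\Vert U\Vert_1 t}$ differs from $1$ only near the boundary of $\Lambda$ (controlled by~(\ref{eq:moduls-phi-phi0-L2-bound}) and~(\ref{eq:phi-cutoff-constraint-epsilon-decay})) and where $\epsilon_t$ concentrates (controlled by~(\ref{eq:m-epsilon-L2}) together with the $L^\infty$ bounds of Corollaries~\ref{lem:phi-propagationa} and~\ref{lem:propagation-varphi}); these observations bound the right-hand side in modulus by $C(t)$, and a Grönwall argument closes the estimate. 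The main obstacle is precisely this final step: two $O(\Lambda)$ quantities $\langle\phi_t^{(\mathrm{ref})},\varphi_t\rangle\,e^{i\Vert U\Vert_1 t}$ and $\Vert\phi_t^{(\mathrm{ref})}\Vert_2^2$ must cancel to $O(1)$, and the needed cancellation is extracted only by combining both conservation laws with the bulk near-constancy of $\phi_t^{(\mathrm{ref})*}\,\varphi_t\,e^{i\Vert U\Vert_1 t}$.
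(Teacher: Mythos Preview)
Your decomposition
\[
  \epsilon_t=(\varphi_t-\widetilde\varphi_t)e^{i\Vert U\Vert_1 t}+(\widetilde\phi_t-\phi_t^{(\mathrm{ref})})+\widetilde\phi_t\,\epsilon_t^{\mathbb T}
\]
is correct and is in fact the backbone of the paper's argument. For~(\ref{eq:epsilon-L2-bound}) your route is cleaner than the paper's: the paper runs a Gr\"onwall on $\Vert\epsilon_t\Vert_2^2$ and has to estimate $\Vert\,|\varphi_t|^2-|\phi_t^{(\mathrm{ref})}|^2\,\Vert_{1\wedge2}$ by splitting into three pieces, whereas you read off the bound directly from the decomposition. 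For~(\ref{eq:p-epsilon-L2-bound}) the paper's proof is much shorter than yours: it differentiates $\langle\epsilon_t,\phi_t^{(\mathrm{ref})}\rangle$, observes that the dominant term $\langle\phi_t^{(\mathrm{ref})},U*(|\varphi_t|^2-|\phi_t^{(\mathrm{ref})}|^2)\phi_t^{(\mathrm{ref})}\rangle$ is real, and bounds the remainder $\langle\epsilon_t,U*(|\varphi_t|^2-|\phi_t^{(\mathrm{ref})}|^2)\phi_t^{(\mathrm{ref})}\rangle$ by $\Vert\epsilon_t\Vert_2\cdot C(t)$ directly. Your conservation-law computation for the real part is a nice alternative, though the localization argument you sketch for the imaginary part is considerably more involved than what the paper does.

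There is, however, a genuine gap in your plan for~(\ref{eq:m-epsilon-L2}). The Gr\"onwall on $\Vert\chi_r\epsilon_t\Vert_2^2$ does require a $\Lambda$-uniform bound on $\Vert\nabla\epsilon_t\Vert_2$, and you cannot extract this from the decomposition using only~(\ref{eq:kinetic-estimate}) and~(\ref{eq:nabla-tilde-phi-infty})--(\ref{eq:delta-tilde-phi}). Those references control $\nabla\epsilon_t^{\mathbb T}$ and $\nabla\widetilde\phi_t$, so the third summand $\nabla(\widetilde\phi_t\,\epsilon_t^{\mathbb T})$ is fine; but the paper only proves $L^2$ smallness of $\varphi_t-\widetilde\varphi_t$ and $\widetilde\phi_t-\phi_t^{(\mathrm{ref})}$ (bounds~(\ref{eq:varphi-tildevarphi-diff-L2bound}) and~(\ref{eq:phi-tildephi-L2})), never $H^1$ smallness. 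Individually $\Vert\nabla\varphi_t\Vert_2$ and $\Vert\nabla\phi_t^{(\mathrm{ref})}\Vert_2$ are of order $\Lambda^{1/6}$ or worse, and no cancellation mechanism between them is available from the estimates at hand. The fix is to abandon the Gr\"onwall here and use your own decomposition directly, exactly as the paper does:
\[
  \Vert\chi_r\epsilon_t\Vert_2\le\Vert\widetilde\phi_t\Vert_\infty\,\Vert\chi_r\epsilon_t^{\mathbb T}\Vert_2+\Vert\epsilon_t-\widetilde\phi_t\,\epsilon_t^{\mathbb T}\Vert_2\le C(t)\Lambda^{-1/3}+\Vert\varphi_t-\widetilde\varphi_t\Vert_2+\Vert\phi_t^{(\mathrm{ref})}-\widetilde\phi_t\Vert_2,
\]
which closes immediately via~(\ref{eq:epsilon-cutoff}),~(\ref{eq:varphi-tildevarphi-diff-L2bound}) and~(\ref{eq:phi-tildephi-L2}).
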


\begin{proof} Thanks to definition (\ref{eq:epsilon-def}) and the evolution
    equations (\ref{eq:varphi-evolution}) and (\ref{eq:phi-evolution}) we
  find
  \begin{align*} \partial_t\Vert \epsilon_t\Vert^2 & \leq \left\Vert
    U*\left( |\varphi_t|^2-|\phi_{t}^{(\mathrm{ref})}|^2
    \right)\phi_t^{(\mathrm{ref})} \right\Vert_2 \\ & \leq \Vert U\Vert_{1,2}
    \, \left\Vert |\varphi_t|^2-|\phi_{t}^{(\mathrm{ref})}|^2
    \right\Vert_{1\wedge2} \, \Vert\phi_{t}^{(\mathrm{ref})}\Vert_\infty.
  \end{align*} The triangle inequality implies 
  \begin{align*}
      \left\Vert
    |\varphi_t|^2-|\phi_{t}^{(\mathrm{ref})}|^2 \right\Vert_{1\wedge2} \leq
    \left\Vert |\varphi_t^{\mathbb T}|^2-1 \right\Vert_{2} + \left\Vert
    |\varphi_{t}|^2-|\phi_0^{(\mathrm{ref})}|^2-|\varphi^{\mathbb T}_t|^2+1
  \right\Vert_{1\wedge2} + \left\Vert |\phi_{t}^{(\mathrm{ref})}|^2 -
  |\phi_{0}^{(\mathrm{ref})}|^2 \right\Vert_2.
    \end{align*}
The terms on the right-hand
side can be estimated as follows:
\begin{itemize} 
  \item Corollary~\ref{cor:epsilon-varphi-torus}, definition of
      $\widetilde\varphi_t^{\mathbb T}$ in (\ref{eq:tildevarphi-T}),
      (\ref{eq:varphi-tildevarphi-diff-torus}), 
      definition of $\epsilon^{\mathbb T}_t$ in
      (\ref{eq:epsilon-evolution-torus}), and
      (\ref{eq:epsilon-L2-bound-torus}) imply
      \begin{align*} 
        \left\Vert |\varphi_t^{\mathbb T}|^2 - 1 \right\Vert_2 
        & \leq 
        \left\Vert |\varphi_t^{\mathbb T}|^2 + 1 \right\Vert_\infty \, 
        \left\Vert |\varphi_t^{\mathbb T}| - 1 \right\Vert_2 
        \\
        & \leq 
        C(t) \left( \left\Vert \varphi_t^{\mathbb T} 
                      - \widetilde\varphi^{\mathbb T}_t 
                    \right\Vert_2 
               + \left\Vert |\widetilde\varphi_t^{\mathbb T}| - 1 \right\Vert_2
             \right) 
        \\ 
        & \leq 
        C(t) \left( 1 +  \left\Vert \epsilon_t^{\mathbb T} \right\Vert_2 \right) 
        \\ 
        & \leq 
        C(t);
\end{align*} 
    \item The definition of $\widetilde\phi_t$  in (\ref{eq:tilde-phi-def}) together with the
        identify $|\phi^{(\mathrm{(ref)}}_0|=|\widetilde \phi_t|$ and the bounds in
    (\ref{eq:phidifferenzen}) and (\ref{eq:varphi-tildevarphi-diff-L2bound})
    ensure
    \begin{align} 
      \left\Vert
        |\varphi_{t}|^2-|\phi_0^{(\mathrm{ref})}|^2+1-|\varphi^{\mathbb T}_t|^2
      \right\Vert_{1\wedge2} 
      \leq 
      C(t)\Lambda^{-\frac{1}{6}}; 
      \label{eq:sum-L12-norm}
  \end{align} \item Recalling (\ref{eq:moduls-phi-phi0-L2-bound}) we know that
    \[ \left\Vert |\phi_{t}^{(\mathrm{ref})}|^2 -
      |\phi_{0}^{(\mathrm{ref})}|^2 \right\Vert_2 \leq C(t)\Lambda^{-\frac{1}{6}}.  \]
  \end{itemize} 
  In consequence, we find 
  \begin{align}
      \left\Vert
    |\varphi_t|^2-|\phi_{t}^{(\mathrm{ref})}|^2 \right\Vert_{1\wedge2} 
    \leq
    C(t)
    \label{eq:varphi-phi-2}
    \end{align}
  and therefore 
  \[ \partial_t\Vert \epsilon_t\Vert^2
  \leq C(t) \] which by Grönwall's Lemma proves the claim
  (\ref{eq:epsilon-L2-bound}) of this lemma.\\

  We continue by recalling Condition~\ref{def:initial-conditions} which
  ensures
  \begin{align*} 
    \left\Vert
    p_{t}^{(\mathrm{ref})}\epsilon_{t}\right\Vert _{2} =
    \frac{1}{\Lambda}\left\Vert \phi_{t}^{(\mathrm{ref})}\right\Vert_{2}
    \left|\left\langle \phi_{t}^{(\mathrm{ref})},\epsilon_{t}\right\rangle
    \right| \nonumber \leq \frac{1}{\Lambda^{1/2}} \left|\left\langle
    \phi_{t}^{(\mathrm{ref})},\epsilon_{t}\right\rangle \right|.
  \end{align*}
  In order to estimate the right-hand side we recall the definition of $\epsilon_t$
  in (\ref{eq:epsilon-def}), the evolution equations
  (\ref{eq:varphi-evolution}) as well as (\ref{eq:phi-evolution}), and regard
  \begin{align}
    i\partial_t\left\langle\epsilon_t,\phi_{t}^{(\mathrm{ref})}\right\rangle &
    = i\partial_t\left\langle e^{it\Vert U \Vert_1}\varphi_t,\phi_{t}^{(\mathrm{ref})}\right\rangle
    \nonumber
    \\ 
    & = 
    \left\langle e^{it\Vert U \Vert_1} \varphi_t,
    U*(|\varphi_t|^2-|\phi_{t}^{(\mathrm{ref})}|^2)\phi_{t}^{(\mathrm{ref})}
    \right\rangle 
    \nonumber 
    \\ 
    & = 
    \left\langle \phi_{t}^{(\mathrm{ref})},
    U*(|\varphi_t|^2-|\phi_{t}^{(\mathrm{ref})}|^2)\phi_{t}^{(\mathrm{ref})}
    \right\rangle \label{eq:imag-term} 
    \\
    & \quad + \left\langle
    \epsilon_t,U*(|\varphi_t|^2-|\phi_{t}^{(\mathrm{ref})}|^2)
    \phi_{t}^{(\mathrm{ref})} \right\rangle.  
    \nonumber
  \end{align} 
  \db Note that term
  (\ref{eq:imag-term}) is real. Hence, 
  the bounds (\ref{eq:epsilon-L2-bound}), (\ref{eq:varphi-phi-2}), 
  and Corollary~\ref{lem:phi-propagationa} imply \ed
  \begin{align*}
    \partial_t\left|\left\langle \epsilon_t,\phi_{t}^{(\mathrm{ref})}
    \right\rangle\right| & \leq \left|\left\langle
    \epsilon_t,U*(|\varphi_t|^2-|\phi_{t}^{(\mathrm{ref})}|^2)
    \phi_{t}^{(\mathrm{ref})} \right\rangle\right| 
    \\ & \leq
    \Vert\epsilon_t\Vert_2 \, \Vert U\Vert_{1,2} \,
    \Vert\,|\varphi_t|^2-|\phi_{t}^{(\mathrm{ref})}|^2\,\Vert_{1\wedge2}\,
    \Vert\phi_{t}^{(\mathrm{ref})}\Vert_\infty \\ & \leq C(t).
  \end{align*}
  An application of Grönwall's Lemma concludes the proof of claim
  (\ref{eq:p-epsilon-L2-bound}) of this lemma.\\

  Finally, with the definition of $\widetilde\phi_t$ and $\epsilon_t^{\mathbb
  T}$ in (\ref{eq:tilde-phi-def}) and (\ref{eq:epsilon-evolution-torus}),
  respectively,  we find the estimate 
  \[ 
    \Vert \chi_r \epsilon_t \Vert_2 \leq \Vert
    \chi_r \widetilde\phi_t \epsilon_t^{\mathbb T} \Vert_2 + \Vert \chi_r
    (\epsilon_t - \widetilde\phi_t \epsilon_t^{\mathbb T}) \Vert_2 \leq \Vert
    \widetilde\phi_t\Vert_\infty \, \Vert \chi_r \epsilon_t^{\mathbb T} \Vert_2 +
  \Vert \epsilon_t - \widetilde\phi_t \epsilon_t^{\mathbb T} \Vert_2.
  \] 
  Applying the definition of $\widetilde\varphi_t$ in (\ref{eq:varphi-tilde}) we
  estimate
  \[ 
    \Vert \epsilon_t - \widetilde\phi_t\epsilon^{\mathbb T}_t \Vert_2 
    \leq 
    \Vert \varphi_t - \widetilde\varphi_t \Vert_2 + \Vert
    \phi^{(\mathrm{ref})}_t - \widetilde\phi_t \Vert_2. 
  \]
  The estimate in (\ref{eq:epsilon-cutoff}) in Theorem~\ref{lem:torus} and the
  bounds (\ref{eq:varphi-tildevarphi-diff-L2bound}),
  (\ref{eq:phi-tildephi-L2}) imply the claim
  (\ref{eq:m-epsilon-L2}).
\end{proof}

\subsection{Proof of Theorem~\ref{thm:main-epsilon}}\label{sec:eta}

In this last section we provide the proof of the fourth main result:

\begin{proof}[Proof of Theorem~\ref{thm:main-epsilon}]
Since the Laplace
  operator is self-adjoint we find \db by means of the evolution equations
  (\ref{eq:epsilon-evolution}) and (\ref{eq:approximate-time-evolution}) that\ed
  \begin{align*} 
    \left\Vert {\epsilon}_{t} - \eta_t \right\Vert _{2} 
    \leq & 
    \left\Vert U*2\Re \left({\epsilon}_{t}^{*}\phi^{(\mathrm{ref})}_t -\eta_{t}^{*} \right) \right\Vert_2 
    \\ 
    & + 
    \left\Vert U*2\Re \left({\epsilon}_{t}^{*}\phi^{(\mathrm{ref})}_t \right)
    \left( \phi^{(\mathrm{ref})}_t - 1 \right) \right\Vert_2 
    \\ 
    & + 
    \left\Vert U*\left[|\phi^{(\mathrm{ref})}_t|^2-1\right]\epsilon_t \right\Vert_2 
    \\ 
    & + 
    \left\Vert U*|{\epsilon}_{t}|^{2} \phi^{(\mathrm{ref})}_t\right\Vert_2
    \\ 
    & + 
    \left\Vert \left[U*|{\epsilon}_{t}|^{2}\right]{\epsilon}_{t} \right\Vert_2 
    \\ 
    & + \left\Vert 
          \left[U*2\Re{\epsilon}_{t}^{*}\phi^{(\mathrm{ref})}_t\right]
          {\epsilon}_{t}
        \right\Vert_2
  \end{align*} 
  we begin with the most crucial estimate
  \begin{align*} \left\Vert
    U*\left[|\phi^{(\mathrm{ref})}_t|^2-1\right]\epsilon_t \right\Vert_2 &
    \leq 
    \left\Vert \phi^{(\mathrm{ref})}_t|+1 \right\Vert_\infty \,
    \left\Vert U*\left[|\phi^{(\mathrm{ref})}_t|-1\right]\epsilon_t
    \right\Vert_2 \\ & \leq C(t) \left( \left\Vert
    U*\left[|\phi^{(\mathrm{ref})}_0|-1\right]\epsilon_t \right\Vert_2 +
    \left\Vert
    U*\left[|\phi^{(\mathrm{ref})}_t|-|\phi^{(\mathrm{ref})}_0|\right]\epsilon_t
    \right\Vert_2 \right).
  \end{align*} 
  \db Using the bounds 
  (\ref{eq:epsilon-L2-bound}), given in Lemma~\ref{lem:propagation-epsilon}, and
  (\ref{eq:moduls-phi-phi0-L2-bound}) we note\ed
  \[ \left\Vert
    U*\left[|\phi^{(\mathrm{ref})}_t|-|\phi^{(\mathrm{ref})}_0|\right]\epsilon_t
    \right\Vert_2 
    \leq 
    \db \Vert U\Vert_2 \ed \, \left\Vert
    |\phi^{(\mathrm{ref})}_t|-|\phi^{(\mathrm{ref})}_0| \right\Vert_2 \, \Vert
    \epsilon_t \Vert_2 \leq C(t)\Lambda^{-\frac{1}{6}}. 
  \] 
  Furthermore, $|\phi^{(\mathrm{ref})}_0|-1\leq \chi_\Lambda$ as required in
  (\ref{eq:phi-cutoff-constraint-epsilon-decay}), and (\ref{eq:m-epsilon-L2})
  imply
  \begin{align*} \left\Vert
      U*\left[|\phi^{(\mathrm{ref})}_0|-1\right]\epsilon_t \right\Vert_2 &
      \leq \left\Vert U*\chi_\Lambda\, \epsilon_t \right\Vert_2 
      \\ 
      & \leq 
      \left\Vert \int dy \, U(\cdot-y) \chi_\Lambda(\cdot) \epsilon_t(\cdot)
      \right\Vert_2
      + 
      \left\Vert \int dy \, U(\cdot-y)
      (\chi_\Lambda(y)-\chi_\Lambda(\cdot)) \epsilon_t(\cdot) \right\Vert_2 
      \\ &
      \leq \Vert U \Vert_1 \, \Vert \chi_\Lambda\epsilon_t \Vert_2 +
      CD\Lambda^{-\frac{1}{3}} \Vert U \Vert_1 \, \Vert \epsilon_t \Vert_2 \\ & \leq
      C(t)(\Lambda^{-\frac{1}{3}} + \Vert\epsilon_t\Vert^2),
    \end{align*} 
    where we used again
    (\ref{eq:Um-decay}) and that $U$ is supported in a ball of radius $D\geq 0$.
    Using Corollary~\ref{lem:phi-propagationa} we collect the following
    estimates:
    \begin{itemize} 
    \item For $\Lambda$ large enough one finds
          \begin{eqnarray*} \left\Vert
            U*2\Re\left({\epsilon}_{t}^{*}\phi^{(\mathrm{ref})}_t-\eta_{t}^{*}\right)
            \right\Vert_{2} & = & \left\Vert \int dy\, U(y)2\Re \left(
            {\epsilon}_{t}^{*}(\cdot-y)\phi^{(\mathrm{ref})}_t(\cdot-y) -
            \eta_{t}^{*}(\cdot-y) \right) \right\Vert _{2} \\ & \leq & 2\int
            dy\,|U(y)|\, \left\Vert
            {\epsilon}_{t}^{*}(\cdot-y)\phi^{(\mathrm{ref})}_t(\cdot-y)
            -\eta_{t}^{*}(\cdot-y) \right\Vert _{2} \\ & \leq & 2
            \left\Vert U\right\Vert _{1} \left( \left\Vert {\epsilon}_{t} -
            \eta_t \right\Vert _{2} +\left\Vert
          (1-{\phi^{(\mathrm{ref})}_t}^*){\epsilon}_{t}\right\Vert _{2}
        \right) \\ 
        & \leq & C \left\Vert {\epsilon}_{t}-\eta_{t}\right\Vert_2 +
        C(t)\Lambda^{-1/6},
  \end{eqnarray*} 
  where thanks to the ingredients:
    \begin{itemize}
        \item 
          $
            \widetilde\phi_{t} :=
            \exp\left(-it U*\left(|\phi_{0}^{(\mathrm{ref})}|^2-1\right)\right)
            \phi_{0}^{(\mathrm{ref})},
          $
          as defined in (\ref{eq:tilde-phi-def});
        \item $\left\Vert \phi^{(\mathrm{ref})}_t - \widetilde\phi_t
            \right\Vert_2 \leq C(t)\Lambda^{-\frac{1}{6}}$ from line
            (\ref{eq:phi-tildephi-L2});
        \item $\Vert \epsilon_t\Vert_\infty \leq C(t)$ from (\ref{eq:varphi-Linfty-bound});
        \item
          $\Vert \phi_0 \Vert_\infty \leq C$, as required in
          Condition~\ref{def:initial-conditions};
        \item $\Vert \epsilon_t \Vert_2\leq C(t)$ and $\Vert \chi_r\epsilon_t \Vert_2  \leq C(t)\Lambda^{-\frac{1}{3}}$
            for $1/4\leq r<1$ as proven in Lemma~\ref{lem:propagation-epsilon};
        \item Since $U$ is supported in a ball of radius $D\geq 0$ and due to
            (\ref{eq:phi-cutoff-constraint-epsilon-decay}) in
          Condition~\ref{def:initial-conditions} one has $U*\left(|\phi_{0}^{(\mathrm{ref})}|^2-1\right)(x)=0$
            for $x\in B_{1/2\Lambda^{1/3}-2D}$;
        \item Consequently, for sufficiently large $\Lambda$ one has
            $(1-\widetilde\phi^*_t)(1-\chi_r)(x)=0$ for $r=1/4$;
    \end{itemize}
  we used
  \begin{align*}
    \left\Vert (1-{\phi^{(\mathrm{ref})}_t}^*){\epsilon}_{t}\right\Vert_{2}
    & \leq 
    \left\Vert (1-{\widetilde\phi}_t^*){\epsilon}_{t}\right\Vert_{2} + 
    \left\Vert
    ({\phi^{(\mathrm{ref})}_t}^*-{\widetilde\phi}_t^*){\epsilon}_{t}\right\Vert_{2}
    \\
    & \leq \left\Vert (1-{\widetilde\phi}_t^*)(1-\chi_{1/4}){\epsilon}_{t}\right\Vert_{2} +
 \left\Vert (1-{\widetilde\phi}_t^*)\chi_{1/4}{\epsilon}_{t}\right\Vert_{2}
 +C(t)\Lambda^{-1/6} \\
 & \leq 0  
 +2C(t)\Lambda^{-1/6}.
  \end{align*}

  \item 
      \begin{align*}
      \left\Vert U*2\Re \left({\epsilon}_{t}^{*}\phi^{(\mathrm{ref})}_t \right)
        \left( \phi^{(\mathrm{ref})}_t - 1 \right) 
      \right\Vert_2 
      & \leq 
      \left\Vert U*2\Re \left({(1-\chi_{1/4})\epsilon}_{t}^{*}\phi^{(\mathrm{ref})}_t \right)
        \left( \widetilde\phi_t - 1 \right) 
      \right\Vert_2
      \\
      & \qquad +
      \left\Vert U*2\Re \left({\chi_{1/4}\epsilon}_{t}^{*}\phi^{(\mathrm{ref})}_t \right)
        \left( \widetilde\phi_t - 1 \right) 
      \right\Vert_2
      \\
      & \qquad +
      \left\Vert U*2\Re \left({\epsilon}_{t}^{*}\phi^{(\mathrm{ref})}_t \right)
        \left( \phi^{(\mathrm{ref})}_t - \widetilde\phi_t \right) 
      \right\Vert_2
      \\
      & \leq 
      \left\Vert U*2\Re \left((1-\chi_{1/4}){\epsilon}_{t}^{*}\phi^{(\mathrm{ref})}_t \right)
        \left( \widetilde\phi_t - 1 \right) 
      \right\Vert_2
      \\
      & \qquad 
      +2 \Vert U \Vert_1 \Vert \chi_{1/4} \epsilon_t \Vert_2 \, \Vert
      \phi^{(\mathrm{ref})}_t \Vert_\infty \left\Vert
      \widetilde\phi_t - 1 \right\Vert_\infty
      \\
      & \qquad 
      +2 \Vert U \Vert_1 \Vert \epsilon_t \Vert_\infty \, \Vert
      \phi^{(\mathrm{ref})}_t \Vert_\infty \left\Vert
      \phi^{(\mathrm{ref})}_t - \widetilde\phi_t \right\Vert_2 
      \\
      &\leq
      0+2C(t) \Lambda^{-1/6},
    \end{align*}
    where in addition to the ingredients for the previous term we have used:
    \begin{itemize}
        \item $\Vert
          \phi_t^{(\mathrm{ref})}\Vert_\infty\leq C(t)$ as proven in
          Corollary~\ref{lem:phi-propagationa};
        \item $\operatorname{supp}U*2\Re
            \left((1-\chi_{1/4}){\epsilon}_{t}^{*}\phi^{(\mathrm{ref})}_t \right)\subset
          B_{1/4\Lambda^{1/3}+2D}$;
        \item 
          Similarly as
          above
         one has
            $(\widetilde\phi_t-1)(1-\chi_r)(x)=0$ for $r=1/4$;
        and sufficiently large $\Lambda$.
    \end{itemize}
  \item
      \begin{eqnarray*} \left\Vert
        U*|{\epsilon}_{t}|^{2}\phi^{(\mathrm{ref})}_t\right\Vert _{2} & = &
        \left\Vert \phi^{(\mathrm{ref})}_t\right\Vert_\infty \, \left\Vert
        \int dy\, U(\cdot-y)|{\epsilon}_{t}(y)|^{2}\right\Vert _{2} \leq
        C(t)\int dy\,|{\epsilon}_{t}(y)|^{2}\,\left\Vert U(\cdot-y)\right\Vert
        _{2}\\ & \leq & C(t)\left\Vert U\right\Vert _{2}\left\Vert
        {\epsilon}_{t}\right\Vert _{2}^{2}\leq C\left\Vert
        {\epsilon}_{t}\right\Vert _{2}^{2};
      \end{eqnarray*}

\item \[ \left\Vert
    \left[U*|{\epsilon}_{t}|^{2}\right]{\epsilon}_{t}\right\Vert
    _{2}=\left\Vert \int dy\,
    U(\cdot-y)|{\epsilon}_{t}(y)|^{2}{\epsilon}_{t}\right\Vert
    _{2}\leq\left\Vert U\right\Vert _{2}\left\Vert {\epsilon}_{t}\right\Vert
    _{2}^{3}\leq C\left\Vert {\epsilon}_{t}\right\Vert _{2}^{3}; \]

\item \[ \left\Vert
    \left[U*2\Re{\epsilon}_{t}^{*}\phi^{(\mathrm{ref})}_t\right]{\epsilon}_{t}\right\Vert
    _{2} \leq  \left\Vert|\phi^{(\mathrm{ref})}_t\right\Vert_\infty \left\Vert
    U\right\Vert _{2}\left\Vert {\epsilon}_{t}\right\Vert _{2}^{2}\leq
  C(t)\left\Vert {\epsilon}_{t}\right\Vert _{2}^{2}.  \]

\end{itemize} Hence, we have shown \[ \partial_t\left\Vert
  \eta_{t}-{\epsilon}_{t}\right\Vert _{2} \leq C\left\Vert
  \eta_{t}-{\epsilon}_{t}\right\Vert _{2} + C(t)\Lambda^{-\frac{1}{6}}
  +C(t)\left( \left\Vert
  {\epsilon}_{t}\right\Vert _{2}^{2}+\left\Vert {\epsilon}_{t}\right\Vert
  _{2}^{3}\right) \] which together with Grönwall's Lemma proves the claim.

\end{proof}

\section{Appendix}

In several steps we have used the convenient computation formulas
(\ref{eq:hat-multipilication})-(\ref{eq:hat-Q-commutation-1bis}) concerning the counting operators that 
were established in in \cite[Lemma 1]{pickl2010a} and are repeated here for
easier reference:
\begin{lem} Given the definitions (\ref{eq:projectors})-(\ref{eq:def-w}), the following relations are true:
\begin{enumerate} \item
      \begin{equation}\tag{\ref{eq:hat-multipilication}}
        \widehat{v^{\varphi}}\widehat{w^{\varphi}}=\widehat{\left(vw\right)^{\varphi}}=\widehat{w^{\varphi}}\widehat{v^{\varphi}}
      \end{equation}

\item
  \begin{equation}
    \left[\widehat{w^{\varphi}},p_{k}^{\varphi}\right]=\left[\widehat{w^{\varphi}},q_{k}^{\varphi}\right]=0\tag{\ref{eq:hat-pq-commutation}}
  \end{equation}

\item \[ \left[\widehat{w^{\varphi}},P_{k}^{\varphi}\right]=0 \]

\item For $n(k)=\sqrt{\frac{k}{N}}$ we have
  \begin{equation}
    \left(\widehat{n^{\varphi}}\right)^{2}=\frac{1}{N}\sum_{k=1}^{N}q_{k}^{\varphi}\tag{\ref{eq:old-q1}}
  \end{equation}

\item For $\Psi\in\left(L^{2}\right)^{\odot N}$ we have that
  \begin{align}
    \left\Vert \widehat{w^{\varphi}}q_{1}^{\varphi}\Psi\right\Vert_2  & = 
    \left\Vert \widehat{w^{\varphi}}\widehat{n^{\varphi}}\Psi\right\Vert_2
    \tag{\ref{eq:old-q1-2}}
    \\ 
    \left\Vert
    \widehat{w^{\varphi}}q_{1}^{\varphi}q_{2}^{\varphi}\Psi\right\Vert_2  & \leq 
    \sqrt{\frac{N}{N-1}}\left\Vert
    \widehat{w^{\varphi}}\left(\widehat{n^{\varphi}}\right)^{2}\Psi\right\Vert_2
    \tag{\ref{eq:old-q1-3}}
  \end{align}

\item For any  function $Y:\in L^\infty(\mathbb{R}^3)$ and $Z:\in L^\infty(\mathbb{R}^6)$ and
  \[ A_{0}^{\varphi}=p_{1}^{\varphi},\qquad
    A_{1}^{\varphi}=q_{1}^{\varphi},\qquad
    B_{0}^{\varphi}=p_{1}^{\varphi}p_{2}^{\varphi},\qquad
    B_{1}^{\varphi}=p_{1}^{\varphi}q_{2}^{\varphi},\qquad
    B_{2}^{\varphi}=q_{1}^{\varphi}q_{2}^{\varphi} \] we have
    \begin{equation}   \widehat{w^{\varphi}}A_{j}^{\varphi}Y(x_{1})A_{l}^{\varphi}=A_{j}^{\varphi}Y(x_{1})A_{l}^{\varphi}\widehat{w_{j-l}^{\varphi}}\quad \text{with}\,\, j,l=0,1,\tag{\ref{eq:hat-Q-commutation-1}}
\end{equation}
and
\begin{equation}\widehat{w^{\varphi}}B_{j}^{\varphi}Z(x_{1},x_{2})B_{l}^{\varphi}=B_{j}^{\varphi}Z(x_{1},x_{2})B_{l}^{\varphi}\widehat{w_{j-l}^{\varphi}}\quad \text{with}\,\,j,l=0,1,2.\tag{\ref{eq:hat-Q-commutation-1bis}}
    \end{equation}
    \end{enumerate}
\end{lem}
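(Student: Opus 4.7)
The plan is to recognize the whole family of weighted counting operators as functions of a single self-adjoint ``number of bad particles'' operator $M^\varphi := \sum_{k=0}^N k\, P_k^\varphi$. By construction the $\{P_k^\varphi\}_{k=0}^N$ are mutually orthogonal projectors summing to the identity on $(L^2)^{\odot N}$, so they form the spectral decomposition of $M^\varphi$, and the definitions in (\ref{eq:def-w}) read $\widehat{w^\varphi}=w(M^\varphi)$ and $\widehat{w_d^\varphi}=w(M^\varphi+d)$ under the functional calculus. From this viewpoint (\ref{eq:hat-multipilication}) and $[\widehat{w^\varphi},P_k^\varphi]=0$ are immediate consequences of the orthogonality $P_j^\varphi P_k^\varphi=\delta_{jk}P_k^\varphi$. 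For (\ref{eq:hat-pq-commutation}), the plan is to partition each $P_k^\varphi=\sum_{|S|=k}\prod_{i\in S}q_i^\varphi\prod_{j\notin S}p_j^\varphi$ according to whether the distinguished index on which $p_k^\varphi$ (or $q_k^\varphi$) acts belongs to $S$; left- and right-multiplication by $p_k^\varphi$ (respectively $q_k^\varphi$) select the same sub-sum, so $[p_k^\varphi,P_k^\varphi]=[q_k^\varphi,P_k^\varphi]=0$, and the corresponding commutator with $\widehat{w^\varphi}$ vanishes by linearity.

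For (\ref{eq:old-q1}), the operator $\sum_{k=1}^N q_k^\varphi$ simply counts the number of bad slots and therefore coincides with $M^\varphi$ as an operator on $(L^2)^{\odot N}$; dividing by $N$ and invoking the functional calculus with $n(k)=\sqrt{k/N}$ yields $(\widehat{n^\varphi})^2=M^\varphi/N$. The $L^2$ identities (\ref{eq:old-q1-2}) and (\ref{eq:old-q1-3}) then follow from the bosonic symmetry of $\Psi$: since every $\widehat{w^\varphi}$ is permutation-invariant, $\widehat{w^\varphi}\Psi$ remains totally symmetric, so both $\|q_k^\varphi\widehat{w^\varphi}\Psi\|_2$ and $\|q_{k_1}^\varphi q_{k_2}^\varphi\widehat{w^\varphi}\Psi\|_2$ are independent of the chosen indices. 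Averaging over $k$ together with $\sum_k q_k^\varphi=M^\varphi$ and the identification $M^\varphi=N(\widehat{n^\varphi})^2$ gives (\ref{eq:old-q1-2}) exactly; averaging over ordered pairs $k_1\neq k_2$ and using $\sum_{k_1\neq k_2} q_{k_1}^\varphi q_{k_2}^\varphi=M^\varphi(M^\varphi-1)\leq (M^\varphi)^2=N^2(\widehat{n^\varphi})^4$ produces (\ref{eq:old-q1-3}), with the factor $\sqrt{N/(N-1)}$ stemming from the averaging weight $1/N(N-1)$.

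The substantive part of the lemma is the pull-through (\ref{eq:hat-Q-commutation-1})-(\ref{eq:hat-Q-commutation-1bis}), which is where the shift $w\mapsto w_{j-l}$ enters. The plan is to split the counting operator according to the particles on which $Y$ (or $Z$) acts: set $M^\varphi_{\{1\}}:=\sum_{k=2}^N q_k^\varphi$, so that $M^\varphi=M^\varphi_{\{1\}}+q_1^\varphi$. Then $M^\varphi_{\{1\}}$ commutes with $Y(x_1)$ and with each of $p_1^\varphi, q_1^\varphi$, while $A_l^\varphi$ is an eigenvector of $q_1^\varphi$ with eigenvalue $l\in\{0,1\}$, so $M^\varphi A_l^\varphi=A_l^\varphi(M^\varphi_{\{1\}}+l)$. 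A direct computation then gives
\[\widehat{w^\varphi}A_j^\varphi Y(x_1)A_l^\varphi=w(M^\varphi_{\{1\}}+j)\,A_j^\varphi Y(x_1)A_l^\varphi=A_j^\varphi Y(x_1)A_l^\varphi\,w(M^\varphi_{\{1\}}+j),\]
and rewriting $M^\varphi_{\{1\}}+j=M^\varphi+(j-l)$ on the range of $A_l^\varphi$ identifies the trailing factor with $\widehat{w_{j-l}^\varphi}$, proving (\ref{eq:hat-Q-commutation-1}). The two-particle formula (\ref{eq:hat-Q-commutation-1bis}) is obtained verbatim from the decomposition $M^\varphi=M^\varphi_{\{1,2\}}+q_1^\varphi+q_2^\varphi$ with $M^\varphi_{\{1,2\}}:=\sum_{k=3}^N q_k^\varphi$, using that $Z(x_1,x_2)$ commutes with $M^\varphi_{\{1,2\}}$ and that $B_l^\varphi$ is an eigenvector of $q_1^\varphi+q_2^\varphi$ with eigenvalue $l\in\{0,1,2\}$. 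No step involves analysis; the only real care needed is the bookkeeping that translates the eigenvalue shift $l\mapsto j$ produced by commuting $Y$ or $Z$ past the projectors into the index shift $d=j-l$ of the weight function.
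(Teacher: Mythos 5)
Your proposal is correct in substance and relies on the same ingredients as the paper's appendix proof (mutual orthogonality of the $P_k^\varphi$, commutation of $p_k^\varphi,q_k^\varphi$ with every $P_j^\varphi$, bosonic symmetry to average over particle indices, and the counting shift produced by commuting $Y$ or $Z$ past the projectors), but it packages them differently: where the paper works combinatorially with the $P_k^\varphi$ themselves --- proving $P_{k}^{\varphi}A_{j}^{\varphi}Y(x_{1})A_{l}^{\varphi}=A_{j}^{\varphi}Y(x_{1})A_{l}^{\varphi}P_{k+l-j}^{\varphi}$ by pulling tensor factors through and then re-indexing the sum defining $\widehat{w^{\varphi}}$ --- you encode everything in the number operator $M^{\varphi}=\sum_k kP_k^{\varphi}=\sum_k q_k^{\varphi}$ and obtain the pull-through from the decomposition $M^{\varphi}=M^{\varphi}_{\{1\}}+q_1^{\varphi}$ (resp.\ $M^{\varphi}_{\{1,2\}}+q_1^{\varphi}+q_2^{\varphi}$) together with functional calculus on a commuting family. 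That is a clean, somewhat more conceptual route; it also makes (\ref{eq:old-q1}) and the averaging arguments for (\ref{eq:old-q1-2})--(\ref{eq:old-q1-3}) transparent, and your operator inequality $M^{\varphi}(M^{\varphi}-1)\leq (M^{\varphi})^{2}$ is exactly the paper's step of adding the nonnegative diagonal terms before identifying $N^2(\widehat{n^{\varphi}})^{4}$.

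One point needs an explicit extra line: the identification $\widehat{w_{d}^{\varphi}}=w(M^{\varphi}+d)$ is not literally true for $d\neq 0$. By (\ref{eq:def-w}) and the convention (\ref{def-Pk-2}), $\widehat{w_{d}^{\varphi}}=\sum_{k=-d}^{N-d}w(k+d)P_{k}^{\varphi}$, whereas $w(M^{\varphi}+d)=\sum_{k=0}^{N}w(k+d)P_{k}^{\varphi}$; for general $w$ these differ by boundary terms (take $w\equiv 1$, $d=1$: $w(M^{\varphi}+1)$ is the identity while $\widehat{w_{1}^{\varphi}}=\Id-P_{N}^{\varphi}$). Your computation therefore yields $\widehat{w^{\varphi}}A_{j}^{\varphi}Y(x_{1})A_{l}^{\varphi}=A_{j}^{\varphi}Y(x_{1})A_{l}^{\varphi}\,w\!\left(M^{\varphi}+(j-l)\right)$, and to land on the stated right-hand side (\ref{eq:hat-Q-commutation-1})--(\ref{eq:hat-Q-commutation-1bis}) you must add the observation that $A_{l}^{\varphi}P_{k}^{\varphi}=0$ (resp.\ $B_{l}^{\varphi}P_{k}^{\varphi}=0$) precisely for the boundary values of $k$ where the two operators differ --- e.g.\ $p_{1}^{\varphi}P_{N}^{\varphi}=0$, $p_{1}^{\varphi}p_{2}^{\varphi}P_{N-1}^{\varphi}=0$, $q_{1}^{\varphi}q_{2}^{\varphi}P_{1}^{\varphi}=0$ --- so that the discrepancy is annihilated. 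This is a one-line patch (the paper's own re-indexing step silently uses the same vanishing), but as written your opening claim about $\widehat{w_{d}^{\varphi}}$ is the one assertion that is false as an operator identity, so it should be corrected rather than left as the foundation of the argument.
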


    \begin{proof}\mbox{}
\begin{enumerate}

\item Since $p_k^{\varphi}$ is orthogonal to $q_k^{\varphi}$ for any $1\leq k
    \leq N $ it follows, that the $P_k^{\varphi}$,  $1\leq k \leq N $ (see
    (\ref{eq:projector})) are pairwise orthogonal projectors. Hence, by (\ref{eq:def-w})
$$\widehat{v^{\varphi}}\widehat{w^{\varphi}}=\sum_{k,j=0}^{N}v(k)P_{k}^{\varphi}w(j)P_{j}^{\varphi}=\sum_{k=0}^{N}v(k)w(k)P_{k}^{\varphi}=\widehat{\left(vw\right)^{\varphi}}\;.$$
Similarly one can show $\widehat{\left(vw\right)^{\varphi}}=\widehat{w^{\varphi}}\widehat{v^{\varphi}}$.

\item $p_{k}^{\varphi}$ commutes with $p_{j}^{\varphi}$ and $q_{j}^{\varphi}$
    for any $j,k$. It follows that $p_{k}^{\varphi}$ commutes with any
    $P_{j}^{\varphi}$ since the latter is a product of $p$'s and $q$'s. In view
    of (\ref{eq:def-w}) we observe that 
$p_{k}^{\varphi}$ commutes with any weighted counting operators
$\widehat{w^{\varphi}}$. A analogous argument can be made for $q_{k}^{\varphi}$.

\item Observing that $P_k^{\varphi}$ is given as a symmetric product of $p$'s
    and $q$'s (see (\ref{eq:projector})) the claim follows from
    (\ref{eq:hat-pq-commutation}).

\item

Note that $1=\prod_{k=1}^N(p_{k}^{\varphi}+q_{k}^{\varphi})$. Expanding this
product and sorting the summands according to the number of $q$-factors it
follows that 
$1=\sum_{k=0}^N P_k^{\varphi}$. Hence, the claim (\ref{eq:old-q1}) follows from
\begin{align*}
N^{-1}\sum_{k=1}^Nq_k^{\varphi}&=N^{-1}\sum_{k=1}^Nq_k^{\varphi}\sum_{j=0}^N
P_j^{\varphi}= N^{-1}\sum_{j=0}^N\sum_{k=1}^Nq_k^{\varphi}
P_j^{\varphi}=N^{-1}\sum_{j=0}^Nj P_j^{\varphi}=\left(\widehat{n^{\varphi}}\right)^{2},
\end{align*}
where in the last step we have used (\ref{eq:hat-multipilication}).

\item

Using symmetry we get
\begin{align*}
\left\Vert \widehat{w^{\varphi}}q_{1}^{\varphi}\Psi\right\Vert^2_2 
&=
\left\langle\Psi,q_{1}^{\varphi} \left(\widehat{w^{\varphi}}\right)^2q_{1}^{\varphi}\Psi\right\rangle
=N^{-1}\sum_{k=1}^N\left\langle\Psi,q_{k}^{\varphi} \left(\widehat{w^{\varphi}}\right)^2q_{k}^{\varphi}\Psi\right\rangle\;.
\end{align*}
Using (\ref{eq:hat-pq-commutation}), then (\ref{eq:hat-multipilication}) and then (\ref{eq:old-q1}) the latter equals
\begin{align*}
\left\langle\Psi,\left(N^{-1}\sum_{k=1}^Nq_{k}^{\varphi}\right) \left(\widehat{w^{\varphi}}\right)^2\Psi\right\rangle
&=\left\langle\Psi,\left(\widehat{n^{\varphi}}\right)^{2}\left(\widehat{w^{\varphi}}\right)^2\Psi\right\rangle
=\left\Vert \widehat{w^{\varphi}}\widehat{n^{\varphi}}\Psi\right\Vert^2_2 
\end{align*}
and (\ref{eq:old-q1-2}) follows.

In a similar way we get
\begin{align*}
\left\Vert
    \widehat{w^{\varphi}}q_{1}^{\varphi}q_{2}^{\varphi}\Psi\right\Vert_2^2 
    &=\left\langle\Psi,q_{1}^{\varphi}q_{2}^{\varphi}\left(\widehat{w^{\varphi}}\right)^2q_{1}^{\varphi}q_{2}^{\varphi}\Psi\right\rangle
   \\ &=\frac{1}{N(N-1)}\sum_{j\neq k}\left\langle\Psi,q_{j}^{\varphi}q_{k}^{\varphi}\left(\widehat{w^{\varphi}}\right)^2q_{j}^{\varphi}q_{k}^{\varphi}\Psi\right\rangle\;.
\end{align*}
Using that
$\left\langle\Psi,q_{k}^{\varphi}q_{k}^{\varphi}\left(\widehat{w^{\varphi}}\right)^2q_{k}^{\varphi}q_{k}^{\varphi}\Psi\right\rangle
$ is for any $k$ quadratic, and thus positive, we find
\begin{align*}
\left\Vert
    \widehat{w^{\varphi}}q_{1}^{\varphi}q_{2}^{\varphi}\Psi\right\Vert_2^2 
    &\leq \frac{1}{N(N-1)}\sum_{j,
     k=1}^N\left\langle\Psi,q_{j}^{\varphi}q_{k}^{\varphi}\left(\widehat{w^{\varphi}}\right)^2q_{j}^{\varphi}q_{k}^{\varphi}\Psi\right\rangle
     \\&=
     \frac{N^2}{N(N-1)}\left\langle\Psi,\left(N^{-1}\sum_{j=1}^Nq_{j}^{\varphi}\right)\left(N^{-1}\sum_{
     k=1}^Nq_{k}^{\varphi}\right)\left(\widehat{w^{\varphi}}\right)^2\Psi\right\rangle
     \\&=\frac{N}{N-1}\left\langle\Psi,\left(\widehat{n^{\varphi}}\right)^{4}\left(\widehat{w^{\varphi}}\right)^2\Psi\right\rangle
\\&=\frac{N}{N-1}\left\Vert
\widehat{w^{\varphi}}\left(\widehat{n^{\varphi}}\right)^{2}\Psi\right\Vert_2^2
\;.
\end{align*}
\item

% REMARK: I chose the variable N instead of 1, since then the next formulas are nicer!

The proof is very similar for all the combinations of $A$ and $B$ operators. Therefore, we only demonstrate one case and start with the following
computation. Denoting the tensor product by $\otimes$, we find
\begin{align*}
    p_1^\varphi Y(x_1) q_1^\varphi \, P_k^\varphi 
    & = 
    p_1^\varphi Y(x_1) q_1^\varphi \left[ (q^\varphi)^{\odot k}\odot
    (p^\varphi)^{\odot (N-k)}\right] \\
    & = 
    p_1^\varphi Y(x_1) \left[ q_1^\varphi \otimes (q^\varphi)^{\odot (k-1)}\odot
    (p^\varphi)^{\odot (N-k)} \right] \\
    & = 
    p_1^\varphi \, \left[ 1 \otimes (q^\varphi)^{\odot (k-1)}\odot
(p^\varphi)^{\odot (N-k)}\right]  Y(x_1) q_1^\varphi \\
    & = 
    \left[ p_1^\varphi \otimes (q^\varphi)^{\odot (k-1)}\odot
    (p^\varphi)^{\odot (N-k)} \right] Y(x_1) q_1^\varphi \\
    & = 
    \left[ 1\otimes (q^\varphi)^{\odot (k-1)}\odot
    (p^\varphi)^{\odot (N-k)} \right] p_1^\varphi Y(x_1) q_1^\varphi \\
    & = 
    \left[ (q^\varphi)^{\odot (k-1)}\odot
    (p^\varphi)^{\odot (N-k+1)} \right] p_1^\varphi Y(x_1) q_1^\varphi \\
    &= P^\varphi_{k-1} \, p_1^\varphi Y(x_1) q_1^\varphi.
\end{align*}

%First note, that by (\ref{eq:projector})
% $$p_{N}^{\varphi} P^\varphi_{k}=p_{N}^{\varphi}\left((q_{\cdot}^\varphi)^{\odot k}\odot (p_{\cdot}^\varphi)^{\odot(N-k)}\right)=p_{N}^{\varphi}\otimes\left((q_{\cdot}^\varphi)^{\odot k}\odot (p_{\cdot}^\varphi)^{\odot(N-k-1)}\right) $$
%and
% $$q_{N}^{\varphi} P^\varphi_{k}=q_{N}^{\varphi}\otimes\left((q_{\cdot}^\varphi)^{\odot (k-1)}\odot (p_{\cdot}^\varphi)^{\odot(N-k)}\right)\;.$$
%Hence for any function $Y\in L^\infty(\mathbb{R}^3)$ and $0\leq k<N$
%\begin{align*}p_{N}^{\varphi} P^\varphi_{k}Y(x_N)q_{N}^{\varphi}&=p_{N}^{\varphi}\otimes\left((q_{\cdot}^\varphi)^{\odot k}\odot (p_{\cdot}^\varphi)^{\odot(N-k-1)}\right)Y(x_N)q_{N}^{\varphi}
%\\&=p_{N}^{\varphi}Y(x_N)q_{N}^{\varphi}\otimes\left((q_{\cdot}^\varphi)^{\odot k}\odot (p_{\cdot}^\varphi)^{\odot(N-k-1)}\right)
%\\&=
%p_{N}^{\varphi}Y(x_N)q_{N}^{\varphi}P_{k+1}^\varphi\;.
%\end{align*}
%Recall that $P_{k+1}^\varphi=0$ if $k< 0$ or $k>N$. Since for $k=N$ the left hand side is zero ($p_{N}^{\varphi}(q_{\cdot}^\varphi)^{\odot k}=0$) the formula also holds for $k=N$. 
%Using symmetry we get by similar arguments for the other choices of $A$ and $B$ that
Similar arguments can be applied for the various combinations of $A$ and $B$
operators to show
 \begin{equation}
   P^\varphi_{k}A_{j}^{\varphi}Y(x_{1})A_{l}^{\varphi}=A_{j}^{\varphi}Y(x_{1})A_{l}^{\varphi}P^\varphi_{k+l-j}\quad \text{with}\,\, j,l=0,1,
\end{equation}
and
\begin{equation}
P^\varphi_{k}B_{j}^{\varphi}Z(x_{1},x_{2})B_{l}^{\varphi}=B_{j}^{\varphi}Z(x_{1},x_{2})B_{l}^{\varphi}P^\varphi_{k+l-j}\quad \text{with}\,\,j,l=0,1,2.
    \end{equation}
Using these identities together with the convention $P^\varphi_k=0$ for
$k\not\in\{0,1,\ldots,N\}$, see (\ref{def-Pk-2}), and the
definiton (\ref{eq:def-w}), we get
\begin{align*}
   \widehat{w^{\varphi}}A_{j}^{\varphi}Y(x_{1})A_{l}^{\varphi}
   &=\sum_{k=-\infty}^{\infty}w(k)P_{k}^{\varphi}A_{j}^{\varphi}Y(x_{1})A_{l}^{\varphi}
   \\&=\sum_{k=-\infty}^{\infty}w(k)A_{j}^{\varphi}Y(x_{1})A_{l}^{\varphi}P_{k+l-j}^{\varphi}
\end{align*}
Substituting the index of the sum by $m=k+l-j$ we get
\begin{align*}
   \widehat{w^{\varphi}}A_{j}^{\varphi}Y(x_{1})A_{l}^{\varphi}
  &=\sum_{m=-\infty}^{\infty}w(m+j-l)A_{j}^{\varphi}Y(x_{1})A_{l}^{\varphi}P_{m}^{\varphi}
  \\&=A_{j}^{\varphi}Y(x_{1})A_{l}^{\varphi}\sum_{m=-\infty}^{\infty}w(m+j-l)P_{m}^{\varphi}
   \\&=A_{j}^{\varphi}Y(x_{1})A_{l}^{\varphi}\widehat{w_{j-l}^{\varphi}}\;.
\end{align*}
In the same way we can prove the second formula:
\begin{align*}
\widehat{w^{\varphi}}B_{j}^{\varphi}Z(x_{1},x_{2})B_{l}^{\varphi}
&=\sum_{k=-\infty}^{\infty}w(k)P_{k}^{\varphi}B_{j}^{\varphi}Z(x_{1},x_{2})B_{l}^{\varphi}
\\&=\sum_{k=-\infty}^{\infty}w(k)B_{j}^{\varphi}Z(x_{1},x_{2})B_{l}^{\varphi}P_{k+l-j}^{\varphi}
\\&=\sum_{m=-\infty}^{\infty}w(m+j-l)B_{j}^{\varphi}Z(x_{1},x_{2})B_{l}^{\varphi}P_{m}^{\varphi}
\\&
=B_{j}^{\varphi}Z(x_{1},x_{2})B_{l}^{\varphi}\widehat{w_{j-l}^{\varphi}}\;.
\end{align*}

\end{enumerate}
\end{proof}

\newpage

\vskip.5cm

\noindent \emph{Dirk - Andr\'e Deckert}\\
Department of Mathematics\\
University of California Davis\\
One Shields Avenue, Davis, California 95616, USA\\
\texttt{deckert@math.ucdavis.edu}

\vskip.5cm

\noindent \emph{J\"urg Fr\"ohlich}\\
Theoretische Physik, ETH Zürich\\
CH-8093 Zürich, Switzerland\\
\texttt{juerg@phys.ethz.ch}

\vskip.5cm

\noindent \emph{Peter Pickl}\\
Mathematisches Institut der LMU M\"unchen\\
Theresienstra\ss e 39, 80333 M\"unchen, Germany\\
\texttt{pickl@math.lmu.de}

\vskip.5cm

\noindent \emph{Alessandro Pizzo}\footnote{On leave of absence from the Mathematics Department at University of California Davis}\\
Dipartimento di Mathematica\\
Universit\`a di Roma Tor Vergata\\
Via della Ricerca Scientifica 1, Roma, 00133, Italy\\
\texttt{pizzo@mat.uniroma2.it}

\end{document}